\pgfplotsset{compat=1.18}
\DeclareMathOperator*{\argmax}{arg\,max}
\DeclareMathOperator*{\argmin}{arg\,min}
\newtheorem{theorem}{Theorem}
\newtheorem{corollary}{Corollary}
\newtheorem{definition}{Definition}
\newtheorem{lemma}{Lemma}
\newtheorem{proposition}{Proposition}
\newtheorem{remark}{Remark}
\providecommand{\keywords}[1]
{
  \small	
  \textbf{\textit{Keywords---}} #1
}
\providecommand{\JEL}[1]
{
  \small	
  \textbf{\textit{JEL---}} #1
}
\begin{document}
\newcommand{\setsize}[1]{{\left|#1\right|}}

\newcommand{\floor}[1]{
{\lfloor {#1} \rfloor}
}
\newcommand{\bigfloor}[1]{
{\left\lfloor {#1} \right\rfloor}
}


%
%
\newcommand{\given}{\,\middle|\,}
\newcommand{\wgiven}{\,\mid\,}

\newcommand{\prob}[2][]{\text{\bf Pr}\ifthenelse{\not\equal{}{#1}}{_{#1}}{}\!\left[{\def\givenn{\middle|}#2}\right]}
\newcommand{\expect}[2][]{\text{\bf E}\ifthenelse{\not\equal{}{#1}}{_{#1}}{}\!\left[{\def\givenn{\middle|}#2}\right]}

\newcommand{\sprob}[2][]{\text{\bf Pr}\ifthenelse{\not\equal{}{#1}}{_{#1}}{}[#2]}
\newcommand{\sexpect}[2][]{\text{\bf E}\ifthenelse{\not\equal{}{#1}}{_{#1}}{}[#2]}

\newcommand{\lbr}[1]{\left\{#1\right\}}
\newcommand{\rbr}[1]{\left(#1\right)}
\newcommand{\cbr}[1]{\left[#1\right]}

\newcommand{\suchthat}{\,:\,}

\newcommand{\partialx}[2][]{{\tfrac{\partial #1}{\partial #2}}}
\newcommand{\nicepartialx}[2][]{{\nicefrac{\partial #1}{\partial #2}}}
\newcommand{\dd}{{\,\mathrm d}}
\newcommand{\ddx}[2][]{{\tfrac{\dd #1}{\dd #2}}}
\newcommand{\niceddx}[2][]{{\nicefrac{\dd #1}{\dd #2}}}
\newcommand{\grad}{\nabla}

\newcommand{\symdiff}{\triangle}
\newcommand{\abs}[1]{\left|#1\right|}
\newcommand{\indicate}[1]{{\bf 1}\left[#1\right]}
\newcommand{\reals}{\mathbb{R}}
\newcommand{\posreals}{\reals_+}
\newcommand{\supp}{\text{supp}}

\newcommand{\inftynorm}[1]{\left\lVert#1\right\rVert_{\infty}}

\newcommand{\alloc}{\alpha}
\newcommand{\posterior}{\mu}
\newcommand{\allPosterior}{\Delta \states}
\newcommand{\regret}{R}
\newcommand{\generalRegret}{\regret_{\gamma}}
\newcommand{\dist}{F}
\newcommand{\prior}{\rho}
\newcommand{\state}{\omega}
\newcommand{\states}{\Omega}

\newcommand{\primed}{^\dagger}
\newcommand{\dprimed}{^\ddagger}

\newcommand{\allInfo}{\mathcal{P}}
\newcommand{\info}{\pi}
\newcommand{\actions}{A}
\newcommand{\action}{a}
\newcommand{\quota}{q}
\newcommand{\util}{u}
\newcommand{\availableInfo}{\Pi}
\newcommand{\opt}{u^*}
\newcommand{\noInfo}{\info_0}
\newcommand{\leastInfo}{\underline{\info}}

\newcommand{\Lerror}{R^l_{\gamma}}
\newcommand{\Rerror}{R^r_{\gamma}}
\newcommand{\indiff}{m^*}
\newcommand{\priorMean}{m_{\prior}}

\newcommand{\feasibleStrategy}{\Psi}
\newcommand{\optU}{U^*}
\newcommand{\firstBest}{U^*}

\renewcommand{\epsilon}{\varepsilon}
\newcommand{\fbInfo}{\info^*}

\title{Managing Persuasion Robustly:\\
The Optimality of Quota Rules\thanks{Dirk Bergemann acknowledges financial support from NSF SES 2049754 and ONR MURI N00014-24-1-2742 and Yingkai Li from the Sloan Research Fellowship FG-2019-12378 and NUS Start-up Grant. The authors thank the co-editor, Marco Ottaviani, an associate editor, and two referees for many productive questions and suggestions.
We thank Jerry Anunrojwong, Omar Besbes, Songzi Du, Yingni Guo, Johannes H\"{o}rner, Emir Kamenica, Xiao Lin, and Eran Shmaya for helpful comments and suggestions. The authors thank Marek Bojko and David Wambach for research support.}}
\author{Dirk Bergemann\thanks{Department of Economics, Yale University. Email: \texttt{dirk.bergemann@yale.edu}}\and 
Tan Gan\thanks{Department of Management, London School of Economics. Email: \texttt{T.Gan2@lse.ac.uk}} \and
Yingkai Li\thanks{Department of Economics, National University of Singapore.
Email: \texttt{yk.li@nus.edu.sg}}}

\date{\today}
\maketitle

\begin{abstract}

 We study a sender-receiver game in which the receiver can commit to a decision rule before the sender determines the information policy. We ask how the receiver should commit, in advance, to a rule that maps the information of the sender into decisions—when the receiver knows neither the sender's true preferences nor the full range of information the sender could supply. To handle this dual uncertainty, we adopt a unified robust framework that nests max-min utility, min-max regret, and min-max competitive ratio as special cases. Across all criteria, the same answer emerges: the optimal rule is always a quota rule.
 
\end{abstract}

\keywords{communication, commitment, partial alignment, quota rules, min-max regret, max-min utility, competitive ratio}

\JEL{D47, D82, D83}

\newpage 

\section{Introduction}
\label{sec:intro}

\subsection{Motivation}
\label{subsec:motiv}

We study settings in which a decision-maker (the \emph{receiver}) must act on the basis of evidence produced by an expert (the \emph{sender}) whose preferences and informational capabilities are imperfectly known. In such settings, a recurring practical response is to commit \emph{ex ante} to a fixed marginal distribution over actions---a \emph{quota}---that the receiver will respect no matter what evidence the sender supplies.
Quota rules are prevalent in many organizational and institutional settings. In corporate downsizing events, CEOs often implement policies that require the termination of a fixed percentage of employees across divisions, regardless of division-specific performance evaluations. In college admissions, universities frequently target predetermined enrollment shares across departments or applicant categories, regardless of how the admissions committee aggregates the dossiers. Similarly, regulatory agencies often employ quota-based mechanisms in resource allocation, where predetermined allocation shares are maintained between different sectors or regions.
The theoretical appeal of quota rules has been studied in a number of strategic communication environments under various assumptions of commitment power, such as the cheap talk model, the delegation model, and the Bayesian persuasion model.

This paper explores a distinct set of commitment conditions in which both the sender and receiver possess commitment power: the receiver can commit to decision rules before observing the sender's information policy, while the sender can commit to information structures that generate signals. 
This \emph{dual commitment} framework captures environments where institutions establish predetermined response protocols before information providers design their reporting strategies. 
Within this framework, a \emph{quota} is one of the simplest decision rules: it fixes ex ante the marginal distribution over the receiver's actions, leaving the receiver only the freedom to assign actions to signals subject to that distribution. 
We explain the optimality of quota rules in a particular setting where a decision-maker (receiver) faces fundamental uncertainty about the sender's preferences and capabilities.

Specifically, before the sender adopts an information structure, the receiver can commit to a decision rule that depends on both the chosen information structure and the realized signal. In the above example of corporate downsizing, the CEO (receiver) could request each division manager (sender) to devise an evaluation plan for the employees (information structure) in the context of downsizing the company. The CEO can commit to a policy for terminating employees based on their evaluations, and given this policy, the manager chooses an evaluation plan that maximizes their division's payoff. 

Here, the uncertain state is the division's underlying productivity level — how much value the division generates as a function of how many people it retains. An action is the number (or fraction) of employees the CEO terminates in that division. The CEO's payoff is the firm's value: for each productivity level there is an efficient employment size. The manager shares the firm-value objective --- he is, after all, evaluated on his division's performance --- but he also attaches some value to the size of his division per se, independently of its productivity. This size preference need not favor a larger team. An empire-building manager may value the status, influence, and career capital that come with headcount; another manager may prefer a leaner unit that is easier to manage or that improves his per-capita performance metrics; yet another may have an ideal team size in mind, so that his payoff is non-monotone in headcount. This is precisely the partial alignment structure: the manager internalizes firm value but adds a state-independent term in the number of employees terminated --- a term whose direction and shape the CEO does not know, and which our analysis allows to be arbitrary.

The information structure is the performance-evaluation plan the manager designs and commits to: a rating system that generates signals about which employees — and how many — are genuinely productive. Crucially, the CEO does not know the manager's preferred division size, nor the set of evaluation plans the manager can credibly implement. A quota rule is the policy to terminate 10\% of this division regardless of the evaluations: the CEO fixes the marginal distribution over actions ex ante, and lets the manager's ratings determine only which employees are cut, not how many. Because every evaluation plan now yields the same headcount outcome, the manager's size preference --- whatever its direction or shape --- is rendered moot at the margin: he is left indifferent over it, so his residual incentive is to design the evaluation plan that best serves firm value, which is exactly what the CEO wants.

In these and many other environments, the receiver faces uncertainty regarding the sender's utility function and the set of available information structures. In particular, the receiver is uncertain about the degree of alignment between the receiver and the sender. The receiver only knows that the sender's utility can be represented as the sum of the receiver's utility function and a state-independent utility function that represents the sender's bias. We refer to this as \emph{partial alignment}. Taking the example of downsizing a firm, the CEO is uncertain about the manager's preferred division size and the set of possible evaluation plans the manager can implement. In the presence of such uncertainty, we adopt the notion of regret, which captures the difference between the optimal payoff with a given (and known) utility function \emph{and} a given (and known) set of feasible information structures and the actual payoff achieved by a specific decision rule without such knowledge. Specifically, we consider the robust objective in which the receiver commits to a decision rule that minimizes the worst-case regret over all possible partially aligned utility functions and sets of available information structures for the sender. Our analysis also extends to a more general notion of regret that incorporates weighted differences between the optimal payoff without uncertainty and the actual payoff, encompassing classic max-min utility and the min-max competitive ratio as special cases.

Before the robust analysis, we begin with the classic Bayesian benchmark where the only uncertainty is about the state of the world. Thus, the receiver knows the utility function of the sender and the set of information structures available to the sender. We show that, without uncertainty, the receiver can achieve his \emph{second-best payoff}, i.e., the payoff achieved by the optimal information structure given the set of \emph{feasible information structures} and the optimal action after any signal realization. 

In the Bayesian setting, we offer two distinct classes of mechanisms that can attain the second-best payoff for the receiver: (i) a punitive decision rule, known as the ``shoot-the-agent'' that punishes deviation by the sender (Proposition \ref{lemma_shoottheagent}) and (ii) quota rules (Proposition \ref{prop:bayesian}). A quota rule is associated with a quota, which is a distribution over the action space. For any chosen information structure, the quota rule specifies a decision rule that maximizes the receiver's expected payoff subject to the constraint that the marginal distribution of actions must coincide with the quota.
The ``shoot-the-agent'' rule is shown to be highly sensitive to the receiver's exact knowledge of the sender's feasible information structure and utilities. In particular, in the absence of this knowledge, the ``shoot-the-agent'' rule can incur substantial losses for the receiver. 
However, quota rules guarantee the receiver a second-best payoff even when the receiver lacks precise knowledge of the sender's utility function.

To understand the trade-offs for the receiver in robust settings, we decompose the regret into a \emph{decision loss} and an \emph{agency loss}. 
The decision loss captures the loss stemming from sub-optimal actions committed by the receiver, while the agency loss captures the loss stemming from the sender's strategic choice of a sub-optimal information structure.

In the robust setting where the receiver is uncertain about the preferences and information structures, a naive decision rule the receiver could commit to is to take a \emph{myopic optimal} action for every signal realization. This is what the receiver would do if he had no commitment power. This decision rule completely eliminates the decision loss. However, this decision rule is independent of the sender's choice of information structure and is extremely vulnerable to the sender's strategic persuasion. In fact, we show that the worst-case regret for the myopic optimal decision rule is so large that it equals the entire information value of the fully revealing information structure. In this setting, a plausible conjecture for the optimal decision rule is that it should maintain a balance between decision loss and agency loss based on the receiver's utility function and the prior belief.

Perhaps surprisingly, we show that it is optimal for the receiver to adopt another extreme decision rule that completely avoids agency loss: quota rules (Theorem \ref{thm:quota_optimal}).  The quota rule sacrifices some efficiency to create an aligned incentive between the receiver and the sender. Specifically, with respect to the state-independent component of the sender's utility function, quota rules lead to the same marginal distribution over actions, and thus generate indifference relative to this component. As the sender is partially aligned with the receiver and takes the receiver's utility into consideration, this will lead her to choose the receiver-preferred information structure from the available set for the given quota rule. Note that quota rules also have natural interpretations in applications. In the example of the downsizing of a firm, quota rules correspond to the policies where the CEO commits to dismiss, for example, ten percent of the employees in each division regardless of the managers' evaluation plans.

We further characterize the optimal quota rules in the special case of binary actions. In this case, for any given quota, the information structure that maximizes the regret corresponds to a binary monotone partition of the state space (\cref{lem:worst_info_quota}). This partition allows us to divide the analysis of regret into two subcases: left-biased error and right-biased error. These subcases capture the worst-case regret, where the quota assigns more ex ante probability to the action $0$ and $1$, respectively, compared to the optimal decision given the structure of the information. We show that the left-biased error decreases as the quota increases, while the right-biased error increases with the quota. Consequently, the optimal quota rule strikes a balance between these two errors, achieving an interior solution that equalizes them.

\subsection{Related Work}
\label{sub:related}

This paper contributes to the literature on strategic communication by offering both the sender and the receiver some level of commitment power. Previous studies have typically considered forms of communication in which at least one of the players has no ability to commit, e.g., cheap talk, delegation, and information design. A notable exception is \citet{LuoGao24} who analyze a many-sender persuasion game with similar assumptions on the commitment power of both parties. There, a single receiver commits to a posterior-dependent action profile before the senders choose their information policies. In their setting, senders always have access to all information structures, and their preferences are common knowledge. With this, they are able to establish a revelation-like principle, where senders fully reveal the state, and off-path deviations to noisy signals are deterred by the worst feasible punishments.  In contrast, we emphasize the uncertainty regarding the objective \emph{and} the feasible information policies of the sender. This uncertainty prevents the receiver from forcing the sender to fully reveal the state by threatening to choose the least preferred action: such a decision rule would perform poorly when the sender does not have access to fully revealing signals. 

In a sender-receiver communication setting, \citet{CHAKRABORTY200770} point out that quota rules can help elicit information when the sender has state-independent utility in multi-issue cheap-talk problems. In \citet{Alexander14}, a principal delegates multiple decisions to a privately informed agent. The principal is only uncertain about the agent's preferences and the max-min optimal mechanisms may take the simple forms of ranking mechanisms, budgets, or sequential quotas. \cite{FRANKEL2016396} considers a dynamic delegation model in which the agent's costs for each action are state-independent and privately observed. He finds that the menus of discounted quotas serve as optimal contracts. Finally, in \citet{LinLiu22}, quota mechanisms are associated with the credibility of the sender's persuasion scheme. Our paper also contributes to the study of quota mechanisms, which have appeared in other incentive and communication settings. For example, in a mechanism design setting, \citet{JacksonSonnenschein07} show that the utility costs associated with incentive constraints become negligible when each individual decision problem can be linked with a large number of independent copies of itself. 

Our model differs from these contributions to strategic communication in that the sender has commitment power and quota rules are adopted to address different strategic incentives. In the delegation or cheap talk models discussed above, the quota rule ensures that the sender is truthful during the information transformation process, preventing her from sending messages that differ from the receiver's expectations or conjectures. However, in our model, the sender can commit to the information transmission process by assumption, so a quota rule is not required for this purpose. In fact, as shown in Section \ref{sec:bayesian}, if the set of feasible information structures and the sender's utility are known, the receiver can achieve the second-best payoff without relying on quota rules. The use of quota rules aims to incentivize the sender to commit to the best feasible information structure.

Abstracting away from the communication interpretation, one can view the sender's choice of information structure as the choice of a technology that changes the production frontier of the receiver. The receiver's design problem is then how to commit to sub-optimal production decisions to discipline the sender's strategic choice of technologies. With this perspective, our model shares a structure similar to a contract design with observable actions as in \citet{laux2001limited,zhao2008all,chen2012all} 
and \citet{guo2022regret} where the receiver is the principal and the sender is the agent. For example,  \citet{guo2022regret} consider the project choice problem where the proposal of the agent can be viewed as an abstract observable action. Similarly to our paper, they also consider the robust setting where the set of actions available to the agent (sender) is unknown to the principal (receiver). In this sense, a related paper is \citet{yoder2022designing} that considers a model of contracting with an agent who can generate costly verifiable information via transfers. In his setting, the sender is indifferent about the receiver's actions and is only asking for monetary compensation for the cost of the information structure. He also assumes that the sender has access to all information structures (with a type-dependent cost).

In the robust setting, the performance measure we take is min-max regret. 
The objective of min-max regret has been micro-founded in \citet{stoye2011axioms} 
and has been widely adopted in applications such as monopoly pricing \citep{bergemann2011robust}, 
monopoly regulation \citep{guo2019robust}, 
contest design \citep{bevia2022contests},
project choice \citep{guo2022regret},
and bandit learning \citep{slivkins2019introduction}.
The objective of minimizing generalized worst-case regret was recently proposed in \citet{anunrojwong2023robust}, which includes the objectives of max-min utility \citep[see][]{bergemann2011robust,carroll2017robustness,brooks2021optimal}, min-max competitive ratio \citep[see][]{hartline2015non,allouah2020prior,hartline2023scale}, and, of course, min-max regret as special cases. 

Our paper also relates to various extensions of the persuasion model, including robust persuasion \citep[e.g.,][]{hu2021robust,dworczak2022preparing,kosterina2022persuasion}, the role of commitment in persuasion problems \citep[e.g.,][]{lipnowski2022persuasion} and ambiguous persuasion \citep[e.g.,][]{beauchene2019ambiguous,cheng2024persuasion}.
The models and objectives in those papers differ substantially from those considered in our work.

\section{Model}
\label{sec:model}
\subsection{Payoff Environment}
We consider a game between a receiver (he) and a sender (she). There are a finite number of actions $a \in A$ and a finite number of states $\state \in \states$. 
The receiver has a general state-dependent utility function $u(a,\omega)$:
\[u: A \times\states\to \mathbb{R}.\]
The sender is \emph{partially aligned} with the receiver with the following utility function: 
$$v(a,\state) = u(a,\state) + v_B(a),$$ 
where her objective is augmented by an additive and a state-independent bias $v_B$:
\[v_B: A \to \mathbb{R}.\] 
The state-independent bias $v_B(a)$ can be arbitrarily large (or small) compared to the state-dependent utility $u(a,\omega)$. Similarly, the partial alignment of the sender with the receiver can be weaker (or stronger).  
The sender and the receiver share a  \emph{common prior} $\rho$ over the states $\state \in \states$: 
\begin{align}
\rho\in \Delta\states.    
\end{align}
 A \emph{posterior belief} is denoted by:
 \begin{align}
 \mu\in \Delta\states.     
 \end{align}
  
The sender can commit to an information structure $\info$ that provides informative signals to the receiver.
It turns out to be without loss of generality to model the information structure using the \emph{belief-based} approach: we represent $\info$ as a distribution over the posterior beliefs $\mu\in\Delta\states$ it induces,
$$\pi \in \Delta (\Delta (\states)),$$
which satisfies Bayesian consistency. 
This reduction is possible because (i) the prior $\rho$ is fixed and commonly known, so every (Blackwell) experiment over signals can be relabeled by its induced posterior, and (ii) the receiver's decision rule may condition on the experiment $\pi$ that the sender chose, and her utility only depends on the action and the state, so no information beyond the posterior is needed for incentives or for the receiver's payoff. The formal min-max design problem is defined in equation~\eqref{minmax_program} below; the belief-based reduction will continue to be without loss of generality there as well.
Formally, the set $\Sigma$ of all information structures that are Bayes consistent with the common prior $\rho$ is given by:
\begin{align}
    \Sigma = \lbr{ \info \in \Delta( \Delta\states ) \given  \int_{\allPosterior}  \mu  \dd \info(\mu) = \rho }.
\end{align}
Importantly, the sender may only have access to a subset of all Bayes consistent information structures.
That is, there exists a compact set $
\availableInfo\subseteq \Sigma $
such that the sender can choose the information structure $\info$ only if $\info\in\availableInfo$.\footnote{Compactness is taken with respect to the weak topology on $\Sigma=\Delta(\Delta\states)$, which coincides with the topology induced by the Wasserstein metric introduced in \cref{sub:opt_among_quota}.} We refer to the set $\availableInfo$ as the set of \emph{feasible} information structures. 

In contrast to much of the literature on Bayesian persuasion \citep{kamenica2011bayesian},
we allow the receiver to have commitment power over the decision rules before receiving a signal from the sender. 
That is, the receiver can \emph{commit ex ante} to a \emph{decision rule $\alpha$} that maps the information structure chosen by the sender and the receiver's posterior belief after the signal realization to a distribution over actions: 
\begin{gather}
   \alloc: \Sigma \times  \Delta\states  \to \Delta A. 
\end{gather}
Throughout, decision rules are assumed to be Borel measurable, so the expected payoffs below are well-defined.
Let $\alloc(a|\info,\mu)$ be the probability that the action $a$ will be taken according to the distribution $\alloc(\info,\mu)$. Thus, the receiver can commit to a decision rule that depends on the information structure $\pi$ adopted by the sender and the posterior belief $\mu$. By contrast, a myopic optimal action by the receiver would respond to the realized posterior belief $\mu$ only.
 
We extend the definition of utility functions $u$ and $v$ from $ A\times \states$ to lotteries $\Delta A \times \Delta \states$ in the canonical way by defining the corresponding expected utility functions:
\begin{gather*}
    u(\alpha, \posterior) = \sum_a \alpha (a)  \cdot \expect[\state\sim\posterior]{u(a,\state)}, 
    \quad  v(\alpha,\posterior) = \sum_a \alpha (a)  \cdot \expect[\state\sim\posterior]{v(a,\state)}, \quad \forall \posterior\in\Delta\states, \alpha \in \Delta A.
\end{gather*}

\begin{figure}[t]
\centering
\begin{tikzpicture}
\node (node1) at (1,0.5) {receiver commits to $\alpha$};
\node (node2) at (4,-0.5) {sender commits to $\pi \in \Pi$};
\node (node3) at (8,0.5) {realization of state $\omega$, posterior $\posterior$};
\node (node4) at (12,-0.5) {action $a\sim \alloc(\info,\posterior)$;};
\node (node4) at (12,-1) {payoffs realized};

\draw[->] (-1,0) -- (13,0);

\foreach \x in {0,4,8,12}
    \draw[fill=black] (\x,0) circle (2pt);

\end{tikzpicture}
\vspace{-20pt}
\,\\
\,\\
\caption{The timeline of the model. }
\label{fig:timeline}
\end{figure}

The timing of the game is illustrated in \Cref{fig:timeline} and is formally described as follows:
\begin{enumerate}
    \item The receiver publicly commits to a decision rule $\alloc$.
    \item The sender publicly commits to an information structure $\info \in \availableInfo$ after observing the decision rule~$\alloc$.
    \item The state $\state$ is realized according to prior $\prior$, 
    and a signal leading to posterior belief $\posterior$ is sent to the receiver according to information structure $\info$.
    \item The action $\action$ is chosen according to the distribution $\alloc(\info,\posterior)$. 
The sender receives a payoff of $v(a,\omega)$
 and the receiver receives a payoff of $u(a, \state)$.
\end{enumerate}

The sender is assumed to be a Bayesian decision-maker, and given a decision rule $\alloc$ by the receiver,
she chooses an information structure $\info(\alloc,v,\availableInfo)$ that maximizes her expected utility:
\begin{gather}
    \info(\alloc,v,\availableInfo) \in \argmax_{\info \in \availableInfo} \int_{\Delta \states} v(\alloc(\info,\mu),\posterior) \dd \info(\posterior).
\end{gather}

The choice of the decision rule $\alloc$ by the receiver may lead to indifference in the sender's choice of an information structure. The nature of the sender's tie-breaking rule will be largely immaterial in the current setting, as we show in \cref{thm:quota_optimal} and  discuss in \cref{remark}. 
The receiver's expected payoff by committing to decision rule $\alloc$ is 
\begin{align}
U(\alloc,v,\availableInfo) &=  \int_{\Delta \states}  \util(\alloc(\info,\mu),\posterior) \dd \info(\posterior),
\quad\text{where}\quad
\info = \info(\alloc,v,\availableInfo).
\end{align}
The optimal payoff of the receiver in this sender-receiver game, given the sender's utility $v$ and the set of feasible information structures $\availableInfo$, is: 
\begin{align}
\optU(v,\availableInfo) = \sup_{\alloc} U(\alloc,v,\availableInfo).
\end{align}

\subsection{Robustness and Regret}
We consider an informationally robust setting where the utility function $v$ and the set of feasible information structures $\availableInfo$ of the sender are not known to the receiver. This contrasts with the literature, where the receiver is perfectly informed about the utility function $v$ and the set $\availableInfo$ of feasible information structures of the sender. We thus refer to $(v,\availableInfo)$ as the sender's private type. \\
Therefore, the receiver cannot design decision rules based on the true utility $v$ or the set of feasible information structures~$\availableInfo$.
For a given decision rule $\alloc$, the receiver's regret due to his ignorance of utility $v$ and set~$\availableInfo$
is therefore the difference: 
\begin{align}
\regret(\alloc, v, \availableInfo) = \optU(v,\availableInfo) - U(\alloc, v,\availableInfo). \label{def:regret} 
\end{align}
The regret is the difference between the optimal payoff if the receiver knew the objective function $v$ and the set of feasible information structures $\Pi$, and what he achieves with a given decision rule $\alpha$ against the sender with $(v,\Pi)$. \\
In this robust setting, the receiver's goal is to find a decision rule $\alpha$ that minimizes the 
\emph{worst-case regret} against the set $V$ of all possible utility functions and the set $\allInfo$ of all possible sets of feasible information structures $\Pi$ of the sender. 
\\
We allow for all possible state-independent biases:
\begin{gather*}
    V = \{ v | v = u+ v_B  \text{ s.t. } v_B(a): A\to \mathbb{R} \}. 
\end{gather*}
The set $\allInfo $ of all possible sets of feasible information structures $\Pi$ has the following minimal structure:
there exists a least informative information structure $\underline{\info}\in\Sigma$ such that 
\begin{align*}
\Sigma_{\underline{\info}} &= \{\info\in\Sigma \mid \info \text{ Blackwell dominates } \underline{\info}\};\\
\allInfo &= \{  \availableInfo \subseteq \Sigma_{\underline{\info}} \mid \availableInfo \text{ is compact}, ~ \underline{\info} \in \availableInfo\}.
\end{align*}
Thus, it is common knowledge among the sender and receiver that the sender can access some evidence, the information content of which is captured by the information structure $\underline{\info}$. However, the sender may have access to additional information beyond $\underline{\info}$. Thus, any information structure $\pi \in \Sigma_{\underline{\info}}$ with $\pi \neq \underline{\info}$ can be interpreted as an expansion of $\underline{\info}$ that contains additional information; see Definition 5 in \citet{bm16}. A specific example of a least informative structure $\underline{\info}$ is the \emph{zero information} structure where $\underline{\info}$ is the Dirac distribution on the common prior $\rho$, and we denote it by $\pi_0$. 

Formally, in the robust setting, the receiver solves the following min-max problem:
\begin{equation}\label{minmax_program}
 \inf_{\alloc: \Sigma \times  \Delta\states  \to \Delta A} \sup_{v\in V, \availableInfo \in \allInfo} \regret(\alloc, v, \availableInfo).
\end{equation}

\paragraph{Generalized Regret}
For the purpose of delivering intuition, we shall emphasize regret minimization in this paper. However, we establish
our results for a general class of decision problems, referred to as \emph{$\gamma$-generalized regret} in the recent work by \citet{anunrojwong2023robust}. 
Formally, for any $\gamma\in[0,1)$, the $\gamma$-generalized regret of the receiver is the weighted difference:
\begin{align}
\generalRegret(\alloc, v, \availableInfo) = \gamma\cdot\optU(v,\availableInfo) - (1-\gamma)\cdot U(\alloc, v,\availableInfo). \label{def:gammaRegret}
\end{align}
The objective of regret minimization is equivalent to a special case of $\gamma$-generalized regret when $\gamma = \sfrac{1}{2}$.  
By varying the parameter $\gamma$, we can recover other robust frameworks commonly studied in the literature. For example, when $\gamma=0$, the objective corresponds to the max-min framework. Moreover, there exists $\hat{\gamma} \in (0,\sfrac{1}{2})$, such that $\hat{\gamma}$-generalized regret coincides with maximizing the worst-case competitive ratio, as discussed in \citet{anunrojwong2023robust}.\footnote{The formal statement and proof in \citet{anunrojwong2023robust} are couched in their auction-design environment, where $\gamma$-generalized regret indexes a one-parameter family connecting max-min revenue, min-max regret, and competitive ratio. Their reduction relies only on the affine structure of the objective in $(\optU,U)$, which is also present in our model. The same one-parameter family therefore admits the same interpretation here, although our primitives, decision rules, and worst-case set differ from theirs.}

\paragraph{General Decision Rules}
For the purpose of the exposition, we assume that the receiver cannot 
elicit information about $(v,\availableInfo)$ from the sender 
and can only commit to decision rules as a function of the chosen information structure and the realized posterior belief. One might wonder whether the communication on $(v,\availableInfo)$ can enable more flexible decision rules, leading to a Pareto improvement, which is the case in \citet{guo2022regret}.
In the extension (\cref{sec:extension}), we show that the restriction is without loss of generality in our setting. 
Even when the decision rule could depend on the sender's report about her primitives $(v,\availableInfo)$, the optimal mechanism remains the same. The optimal decision rule assigns a fixed quota for any reported $(v,\availableInfo)$.

\section{Managing Persuasion With a Known Sender}
\label{sec:bayesian}
To highlight the difference between our model and classic communication models such as cheap talk and delegation, 
we start with a Bayesian setting where the receiver knows the utility function $v$ and the set of available information structures $\Pi$ of the sender. 
We show that in this benchmark, the receiver can implement the second-best decision rule.

For any utility function $v$ and the set $\availableInfo$ of feasible information structures of the sender, 
a myopic choice of the receiver is to commit to the decision rule that maximizes the receiver's expected payoff for any posterior $\mu$ 
realized regardless of the sender's choice of the experiment:
\begin{gather}
    a^*(\mu) \in \argmax_{a\in A}  u(a,\mu).
\end{gather}
We often refer to the decision rule $a^*(\mu)$ as the \emph{myopic optimal} decision, since it only depends on the posterior belief $\mu$ and does not depend on the information structure $\info$. The resulting utility for the receiver is then given by:
\begin{gather}
\opt(\info) =  \int_{\Delta \states} u(a^*(\mu),\mu) \dd \info(\posterior).
\end{gather}
Ideally, if the sender is not strategic and chooses an information structure $\pi^*$ that maximizes the receiver's payoff, 
the receiver would receive his second-best payoff as
\begin{gather}
\optU(\availableInfo) = \max_{\info \in \availableInfo}~\opt(\info).
\end{gather}
We define the \emph{second-best information structure} for the receiver given the set of feasible information structures $\availableInfo$:
\begin{gather*}
\pi^*(\availableInfo) \in \argmax_{\info \in \availableInfo}~\opt(\info).    
\end{gather*}
 
 We may use $\pi^*$ for short and omit the dependence on  $\availableInfo$ in the notation when it is clear from the context. In the absence of restrictions on $\Pi$, i.e., when $\Pi=\Sigma$, the second-best information structure coincides with the first-best information structure, and it is given by the \emph{complete information} structure, which only includes posterior beliefs that assign a probability of one to any given state $\omega$. We denote the complete information structure by $\pi_1$.  

The receiver can carefully design decision rules that achieve the second-best equilibrium payoff when he knows both the utility $v$ and the set $\availableInfo$ of available information structures. We will present two classes of mechanisms that enable the receiver to achieve this.

We begin with a punitive decision rule, sometimes referred to as ``shoot-the-agent'' mechanism. Specifically, denote
\begin{gather*}
a_S \in \argmin_{a\in A} v_B(a)    
\end{gather*}
 as the action that minimizes the sender's utility from the state-independent bias $v_B(a)$. Now consider the following decision rule:
\begin{gather*}
  \alloc_S(\info,\mu) = 
  \begin{cases}
       a^*(\mu),  & \quad \text{ if } \info = \fbInfo; \\
       a_S,  & \quad \text{ if } \info \neq  \fbInfo.
  \end{cases}  
\end{gather*}
With the above decision rule, if the sender chooses any information structure $\info$ other than the second-best information structure $\fbInfo$, she will be punished with the least preferred action $a_S$ according to $v_B$.
\begin{proposition}[Punitive Decision Rule]
\label{lemma_shoottheagent}
\,\\
    Under the ``shoot-the-agent'' decision rule $ \alloc_S$, it is optimal for the sender to choose the second-best information structure $\fbInfo$:
\begin{align*}
U( \alloc_S,v,\availableInfo) = \optU(v,\availableInfo) =  \firstBest(\availableInfo).
\end{align*}
\end{proposition}
\begin{proof}[Proof of Proposition \ref{lemma_shoottheagent}]
Denote $v(\info)$ as the sender's expected payoff of choosing $\info$.
    \begin{align*}
        v(\fbInfo) &= \int_{\allPosterior} [\util(a^*(\mu),\posterior) + v_B(a^*(\mu))  ] \dd \fbInfo(\posterior) \\
         &= \int_{\allPosterior} v_B(a^*(\mu))   \dd \fbInfo(\posterior) + \max_{\info \in \availableInfo}~ \int_{\Delta \states} u(a^*(\mu),\mu) \dd \info(\posterior) \\
         & \geq   v_B(a_S) +   \int_{\Delta \states} u(a_S,\mu) \dd \info(\posterior) = v(\info), \qquad \forall \info \neq \fbInfo,
    \end{align*}
where the last inequality holds since $a_S$ minimizes the utility from the state-independent bias, and $u(a^*(\mu),\mu) \geq u(a_S,\mu)$ for all $\posterior$.
\end{proof}
The expected payoff of the receiver $U(\alloc_S,v,\availableInfo)$ under the ``shoot-the-agent'' rule $\alloc_S$ is equal to the maximal payoff $\optU(v,\availableInfo)$ of the receiver across all decision rules, which in turn is equal to the second-best expected payoff $\firstBest(\availableInfo)$ when the sender always chooses the receiver-optimal information structure given the feasible set $\Pi$ of information structures.
Proposition \ref{lemma_shoottheagent} illustrates how the receiver's commitment power alters the nature of the problem. 

Importantly, the ``shoot-the-agent'' mechanism has very strong informational requirements: the receiver must know both \emph{when to shoot}---which information structures are feasible and which deviations should be punished---and \emph{how to shoot}---which action $a_S$ is most punitive for the sender. If the sender cannot access the second-best information structure $\pi^*$ as the receiver conjectures, whatever the sender chooses will
trigger the punishment, the state-independent action $a_S$, leading to poor performance.

We now present a second class of mechanisms that also achieves the second-best payoff: \emph{quota rules} $\quota\in\Delta\actions$.

\begin{definition}[Quota Constraints]
\label{def:quota_constraint}
\,\\
A decision rule $\alloc$ satisfies the \emph{quota constraints} with quota $\quota \in \Delta \actions$ 
if, for any information structure~$\info\in\Sigma$, 
\begin{align}\label{eq:qc}
  \int_{\allPosterior}
\alloc(a|\pi, \posterior) \dd \info(\posterior) 
= \quota(a), ~\forall \action .
\end{align}
\end{definition}

Unlike ``shoot-the-agent'', the optimality of quota rules will be shown in the next section to be robust to uncertainty about both the sender's preferences $v$ and the feasible set $\availableInfo$ of information structures. 

Intuitively, a decision rule satisfies quota constraints with quota $\quota \in \Delta \actions$ if its induced marginal distribution over actions is consistent with the marginal distribution of $\quota \in \Delta \actions$. There are many decision rules $\alpha$
that satisfy the quota constraints for a given quota $q$. We denote the set of decision rules that satisfy the quota constraints $q$ by $A_q$. Among all of these, we focus in particular on the receiver-optimal decision rules, which we call \emph{quota rules}.

\begin{definition}[Quota Rules]
\label{def:quota}
\,\\
A decision rule $\alloc_{\quota}$ is a \emph{quota rule} with quota $\quota \in \Delta \actions$ 
if, for any information structure~$\info\in\Sigma$, 
\begin{align*}
\alloc_{\quota}(\info,\cdot) \in &\argmax_{\hat{\alloc}\in A_q} 
\int_{\allPosterior}
{\util(\hat{\alloc}(\info,\posterior),\posterior)} \dd \info(\posterior).
\end{align*}
\end{definition}
We write $\alpha_q(\pi,\cdot)$ rather than a generic $\alpha$ in the maximizer from the outset, so that the dependence of the optimizer on the chosen information structure $\pi$ is made explicit. For each $\info\in\Sigma$, $\alpha_q(\info,\cdot)$ is the receiver-optimal action distribution over the posterior beliefs induced by $\info$ subject to the quota constraint on the marginal distribution over actions.

First, we observe that the choice of the information structure by a sender with partially aligned preferences coincides with the receiver-optimal information structure.  The existence of an optimal decision rule for the above quota problem will be established in \cref{lem:LipschitzContinuity}.
\begin{lemma}[Sender-Optimal Information Structure under Quota Rules]
\label{trivial_lemma}
\,\\
    Under a quota rule $\alpha_q(\info,\mu)$, for any $v$ and $\availableInfo$, the sender always chooses the receiver-optimal information structure $\info \in \availableInfo$ s.t.
    \begin{gather*}
        \info \in \argmax_{\info \in \availableInfo} \int_{\allPosterior}
{\util(\alpha_q(\info,\mu),\posterior)} \dd \info(\posterior).
    \end{gather*}
\end{lemma}

\begin{proof}[Proof of Lemma \ref{trivial_lemma}]
    The optimization problem of the sender under quota rule $\alpha_q$ is
    \begin{align*}
         &\max_{\info \in \availableInfo} \int_{\allPosterior}
[\util(\alpha_q(\info,\mu),\posterior) + v_B(\alpha_q(\info,\mu))  ] \dd \info(\posterior) \\
= &\max_{\info \in \availableInfo} \int_{\allPosterior}
\util(\alpha_q(\info,\mu),\posterior)  \dd \info(\posterior)  +  \int_{\allPosterior}
v_B(\alpha_q(\info,\mu))   \dd \info(\posterior)\\
= & \sum_{a\in A} q(a)
v_B(a)  + \max_{\info \in \availableInfo} \int_{\allPosterior}
\util(\alpha_q(\info,\mu),\posterior)  \dd \info(\posterior),
    \end{align*}
where the second equality uses the fact that, under any quota rule with quota $q$, the marginal distribution over actions equals $q$ for \emph{every} chosen information structure (Definition~\ref{def:quota_constraint}), so the bias term $\int v_B(\alpha_q(\info,\mu))\,\dd\info(\mu)=\sum_{a}q(a)v_B(a)$ does not depend on $\info$. 
The sender's best-response problem therefore coincides with the maximization of the receiver's quota-rule payoff. Hence every sender-optimal information structure is receiver-optimal for the given quota rule.
\end{proof}
To achieve the second-best payoff, the receiver only needs to commit to a quota rule that is consistent with the marginal distribution over the optimal actions of the receiver under the second-best information structure $\pi^*(\availableInfo) \in \availableInfo$. That is, for any action $\action$, 
\begin{gather*}
\quota^*_{\availableInfo} (a) =  \int_{\allPosterior}
{ {\bf 1}_{a=a^*(\posterior)} } \dd {\info^*}(\posterior). 
\end{gather*}

\begin{proposition}[Second-Best Implementation]
\label{prop:bayesian}
\,\\
For any set $\availableInfo$ of feasible information structures, 
there exists a quota rule $\quota^*_{\availableInfo}$ that guarantees the second-best payoff for the receiver for any utility $v$:
\begin{align*}
U(\quota^*_{\availableInfo},v,\availableInfo) = \optU(v,\availableInfo) =  \firstBest(\availableInfo).
\end{align*}
\end{proposition}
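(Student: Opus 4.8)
The plan is to exhibit the quota $\quota_{\availableInfo}$ explicitly as the action distribution induced by the receiver's first-best signal structure, and then verify three things: that the associated quota rule is feasible, that it achieves the first-best payoff whenever the sender plays along, and that the sender's incentives force her to play along (so that the equilibrium payoff is indeed the first-best). First I would take (or approximate, if the sup is not attained) a signal structure $\info_F \in \availableInfo$ with $\opt(\info_F)$ close to $\firstBest(\availableInfo)$, together with the myopically optimal decision rule $a^*(\cdot)$. Define the quota $\quota_{\availableInfo}$ to be the marginal distribution over actions induced by the pair $(\info_F, a^*)$, i.e. $\quota_{\availableInfo}(a) = \int_{\allPosterior} \indicate{a = a^*(\posterior)} \dd \info_F(\posterior)$ (with the usual care if $a^*$ is multivalued). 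The point of this choice is that, under the signal structure $\info_F$, the decision rule $a^*$ itself lies in $\feasibleStrategy(\quota_{\availableInfo})$ and attains $\opt(\info_F)$; hence the quota rule's value on $\info_F$ is at least $\opt(\info_F)$, which is (near) the first-best.

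The key observation is that a quota rule perfectly aligns the sender's incentives: because the sender's utility $v(a)$ is state-independent, her expected payoff from any signal structure $\info$ equals $\sum_a v(a)\, q(a)$ where $q$ is the marginal over actions induced by $\info$ under the decision rule; and a quota rule fixes that marginal to be $\quota_{\availableInfo}$ regardless of $\info$. Therefore the sender is exactly indifferent across all of $\availableInfo$, and the receiver-favorable tie-breaking rule selects the $\info \in \availableInfo$ that maximizes the receiver's payoff under the quota rule. Combining with the previous paragraph, that payoff is at least $\opt(\info_F) \approx \firstBest(\availableInfo)$; a limiting argument over a sequence $\info_F$ with $\opt(\info_F) \to \firstBest(\availableInfo)$ (if needed) upgrades this to $U(\quota_{\availableInfo}, v, \availableInfo) \ge \firstBest(\availableInfo)$. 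For the reverse inequality, note that for any decision rule $\alloc$ and any $\info$, the receiver's payoff $\int u(\posterior, \alloc(\info,\posterior)) \dd\info(\posterior) \le \int u(\posterior, a^*(\posterior)) \dd\info(\posterior) = \opt(\info) \le \firstBest(\availableInfo)$, since $a^*(\posterior)$ pointwise maximizes $u(\posterior, \cdot)$; hence $\optU(v,\availableInfo) \le \firstBest(\availableInfo)$, and all three quantities coincide.

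The main obstacle is the possibility that $\availableInfo$ is infinite and the relevant suprema are not attained — both $\sup_{\info \in \availableInfo} \opt(\info)$ and the sender's own maximization over $\availableInfo$. In that case the clean "tie-breaking selects the receiver's favorite $\info$" argument has to be replaced by an approximation/limiting argument: pick $\info_n$ with $\opt(\info_n) \to \firstBest(\availableInfo)$, build the quota from $\info_n$, and show the receiver's guaranteed payoff converges to the first-best. The footnote in the paper signals that all statements are understood under whatever behavioral convention handles the non-existence of the max, so I would state the result with that caveat and carry out the $\epsilon$-argument carefully: for each $\epsilon$ produce a quota rule guaranteeing $\firstBest(\availableInfo) - \epsilon$, and if one wants an exact optimal quota rule, invoke a compactness argument (e.g. on $\Delta(\actions)$, which is compact, using the Lipschitz continuity result referenced in \cref{def:quota}'s footnote) to extract a limiting quota that works. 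A secondary technical point is measurability/selection of $a^*(\posterior)$ when the argmax is not a singleton; this is routine since $A$ is finite, so one can fix any measurable selection and the argument goes through unchanged.
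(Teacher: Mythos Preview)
Your proposal is correct and follows essentially the same approach as the paper: define the quota as the action marginal induced by the receiver-optimal signal structure together with $a^*(\cdot)$, observe that the sender is indifferent across all $\info\in\availableInfo$ because her state-independent payoff depends only on the (fixed) action marginal, and invoke receiver-favorable tie-breaking. You are in fact more careful than the paper's sketch---you explicitly verify the upper bound $\optU(v,\availableInfo)\le\firstBest(\availableInfo)$ and flag the non-attainment issue that the paper handles only via its footnote convention.
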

Thus, the expected utility $U(\quota^*_{\availableInfo},v,\availableInfo)$ that the receiver can attain with the quota rule $\quota^*_{\availableInfo}$ is equal to the maximal payoff $\optU(v,\availableInfo)$ of the receiver, which in turn is equal to the second-best expected payoff $\firstBest(\availableInfo)$ given the feasible set $\Pi$ of information structures.

Propositions~\ref{lemma_shoottheagent} and \ref{prop:bayesian} together identify two Bayesian-optimal mechanisms that achieve the receiver's second-best payoff under common knowledge of $(v,\availableInfo)$: the punitive ``shoot-the-agent'' rule and the quota rule. The two mechanisms are payoff-equivalent when the receiver knows $(v,\availableInfo)$, but have drastically different performances in the presence of uncertainty. 
The robust design problem studied can therefore also be read as a refinement that breaks the tie between the two Bayesian-optimal mechanisms in favor of the more robust one: quota rules survive natural perturbations of the receiver's knowledge of $(v,\availableInfo)$, whereas ``shoot-the-agent'' does not.

The idea that quota rules \textbf{can} align the preferences between the sender and the receiver in various communication settings has been repeatedly discussed in the literature, such as \citet{CHAKRABORTY200770}, \citet{Alexander14}, and \citet{LinLiu22}. 
Both \cref{prop:bayesian} and the related papers build on the assumption that there is common knowledge regarding the sender's set of available information structures.
In contrast, the main contribution of our paper (\cref{thm:quota_optimal}) is to establish that aligning the preferences of the sender and the receiver via quota rules is robustly optimal without this common knowledge assumption, provided that the receiver maintains the commitment power to manage strategic persuasion.

\paragraph{Scope of the Partial-Alignment Assumption.}
Our framework assumes the sender's preference is \emph{partially aligned} with the receiver's, in the sense that $v(a,\omega)=u(a,\omega)+v_B(a)$ for some state-independent bias $v_B$. The set $V$ of possible bias functions $v_B:A\to\mathbb{R}$ is unrestricted, so the receiver's uncertainty about $v$ is uncertainty about which function of this single variable (in $a$) is the sender's true preference. This assumption is well-suited to applications in which the sender's stake is essentially in how often each action is chosen rather than in the matching between actions and states: a division manager with a stake --- of whatever sign --- in the size of his division, or a generative AI agent whose post-training rewards depend on output style rather than on the underlying state. In these settings, partial alignment captures both the sender's residual stake in the action distribution and her latent expertise about the state.

Partial alignment is more restrictive in environments where the sender's marginal value of an action depends on the state. Examples include lobbyists whose preferred policy varies with macroeconomic conditions, or auditors whose payoff depends on uncovering a particular wrongdoing. In Appendix~\ref{apx:state dependent} we show that allowing $V$ to contain arbitrary state-dependent preferences continues to yield a quota rule as the optimal decision rule, but the optimal quota collapses to $q^*_{\underline{\info}}$, the quota that is optimal for the least-informative experiment. Thus, the partial-alignment assumption does not affect the qualitative conclusion---quota rules are optimal---but it does affect which quota is optimal.

\section{Regret Minimizing Decision Rule}
\label{sec:regret}
In this section, we prove the optimality of the quota rules for $\gamma-$generalized regret given any $\gamma\in[0,1)$. In \cref{sub:efficiency_agency}, we show that the regret given any decision rule can be decoupled into the loss of decision and the loss of agency. The quota rule is shown to be the decision rule that completely eliminates any agency loss while incurring some decision loss. To establish its optimality, we proceed in two steps. For any $\gamma\in[0,1)$, \cref{sub:opt_among_quota} characterizes the optimal quota~$\quota_\gamma^*$ among all quota rules, and \cref{sub:quota_is_opt} shows that quota rule $\quota_\gamma^*$ is optimal among all decision rules for $\gamma-$generalized regret.

\subsection{Decision Loss and Agency Loss}
\label{sub:efficiency_agency}
In Proposition \ref{prop:bayesian}, we have shown that even when only the set $\availableInfo$ of feasible information structures is known, the receiver can achieve his second-best payoff by committing to the quota rule $\quota^*_{\availableInfo}$. 
Therefore, the regret of the receiver defined in \eqref{def:regret} given decision rule $\alloc$ simplifies to 
\begin{align*}
\regret(\alloc, v, \availableInfo) = \firstBest(\availableInfo) - U(\alloc, v,\availableInfo).
\end{align*}

Moreover, because the second-best quota $\quota^*_{\availableInfo}$ varies between different $\availableInfo$, 
the receiver cannot achieve the second-best payoff in the presence of uncertainty about $(v,\availableInfo)$. 
To have some basic intuition about the trade-offs of the design problem, denote the receiver's actual payoff under decision rule $\alloc$ given information structure $\info$ as \begin{align}
U(\alloc,\info) = 
\int_{\Delta \states}  u(\alloc(\info,\mu),\posterior) \dd \info(\posterior).\label{def:equilibriumpayoff}
\end{align}
We can further decompose the regret of the receiver as
\begin{align*}
\regret(\alloc, v, \availableInfo) 
= \Big[ \opt(\fbInfo) -  U(\alpha,\pi^*)  \Big] + \Big[ U(\alpha,\pi^*)   -  U(\alloc,\info(\alloc,v,\availableInfo)) \Big]. 
\end{align*}
This equation decomposes the regret function into \emph{decision loss} and  \emph{agency loss}. 
The first difference captures the utility loss from adopting a sub-optimal decision rule rather than the myopic optimal one, and the second difference captures the agency loss stemming from the sender's strategic choice of a sub-optimal information structure $\info(\alloc,v,\availableInfo)$.

In this view, there are two extremal decision rules. The first is to completely eliminate the loss of decision by choosing the best myopic action. To see why this mechanism is sub-optimal, suppose that the receiver chooses a myopic optimal decision rule $\alpha^*(\mu)$ such that
\begin{align*}
    \supp(\alpha^*(\mu)) \subseteq \argmax_a u(a,\posterior),
\end{align*}
for any information structure $\info$ and posterior belief $\posterior$
where $\supp(\cdot)$ is the support of a distribution. 
Denote 
\begin{align*}
a_{\prior} = \argmax_a u(a,\prior)    
\end{align*}
as the optimal action under the prior, which is generically unique. 
Suppose now $\underline{\info}$ provides zero information, thus $\underline{\info}=\pi_0$. 
Then, one of the worst cases is the following pair of feasible information structures and preferences $(\availableInfo,v)$, where
\begin{gather*}
    \availableInfo = \{ \pi_0, \ \pi_1\},
\end{gather*}
and the preferences $v$ are given by:
\begin{gather*}
    \quad v(a_{\prior},\state) > v(a,\state), ~ \forall \state\in\states,a\neq a_{\prior}.
\end{gather*}
In this case, the sender either chooses to completely reveal the state, thus the complete information structure $\pi_1$, and lets the receiver take the first-best action, or provides zero information, thus $\pi_0$, to induce $a_{\prior}$ with probability 1. If the sender has a bias that makes her strictly prefer the action $a_{\prior}$ over any other action, she chooses the zero information structure $\pi_0$ and the receiver suffers a regret of
\begin{gather*}
    \regret(\alpha^*, v, \availableInfo) = \int_{\states} \max_a u(a,\state)\dd \prior(\state) - u(a_{\prior},\prior).
\end{gather*}
The receiver suffers from great regret because he could have enjoyed the second-best value if there had been no agency problem, but ends up receiving no information from the sender.

The second class of extreme mechanisms consists of quota rules (\cref{def:quota}), which completely eliminate any agency loss but result in a decision loss. One might conjecture that the optimal design should maintain a balance between a decision loss and an agency loss. However, we now prove that the quota rules as extremal mechanisms are actually optimal.

\subsection{Optimal Quota Rules}
\label{sub:opt_among_quota}
We start by identifying the optimal quota $q \in \Delta A$ within the class of quota rules. We shall use $q$ to represent the quota rule $\alpha_q$ when there is no ambiguity. The minmax problem for the receiver is: 
\begin{align}
\min_{\quota \in \Delta A} \max_{\availableInfo \in \allInfo,v\in V}  ~  \generalRegret(\quota, v, \availableInfo) = \gamma\cdot \firstBest(\availableInfo) - (1-\gamma)\cdot U(\quota, v,\availableInfo). 
\end{align}
Recall that according to the quota rule $\quota$, the sender has a strict incentive to choose the best information structure for the receiver $\info_q \in \availableInfo$. Consequently,
\begin{gather*}
    \generalRegret(\quota, v, \availableInfo) = \gamma\cdot \opt(\fbInfo(\availableInfo)) - (1-\gamma)\cdot U(\quota,\info_q(\availableInfo)),
\end{gather*}
where by \eqref{def:equilibriumpayoff},
\begin{gather*}
 U(\quota,\info) = \max_{\alloc \in A_q}
\int_{\allPosterior}
{\util(\alloc(\info,\posterior),\posterior)}\dd \info(\posterior),
\text{ and }  \info_q(\availableInfo)\in \argmax_{\info\in\availableInfo} U(\quota,\info).    
\end{gather*}

Note that given any set $\availableInfo$ of information structures and any quota rule $\quota$, to maximize the regret, the adversary can choose another set of feasible information structures:
\begin{gather*}
\hat{\availableInfo}=\{\underline{\info},\fbInfo(\availableInfo)\}.     
\end{gather*}
This keeps the second-best payoff unchanged but decreases the receiver's actual payoff. 
Moreover, given such a set of information structures, the existence of $\underline{\info}$ is immaterial for the $\gamma-$generalized regret as $\fbInfo(\availableInfo)$ is Blackwell more informative than $\underline{\info}$. 

Motivated by this, we denote the $\gamma-$generalized regret of the quota rule $\quota$ given the information structure $\info$ as:
\begin{align}\label{gr}
\generalRegret(\quota, \info) &= \gamma\cdot \opt(\info) - (1-\gamma)\cdot U(\quota,\info).
\end{align}
To solve the optimal quota rule, it is equivalent to solving
\begin{gather*}
    \min_{q\in \Delta A} \max_{\info \in \Sigma_{\underline{\info}} }  ~ \generalRegret(\quota, \info).
\end{gather*}

Note that the optimization problem of $U(\quota,\info)$ can be expressed as an optimal transport problem, where instead of designing the decision rule $\alloc$, the receiver designs the joint distribution $F$ over $\allPosterior \times \actions$ subject to two marginal constraints:
\begin{align*}
U(\quota,\info) = \max_{F}
&\int_{\allPosterior\times \actions }
{\util(\action,\posterior)} \dd F(\posterior,\action) ,\\
\text{s.t.} \, &\,F\left(\Delta \states \times\left\{ a\right\} \right) = q(a), \quad \forall a\in \actions, \\
&\, F(N \times A) = \info(N), \quad \forall N \in \mathcal{B}(\allPosterior).
\end{align*}
To assist in our proof, we further endow the space of $\Delta(\allPosterior)$ with the Wasserstein metric~$d$:
    \begin{align*}
d(\info_1,\info_2) = \min_{G \in \Delta (\allPosterior \times \allPosterior)} \int_{\allPosterior \times \allPosterior}  |\mu - \nu|_1 \dd G(\mu,\nu),\\
\text{s.t. } G(N \times \allPosterior) = \info_1 (N), \quad \forall N \in \mathcal{B}(\allPosterior),\\
G(\allPosterior \times N) = \info_2 (N), \quad \forall N \in \mathcal{B}(\allPosterior),
\end{align*}
where $|\cdot |_1$ denotes the norm $L_1$ in Euclidean space. Because $\allPosterior$ is bounded, the Wasserstein metric induces the weak topology (see Theorem 6.9 in \cite{villani2016optimal}). The weak topology is defined over the probability space $\Delta X$. A sequence of measures $\mu_n$ converges to $\mu_0$ in the weak topology if and only if for any continuous function $f \in C(X)$, $\int_{X} f \dd \mu_n \to  \int_{X} f \dd \mu_0$.

To establish the continuity of $U(\quota,\info)$ and $\generalRegret(\quota, \info)$, we endow the space of $\Delta(\allPosterior)$ with the Wasserstein metric $d$. In other words, the introduction of the Wasserstein metric is purely instrumental. It is used in a couple of the following lemmas that help us establish Theorem \ref{thm:quota_optimal}, but our primitives and Theorem \ref{thm:quota_optimal} themselves do not contain any assumptions regarding the choice of metric.

\begin{lemma}[Lipschitz Continuity]
\label{lem:LipschitzContinuity}
\,\\
There exists an optimal solution for $U(q,\info)$. 
Moreover, $\generalRegret(q,\info)$ is Lipschitz continuous in $(q,\info)$.
\end{lemma}

The proofs of this lemma and the next one are relegated to the Appendix. Now that $\generalRegret(\quota,\info)$ is well-defined and continuous, the following concepts are also well-defined since $\Sigma_{\underline{\info}}$ is compact in the weak topology.
\begin{align*}
    \generalRegret(q) &= \max_{\info \in \Sigma_{\underline{\info}}} \generalRegret(q,\info),\\
    \allInfo^{\quota}  &= \argmax_{\info \in \Sigma_{\underline{\info}}}\generalRegret(\quota,\info),\\
    q^*_{\gamma}  &= \argmin_{q \in \Delta \actions} \generalRegret(q).
\end{align*}

$R_\gamma(q)$ is the maximum regret given the quota rule in the worst-case $\Pi$. $\allInfo^{\quota}$ is the set of worst-case information structures for the quota rule $\quota$, which is closed and, hence, compact. $q^*_{\gamma}$ is the optimal quota rule.
We have the following characterization for the optimal quota rules.

\begin{lemma}[Local Optimality]
\label{lem:quota_local_improve}
\,\\
For any $\gamma\in[0,1)$, a quota rule $\quota$ is the optimal quota rule 
if and only if there does not exist another quota rule $\quota'$ such that 
\begin{align*}
\max_{\info\in \allInfo^{\quota}}\generalRegret(\quota',\info) < \generalRegret(\quota).
\end{align*}
\end{lemma}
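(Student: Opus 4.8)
The plan is to prove the two directions of the equivalence separately. The ``only if'' direction is immediate: if $\quota$ is the optimal quota rule and there were a $\quota'$ with $\max_{\info \in \allInfo_{\quota}} \generalRegret(\quota', \info) < \generalRegret(\quota)$, then since $\generalRegret(\quota') = \max_{\info \in \allInfo} \generalRegret(\quota', \info) \geq \max_{\info \in \allInfo_{\quota}} \generalRegret(\quota', \info)$, we would need to rule out the possibility that the worst-case signal structure for $\quota'$ lies outside $\allInfo_{\quota}$ with a larger value. So the content is really in showing that local improvement on $\allInfo_{\quota}$ implies global improvement. The natural device is a convexity/averaging argument: for $\lambda \in (0,1)$ consider the mixed quota $\quota_\lambda = (1-\lambda)\quota + \lambda \quota'$ and show that for $\lambda$ small enough, $\generalRegret(\quota_\lambda) < \generalRegret(\quota)$, contradicting optimality of $\quota$.

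For this I would first establish that the map $\quota \mapsto \generalRegret(\quota,\info)$ is convex for each fixed $\info$ — this follows because $U(\quota,\info)$ is concave in $\quota$ (the optimal-transport/LP value is concave in the marginal constraint $\quota$, as a maximum of linear objectives over a feasible polytope that depends linearly on $\quota$), and $\generalRegret(\quota,\info) = \gamma\,\opt(\info) - (1-\gamma)U(\quota,\info)$ with $1-\gamma > 0$. Hence $\generalRegret(\quota_\lambda, \info) \leq (1-\lambda)\generalRegret(\quota,\info) + \lambda \generalRegret(\quota',\info)$ for all $\info$. Now split the worst-case analysis for $\quota_\lambda$ into two regions: signal structures $\info$ near $\allInfo_{\quota}$ and signal structures bounded away from it. On a neighborhood of $\allInfo_{\quota}$, convexity together with the hypothesis $\generalRegret(\quota',\info) < \generalRegret(\quota)$ (which extends by continuity of $\generalRegret(\quota',\cdot)$ from the compact set $\allInfo_{\quota}$ to a neighborhood, using \cref{lem:LipschitzContinuity}) gives a strict gap $\generalRegret(\quota_\lambda,\info) \leq (1-\lambda)\generalRegret(\quota,\info) + \lambda\generalRegret(\quota',\info) < \generalRegret(\quota)$ for $\info$ in that neighborhood, uniformly over all $\lambda \in (0,1)$. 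Away from $\allInfo_{\quota}$, by definition of $\allInfo_{\quota}$ and compactness of $\allInfo$ (and continuity of $\generalRegret(\quota,\cdot)$), we have $\generalRegret(\quota,\info) \leq \generalRegret(\quota) - \delta$ for some $\delta > 0$ on the complement of that neighborhood; combining with the uniform bound $|\generalRegret(\quota_\lambda,\info) - \generalRegret(\quota,\info)| \leq \lambda \cdot (\text{const})$ — which follows because $\generalRegret(\cdot,\info)$ is Lipschitz in $\quota$ uniformly in $\info$, as the objective is bounded and $\quota$ lives in the compact simplex $\Delta\actions$ — we get $\generalRegret(\quota_\lambda,\info) < \generalRegret(\quota)$ on that region too, for $\lambda$ small. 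Taking the max over both regions yields $\generalRegret(\quota_\lambda) < \generalRegret(\quota)$, the desired contradiction.

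The main obstacle I anticipate is the uniformity in the region away from $\allInfo_{\quota}$: one has to be careful that the neighborhood on which $\generalRegret(\quota',\cdot) < \generalRegret(\quota)$ can be chosen so that its complement in $\allInfo$ is compact and strictly below $\generalRegret(\quota)$ in $\generalRegret(\quota,\cdot)$-value, and that the perturbation estimate $|\generalRegret(\quota_\lambda,\info) - \generalRegret(\quota,\info)| = O(\lambda)$ holds with a constant independent of $\info$. The latter I would get from a uniform Lipschitz bound of $U(\cdot,\info)$ in $\quota$: since $\util$ is valued in $[0,1]$ and the transport polytope's vertices move continuously (indeed Lipschitz-ly, by LP sensitivity) with the marginal $\quota$, the value $U(\quota,\info)$ is Lipschitz in $\quota$ with a constant depending only on $|\actions|$ and the bound on $\util$, not on $\info$. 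Once these uniform estimates are in hand the argument closes cleanly; alternatively, one can sidestep the explicit Lipschitz constant by a compactness argument directly on the product $\Delta\actions \times \allInfo$, using joint continuity of $\generalRegret$ from \cref{lem:LipschitzContinuity}. I would present the compactness version as it is shorter and relies only on results already available in the excerpt.
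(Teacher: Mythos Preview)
Your approach is essentially the same as the paper's: form the convex combination $q_\lambda=(1-\lambda)q+\lambda q'$, use convexity of $R_\gamma(\cdot,\pi)$ (equivalently, concavity of $U(\cdot,\pi)$) together with the Lipschitz continuity from \cref{lem:LipschitzContinuity} to handle a neighborhood of $\Sigma_q$, and use the strict gap on the compact complement to absorb the perturbation for small $\lambda$. The paper executes the ``away from $\Sigma_q$'' step slightly more simply---it combines convexity with the crude uniform bound $R_\gamma(q',\pi)\le 1$ rather than a Lipschitz-in-$q$ estimate, and writes down an explicit mixing weight---but the structure and ingredients are identical.
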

Considering the optimal quota rule $\quota_\gamma^*$ and its corresponding worst-case information structures~$\allInfo^{\quota_\gamma^*}$, 
\cref{lem:quota_local_improve} implies that there does not exist another quota rule $\quota'$ which can uniformly and strictly reduce regret on $\allInfo^{\quota_\gamma^*}$ compared to $\quota_\gamma^*$. 
Intuitively, if such $\quota'$ exists, one can slightly adjust the quota $\quota_\gamma^*$ towards $\quota'$. 
This modification can lower regret in all worst cases in $\allInfo^{\quota_\gamma^*}$ without significantly impacting regret of other information structures that are originally slack. 
As a result, the overall worst-case regret would be lower compared to $\quota_\gamma^*$.

\subsection{The Optimality of the Quota Rule}
\label{sub:quota_is_opt}

In this section, we show that the optimal quota rule $\quota_\gamma^*$, which ensures local optimality among the quota rules, is also optimal among all possible general decision rules.

\begin{theorem}[Optimality of Quota Rules]
\label{thm:quota_optimal}
\,\\
For any $\gamma \in[0,1)$, the quota rule $\quota^*_{\gamma}$ is optimal for the receiver.
\end{theorem}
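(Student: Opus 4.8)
The plan is to show that no decision rule $\alloc$ can achieve a strictly lower worst-case $\gamma$-generalized regret than the optimal quota rule $\quota_\gamma^*$. By \cref{lem:generalize_quota_optimal}, it suffices to compare $\quota_\gamma^*$ against an arbitrary generalized quota rule $\alloc$, described by a map $\bar\quota:\allInfo\to\Delta\actions$. The key observation is that the sender with a fixed state-independent utility $v$ can, in effect, pin the receiver to any single signal structure $\info$: if the set of available structures is the singleton $\availableInfo=\{\info\}$, then the sender is forced to choose $\info$, and the receiver's realized payoff is exactly $U(\alloc,\info)$, while the Bayesian optimum is $\firstBest(\{\info\})=\opt(\info)$. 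Hence for every $\info$,
\begin{align*}
\sup_{v,\availableInfo}\generalRegret(\alloc,v,\availableInfo)\ \geq\ \gamma\cdot\opt(\info)-(1-\gamma)\cdot U(\alloc,\info).
\end{align*}
So the worst-case regret of any generalized quota rule $\alloc$ is at least $\sup_{\info}\bigl[\gamma\opt(\info)-(1-\gamma)U(\alloc,\info)\bigr]$, and since $\alloc$ is a generalized quota rule, $U(\alloc,\info)$ equals the quota-constrained optimum $\int u(\posterior,\alloc(\posterior))\dd\info(\posterior)$ with marginal $\bar\quota(\info)$ — which is at most $U(\quota,\info)$ for the fixed quota $\quota=\bar\quota(\info)$ tailored to that $\info$, but could exceed $U(\quota_\gamma^*,\info)$ for the fixed optimal quota.

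The heart of the argument is therefore to show that allowing the quota to vary with $\info$ does not help in the worst case. I would argue as follows. Fix a generalized quota rule $\alloc$ with map $\bar\quota(\cdot)$, and suppose for contradiction its worst-case regret is strictly below $\generalRegret(\quota_\gamma^*)=\min_q\max_\info\generalRegret(q,\info)$. Pick a worst-case structure $\info^*\in\allInfo_{\quota_\gamma^*}$ for the optimal fixed quota. The generalized rule at $\info^*$ uses some quota $q^*=\bar\quota(\info^*)$; since $U(\alloc,\info^*)\le U(q^*,\info^*)$, we get $\generalRegret(q^*,\info^*)\le\generalRegret(\alloc,v,\{\info^*\})<\generalRegret(\quota_\gamma^*)$. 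But now I want to use this to build a genuine single-quota improvement, contradicting \cref{lem:quota_local_improve}: the point is that if for the specific worst-case family $\allInfo_{\quota_\gamma^*}$ the generalized rule beats $\quota_\gamma^*$ everywhere, then there is a single quota $q'$ beating $\quota_\gamma^*$ uniformly on $\allInfo_{\quota_\gamma^*}$. This requires a minimax/convexity step: the map $q\mapsto\generalRegret(q,\info)$ is convex in $q$ (it is $\gamma\opt(\info)$ minus $(1-\gamma)$ times the value of a linear program in $q$, hence concave-minus... — actually $U(q,\info)$ is concave in $q$ as the value of a maximization LP with $q$ in the RHS, so $\generalRegret(q,\info)$ is convex in $q$), and $\Delta\actions$ is convex compact, so by Sion's minimax theorem $\min_q\max_{\info\in\allInfo_{\quota_\gamma^*}}\generalRegret(q,\info)=\max_{\info}\min_q\generalRegret(q,\info)$; combined with the continuity in $\cref{lem:LipschitzContinuity}$ and compactness of $\allInfo_{\quota_\gamma^*}$, an averaging/selection argument produces the single improving quota $q'$.

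Concretely the steps are: (1) reduce to generalized quota rules via \cref{lem:generalize_quota_optimal}; (2) establish the singleton lower bound $\generalRegret(\alloc)\ge\sup_\info[\gamma\opt(\info)-(1-\gamma)U(\alloc,\info)]$; (3) note $U(\alloc,\info)\le U(\bar\quota(\info),\info)$ so $\generalRegret(\alloc)\ge\sup_\info\generalRegret(\bar\quota(\info),\info)$; (4) show $\sup_\info\generalRegret(\bar\quota(\info),\info)\ge\min_q\sup_\info\generalRegret(q,\info)=\generalRegret(\quota_\gamma^*)$ — this is the crux, and it follows from a minimax exchange using convexity of $\generalRegret(\cdot,\info)$ in $q$, compactness of $\Delta\actions$, and \cref{lem:quota_local_improve}/\cref{lem:LipschitzContinuity} to handle the supremum over the infinite set $\allInfo$. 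The main obstacle is step (4): one must rule out the possibility that a cleverly $\info$-dependent choice of quota dodges the worst cases of every fixed quota simultaneously, and making the minimax interchange rigorous over the infinite-dimensional, vaguely-compact space $\allInfo$ — with the value function only Lipschitz, not smooth — is where the real work lies; I expect to lean on \cref{lem:quota_local_improve} to localize to the finite-dimensional simplex $\Delta\actions$ and on a supporting-hyperplane argument at $\quota_\gamma^*$ to extract the contradiction.
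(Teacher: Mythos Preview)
Your step (4) is where the argument breaks down, and the break is not merely technical. The inequality $\sup_{\info}\generalRegret(\bar\quota(\info),\info)\ge\min_{q}\sup_{\info}\generalRegret(q,\info)$ is \emph{false} in general, and Sion's theorem does not rescue it: $\generalRegret(q,\info)$ is convex in $q$, but it is also \emph{convex} (not concave) in $\info$, since $\opt(\info)$ is linear in $\info$ and $U(q,\info)$ is concave in $\info$ (the convex combination of two feasible transport plans is feasible for the mixed marginal). Concretely, take $\bar\quota(\info)$ to be the first-best quota for $\info$; then $\generalRegret(\bar\quota(\info),\info)=(2\gamma-1)\opt(\info)$, which for $\gamma=\tfrac12$ equals $0$ for every $\info$, while $\generalRegret(\quota_\gamma^*)>0$ whenever no single quota is simultaneously first-best for all signal structures. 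So your singleton lower bound in step (2) is too weak: restricting the adversary to $\availableInfo=\{\info\}$ throws away exactly the leverage needed.

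The paper's proof gets the extra leverage by using a \emph{two-element} adversarial set $\availableInfo=\{\info',\noInfo\}$. It first applies \cref{lem:quota_local_improve} not to an abstract $q'$ but specifically to $\bar\quota(\noInfo)$: since this fixed quota cannot uniformly beat $\quota_\gamma^*$ on $\allInfo_{\quota_\gamma^*}$, there is some $\info'\in\allInfo_{\quota_\gamma^*}$ with $\generalRegret(\bar\quota(\noInfo),\info')\ge\generalRegret(\quota_\gamma^*)$, forcing $\bar\quota(\info')\neq\bar\quota(\noInfo)$. Now pick $v$ so the sender strictly prefers the quota $\bar\quota(\noInfo)$ to $\bar\quota(\info')$; facing $\availableInfo=\{\info',\noInfo\}$ she chooses $\noInfo$, the receiver earns $U(\bar\quota(\noInfo),\noInfo)\le U(\bar\quota(\noInfo),\info')$ by Blackwell monotonicity, while the benchmark is still $\opt(\info')$. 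This yields regret at least $\generalRegret(\bar\quota(\noInfo),\info')\ge\generalRegret(\quota_\gamma^*)$, the desired contradiction. The two ingredients you are missing are (i) letting the adversary mix the benchmark structure $\info'$ with the Blackwell-minimal $\noInfo$ to decouple the benchmark from the realized play, and (ii) using the sender's incentive to exploit a non-constant $\bar\quota$ --- this is precisely the ``agency loss'' that singletons cannot generate.
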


We first provide a brief intuition for the optimality of quota rules before the formal proof. 
First, recall that $\underline{\info}$ is the least informative information structure and, in general, $\underline{\info}$ can still provide an informative signal to the receiver. 
Intuitively, based on \cref{lem:quota_local_improve}, any decision rule that tries to achieve strictly better regret than the optimal quota rule must effectively provide different quotas in the set of worst-case information structures $\allInfo^{\quota_\gamma^*}$, 
and one of those information structures, say~$\info'$, has a different quota than the least informative information structure $\underline{\info}$. 
Since the receiver is uncertain about the bias of the sender, in the worst-case, when presented with both $\info'$ and $\underline{\info}$, the sender may prefer $\underline{\info}$. This could lead to a larger worst-case regret, since the receiver now receives a worse information structure due to the deviation from quota rules, and the benefit from differential treatment is not sufficient to recover the loss from less information.

\begin{remark}
\label{remark}
    As we prove in \cref{trivial_lemma}, under the quota rule, the sender only selects the information structure $\info \in \availableInfo$ that maximizes the receiver's utility among $\availableInfo$. This means that the performance of the quota rule remains the same under any tie-breaking rules. Thus, \cref{thm:quota_optimal} holds under any tie-breaking rule.
\end{remark}

\begin{proof}[Proof of \cref{thm:quota_optimal}]
First, recall that by definition, $U(\quota,\info)$ is monotone increasing in the Blackwell order of $\info$. Thus, 
\begin{gather*}
    U(q,\info)\geq U(q,\leastInfo) , \quad \forall \info \in \availableInfo\in \allInfo.
\end{gather*}

We prove the theorem by contradiction. Suppose there is a decision rule $\alpha$ that induces strictly less regret than the quota rule $\quota_\gamma^*$, and recall that $\allInfo^{\quota_\gamma^*}$ is the set of information structures that maximizes the regret given quota rule $\quota_\gamma^*$. 
Let $\quota_{\alpha}(\info)$ be the quota rule with a quota induced by decision rule~$\alpha$ given information structure $\info$. 
We divide our analysis into two cases.

If $\allInfo^{\quota_\gamma^*}$ is a singleton, denote it as $\allInfo^{\quota_\gamma^*}= \{\bar{\info} \}$. We first argue that, under $\bar{\info}$, the quota rule $\quota_\gamma^*$  maximizes the receiver's payoff $U(\cdot,\bar{\info})$ among all decision rules. This follows from \cref{lem:quota_local_improve} ($\quota_\gamma^*$ is better than alternative quota rules) and  \cref{prop:bayesian} (quota rule is weakly better than any other rule). 
Moreover, since $\bar{\info}$ Blackwell dominates the least informative information structure $\underline{\info}$, 
we have 
\begin{align*}
U(\alpha,\underline{\info}) \leq U(\quota_{\alpha}(\underline{\info}),\underline{\info})
\leq U(\quota_{\alpha}(\underline{\info}),\bar{\info})
\leq U(\quota_\gamma^*,\bar{\info}).
\end{align*}
Therefore, in this case, if the set of available information structures is $\availableInfo=\{\bar{\info},\underline{\info}\}$, regardless of the choice of the sender, the utility of the receiver is always weakly lower than $U(\quota_\gamma^*,\bar{\info})$. This implies that the generalized regret of the receiver is weakly higher than $\generalRegret(\quota_\gamma^*)$.

If $\allInfo^{\quota_\gamma^*}$ is not a singleton, then
by \cref{lem:quota_local_improve}, $\quota_{\alpha}(\underline{\info})$ cannot uniformly improve $\quota_\gamma^*$. Thus, there exists an information structure $\info'\in\allInfo^{\quota_\gamma^*}$ such that 
$\generalRegret(\quota_{\alpha}(\underline{\info}),\info') \geq \generalRegret(\quota_{\gamma}^*).$
This implies that $U(\quota_{\alpha}(\underline{\info}),\info') \leq U(\quota_\gamma^*,\info')$. 
Moreover, $U(\quota_{\alpha}(\underline{\info}),\underline{\info}) \leq U(\quota_{\alpha}(\underline{\info}),\info')$ since $\info'$ Blackwell dominates $\underline{\info}$.
Thus, if $\quota_{\alpha}(\info') = \quota_{\alpha}(\underline{\info})$, we have 
\begin{align*}
   U(\alpha,\underline{\info}) &\leq   U(\quota_{\alpha}(\underline{\info}),\underline{\info}) \leq U(\quota_{\alpha}(\underline{\info}),\info') \leq  U(\quota_\gamma^*,\info'), \\
    U(\alpha,\info') &\leq   U(\quota_{\alpha}(\underline{\info}),\info') \leq  U(\quota_\gamma^*,\info').
\end{align*}
These inequalities further imply that, when the set of information structures available to the sender is $\availableInfo'\triangleq\lbr{\info',\leastInfo}$, regardless of the choice of the sender, the generalized regret under decision rule $\alpha$ would be higher than the quota rule $\quota_\gamma^*$, which is a contradiction.

Therefore, we must have $\quota_{\alpha}(\info')\neq \quota_{\alpha}(\underline{\info})$. When the set of information structures available to the sender is $\availableInfo'\triangleq\lbr{\info',\leastInfo}$, 
there exists a state-independent bias $v_B$ of the sender such that the difference in expected utilities between $\quota_{\alpha}(\underline{\info})$ and $\quota_{\alpha}(\info')$ given biases $v_B$ is larger than $2\max_{\state,\action} |u(\action,\state)|$, which leads her to choose the least informative information structure~$\leastInfo$ since the sender's utility is $v(a,\state) = u(a,\state) + v_B(a)$ and the first component is bounded by $\max_{\state,\action} |u(\action,\state)|$. 
Since $\firstBest(\availableInfo') = \opt(\info')$, the regret of the receiver is 
\begin{align*}
\gamma\cdot\opt(\info') - (1-\gamma)\cdot U(\alpha,\leastInfo)
&\geq \gamma\cdot\opt(\info') - (1-\gamma)\cdot U(\quota_{\alpha}(\underline{\info}),\leastInfo) \\
&\geq \gamma\cdot\opt(\info') - (1-\gamma)\cdot U(\quota_{\alpha}(\underline{\info}),\info') \\
&=\generalRegret(\quota_{\alpha}(\underline{\info}),\info') \geq \generalRegret(\quota_{\gamma}^*),
\end{align*}
where again the first inequality holds since the receiver's utility is maximized under the quota rule. 
The second inequality holds since $\info'$ is Blackwell more informative than $\underline{\info}$, and hence gives the receiver a higher expected utility given any quota rule. 
This leads to a contradiction of the assumption that $\alpha$ is a strict improvement. 
\end{proof}

We establish the optimality of the quota rule in an environment where the sender is partially aligned with the receiver. A limiting case of the partially aligned preferences is the case where the preferences of the sender are state independent only. In this limiting case, the result of \Cref{thm:quota_optimal} remains valid when the sender adopts a specific tie-breaking rule, namely the receiver-favorable tie-breaking rule. We establish this result in Appendix \ref{apx:state independent}. In this paper, we further show that under different tie-breaking rules, quota rules may still remain optimal, but the quota may change. In particular, we show that, under the receiver-adversarial tie-breaking rule, which always seeks to minimize the utility of the receiver in the case of indifference by the sender, a quota rule remains optimal for the receiver. However, the adopted quota rule is now the one that is optimal for the least informative information structure $\leastInfo$.

In \Cref{thm:quota_optimal} we prove the optimality of the quota rules but do not determine which specific quota rule is optimal within this class. To provide more insight on how the model primitives determine the optimal quota in robust environments,
we next characterize the optimal quota rules in binary action environments.

\section{Binary Actions}
\label{sec:binary}

In this section, we illustrate the main results in an environment with binary actions. We first analyze the case of binary states that lend itself to informative visualizations and then derive results for an arbitrary finite state space.  

\subsection{Binary States}
To illustrate the basic idea, we start with a binary state model with $\state\in\{0,1\}$. With binary states, we can express the common prior simply in terms of the probability of state $\state=1$, that is, $\Pr(\state =1 ) = \rho$. Similarly, we can express all posterior beliefs as the posterior probability of $\state=1$, that is, $\Pr(\state =1 ) = \mu$.

The receiver aims to match the action $a$ to the state $\state$. Specifically, the utility of the receiver is given by the following matrix:
\[
\begin{array}{c|cc}
 & \state = 0 & \state = 1 \\ \hline
\action = 0 & 0 & 0 \\
\action = 1 & -1 & m \\
\end{array}
\]
The payoff entry $m$ satisfies $m>0$. In a binary action model, it is without loss of generality to express the receiver's preference for each state as being summarized by the utility difference between action $0$ and action $1$.

Throughout this section, we maintain that the least informative structure $\underline{\info}$ is the zero information structure $\info_0$ (i.e. the common prior $\rho$).

The first step to find the optimal quota is to characterize the worst information structure given any quota rule $q$. In the binary action environment, we can identify the quota rule $q$ simply by the probability that action $a=1$ is chosen:
\begin{align*}
q=\Pr(a=1).
\end{align*}
As we will later prove formally in the environment with many states, two types of information structures emerge as candidates for worst-case information structures: the \emph{right-biased error} $\pi_r$ and the \emph{left-biased error} $\pi_l$.

The \emph{right-biased error} corresponds to a binary information structure supported by posterior beliefs $\{\mu_r,1\}$ with 
\begin{align}
    \mu_r &\leq \frac{1}{m+1}, \quad  \Pr(\mu=\mu_r)  >1-q.
\end{align}
Specifically, the right-biased error (information structure) produces either a signal that fully reveals $\state =1$ or a noisy signal leading to a posterior $\mu_r$, under which the optimal myopic decision rule due to the above inequality is to take action $0$. Denote 
\begin{align*}
p_r= \Pr(\mu= 1)    
\end{align*}
 and we note that the probability that the posterior belief $1$ realizes is $p_r < q$. The quota rule $q$ now forces the receiver to take action $a=1$ with positive probability after signal $\mu_r$ to fulfill the commitment to the marginal distribution over actions.
We refer to such an information structure as a right-biased error, since the quota rule $q$ is forced to take action 1 more often than the myopic optimal solution would suggest.

To calculate the regret generated by such a right-biased error information structure $\pi_r$, note that the second-best payoff associated with this information structure is obtained when the receiver takes action $a=1$ given posterior belief 1 and takes action $a=0$ given posterior belief $\mu_r$:
\begin{align*}
   \opt(\info_r) =   p_r \cdot m  +  (1-p_r) \cdot 0.
\end{align*}
To honor the quota restriction $q$, the receiver takes action $a=1$  with posterior belief 1 and mixes between action 0 and 1 with posterior belief $\mu_r$. The joint probability of taking action 1 with a posterior $\mu_r$ is $q-p_r$. The expected utility of the quota rule $q$ is given by:
\begin{align*}
 U(\quota,\info_r) = p_r \cdot m  +  (q-p_r) \cdot [\mu_r m - (1-\mu_r)    ] + (1-q) \cdot 0
\end{align*}
Consequently, the generalized regret given by the binary information structure $\pi_r$ is, adapting the previous formula (\ref{gr}) to the current environment: 
\begin{gather*}
R_\gamma(q,\pi_r)=\gamma  p_r m  - (1-\gamma) [ p_r m  + (q-p_r) (\mu_r (m+1) -1)    ].
\end{gather*}

The worst regret generated by the right-biased error,  $R_\gamma(\quota,\pi_r)$, is obtained by maximizing over all binary information structures $(p_r,\mu_r)$.
We can show that $R_\gamma(\quota,\pi_r)$ is strictly increasing in $\quota$. Intuitively, right-biased errors generate regret because the quota rule commits to taking action 1 with a probability higher than that of the optimal solution. Thus, regret would be even greater if the quota rule entails a higher probability of taking action~1.

Similarly, the worst \emph{left-biased error} corresponds to a binary information structure supported by the posterior beliefs $\{0,\mu_l\}$ with 
\begin{align*}
   \mu_l &\geq \frac{1}{m+1}, \quad  \Pr(\mu=\mu_l)  > q.  
\end{align*}
The left-biased information structure $\pi_l$ produces a signal that fully reveals $\state =0$ or a noisy signal that leads to a posterior~$\mu_l$, under which the optimal myopic decision rule is to take action $1$. However, because of the probability that the signal $\mu_l$ is realized $\Pr(\mu=\mu_l) > q$, the quota rule $q$ must take action $0$ when the signal~$\mu_l$ is realized with some probability to fulfill its commitment to the marginal distribution over actions. 

The worst-case regret from left-biased errors $R_\gamma(\quota,\pi_l)$ is obtained by maximizing over $\mu_l$ and $R_\gamma(\quota,\pi_l)$ is strictly decreasing in $q$. Therefore, the optimal design of the quota $q$ balances the two sides in such a way that $R_\gamma(\quota,\pi_l)=R_\gamma(\quota,\pi_r)$. Because the objectives are polynomial fractions and there is no simple closed-form solution, we visualize the optimal solution using the heat maps in \Cref{fig:quota_summary}.

In \Cref{fig:quota}, we fix the benefit $m=1$ of action  $a=1$ and vary the prior $\rho$ as well as the regret parameter $\gamma$. For any fixed $\gamma$, the optimal quota $q=\Pr(a=1)$ is increasing in the prior $\rho=\Pr(\omega=1)$. When $\gamma$ is small, the optimal quota shifts dramatically near $\rho=0.5$ between the pure actions $a=0$ and $a=1$. In fact, when $\gamma=0$, because $\pi_0$ corresponds to no information, the max-min quota rule is to take the prior-optimal action with probability~1. When $\gamma$ is large, the optimal quota $q$ is interior and changes more smoothly with the prior $\rho$. 

\begin{figure}[t]
\centering
\subfloat[Fixing payoff $m=1$]{
\begin{tikzpicture}[scale=0.81]
\begin{axis}[
  view={0}{90},
  axis on top,
  axis equal image,
  enlargelimits=false,
  xlabel={$\rho$}, ylabel={$\gamma$},
  xmin=0, xmax=1, ymin=0, ymax=1,
  xtick={0,0.2,...,1}, ytick={0,0.2,...,1},
  scaled ticks=false,
  colorbar, colormap/viridis,
  point meta min=0, point meta max=1,
]

\addplot3[
  surf,
  shader=interp,
  draw=none,
  mesh/rows=21, mesh/cols=21,
  mesh/ordering=rowwise
]
table[
  header=false,
  row sep=newline,
  col sep=space,
  x index=0, y index=1, z index=2
]{%
0.0 0.0 0.0
0.05 0.0 0.0
0.1 0.0 0.0
0.15 0.0 0.0
0.2 0.0 0.0
0.25 0.0 0.0
0.3 0.0 0.0
0.35 0.0 0.0
0.4 0.0 0.0
0.45 0.0 0.0
0.5 0.0 0.0
0.55 0.0 1.0
0.6 0.0 1.0
0.65 0.0 1.0
0.7 0.0 1.0
0.75 0.0 1.0
0.8 0.0 1.0
0.85 0.0 1.0
0.9 0.0 1.0
0.95 0.0 1.0
1.0 0.0 1.0
0.0 0.05 0.0
0.05 0.05 0.001
0.1 0.05 0.003
0.15 0.05 0.005
0.2 0.05 0.007
0.25 0.05 0.009
0.3 0.05 0.012
0.35 0.05 0.016
0.4 0.05 0.02
0.45 0.05 0.028
0.5 0.05 0.053
0.55 0.05 0.972
0.6 0.05 0.98
0.65 0.05 0.984
0.7 0.05 0.988
0.75 0.05 0.991
0.8 0.05 0.993
0.85 0.05 0.995
0.9 0.05 0.997
0.95 0.05 0.999
1.0 0.05 1.0
0.0 0.1 0.0
0.05 0.1 0.003
0.1 0.1 0.006
0.15 0.1 0.01
0.2 0.1 0.014
0.25 0.1 0.019
0.3 0.1 0.025
0.35 0.1 0.033
0.4 0.1 0.043
0.45 0.1 0.058
0.5 0.1 0.112
0.55 0.1 0.942
0.6 0.1 0.957
0.65 0.1 0.967
0.7 0.1 0.975
0.75 0.1 0.981
0.8 0.1 0.986
0.85 0.1 0.99
0.9 0.1 0.994
0.95 0.1 0.997
1.0 0.1 1.0
0.0 0.15 0.0
0.05 0.15 0.005
0.1 0.15 0.01
0.15 0.15 0.016
0.2 0.15 0.022
0.25 0.15 0.03
0.3 0.15 0.04
0.35 0.15 0.052
0.4 0.15 0.068
0.45 0.15 0.092
0.5 0.15 0.177
0.55 0.15 0.908
0.6 0.15 0.932
0.65 0.15 0.948
0.7 0.15 0.96
0.75 0.15 0.97
0.8 0.15 0.978
0.85 0.15 0.984
0.9 0.15 0.99
0.95 0.15 0.995
1.0 0.15 1.0
0.0 0.2 0.0
0.05 0.2 0.007
0.1 0.2 0.014
0.15 0.2 0.022
0.2 0.2 0.032
0.25 0.2 0.043
0.3 0.2 0.056
0.35 0.2 0.073
0.4 0.2 0.096
0.45 0.2 0.13
0.5 0.2 0.25
0.55 0.2 0.87
0.6 0.2 0.904
0.65 0.2 0.927
0.7 0.2 0.944
0.75 0.2 0.957
0.8 0.2 0.968
0.85 0.2 0.978
0.9 0.2 0.986
0.95 0.2 0.993
1.0 0.2 1.0
0.0 0.25 0.0
0.05 0.25 0.009
0.1 0.25 0.019
0.15 0.25 0.03
0.2 0.25 0.042
0.25 0.25 0.057
0.3 0.25 0.075
0.35 0.25 0.098
0.4 0.25 0.128
0.45 0.25 0.173
0.5 0.25 0.333
0.55 0.25 0.827
0.6 0.25 0.872
0.65 0.25 0.902
0.7 0.25 0.925
0.75 0.25 0.943
0.8 0.25 0.958
0.85 0.25 0.97
0.9 0.25 0.981
0.95 0.25 0.991
1.0 0.25 1.0
0.0 0.3 0.0
0.05 0.3 0.011
0.1 0.3 0.024
0.15 0.3 0.038
0.2 0.3 0.054
0.25 0.3 0.074
0.3 0.3 0.097
0.35 0.3 0.125
0.4 0.3 0.164
0.45 0.3 0.223
0.5 0.3 0.429
0.55 0.3 0.777
0.6 0.3 0.836
0.65 0.3 0.875
0.7 0.3 0.903
0.75 0.3 0.926
0.8 0.3 0.946
0.85 0.3 0.962
0.9 0.3 0.976
0.95 0.3 0.989
1.0 0.3 1.0
0.0 0.35 0.0
0.05 0.35 0.014
0.1 0.35 0.03
0.15 0.35 0.048
0.2 0.35 0.068
0.25 0.35 0.092
0.3 0.35 0.121
0.35 0.35 0.157
0.4 0.35 0.206
0.45 0.35 0.28
0.5 0.35 0.5
0.55 0.35 0.72
0.6 0.35 0.794
0.65 0.35 0.843
0.7 0.35 0.879
0.75 0.35 0.908
0.8 0.35 0.932
0.85 0.35 0.952
0.9 0.35 0.97
0.95 0.35 0.986
1.0 0.35 1.0
0.0 0.4 0.0
0.05 0.4 0.018
0.1 0.4 0.037
0.15 0.4 0.059
0.2 0.4 0.085
0.25 0.4 0.114
0.3 0.4 0.15
0.35 0.4 0.195
0.4 0.4 0.255
0.45 0.4 0.347
0.5 0.4 0.5
0.55 0.4 0.653
0.6 0.4 0.745
0.65 0.4 0.805
0.7 0.4 0.85
0.75 0.4 0.886
0.8 0.4 0.915
0.85 0.4 0.941
0.9 0.4 0.963
0.95 0.4 0.982
1.0 0.4 1.0
0.0 0.45 0.0
0.05 0.45 0.022
0.1 0.45 0.046
0.15 0.45 0.073
0.2 0.45 0.104
0.25 0.45 0.14
0.3 0.45 0.184
0.35 0.45 0.239
0.4 0.45 0.313
0.45 0.45 0.404
0.5 0.45 0.5
0.55 0.45 0.596
0.6 0.45 0.687
0.65 0.45 0.761
0.7 0.45 0.816
0.75 0.45 0.86
0.8 0.45 0.896
0.85 0.45 0.927
0.9 0.45 0.954
0.95 0.45 0.978
1.0 0.45 1.0
0.0 0.5 0.0
0.05 0.5 0.026
0.1 0.5 0.056
0.15 0.5 0.089
0.2 0.5 0.127
0.25 0.5 0.172
0.3 0.5 0.225
0.35 0.5 0.29
0.4 0.5 0.359
0.45 0.5 0.429
0.5 0.5 0.5
0.55 0.5 0.571
0.6 0.5 0.641
0.65 0.5 0.71
0.7 0.5 0.775
0.75 0.5 0.828
0.8 0.5 0.873
0.85 0.5 0.911
0.9 0.5 0.944
0.95 0.5 0.974
1.0 0.5 1.0
0.0 0.55 0.0
0.05 0.55 0.032
0.1 0.55 0.068
0.15 0.55 0.109
0.2 0.55 0.155
0.25 0.55 0.21
0.3 0.55 0.267
0.35 0.55 0.325
0.4 0.55 0.383
0.45 0.55 0.442
0.5 0.55 0.5
0.55 0.55 0.558
0.6 0.55 0.617
0.65 0.55 0.675
0.7 0.55 0.733
0.75 0.55 0.79
0.8 0.55 0.845
0.85 0.55 0.891
0.9 0.55 0.932
0.95 0.55 0.968
1.0 0.55 1.0
0.0 0.6 0.0
0.05 0.6 0.04
0.1 0.6 0.084
0.15 0.6 0.133
0.2 0.6 0.186
0.25 0.6 0.238
0.3 0.6 0.29
0.35 0.6 0.343
0.4 0.6 0.395
0.45 0.6 0.448
0.5 0.6 0.5
0.55 0.6 0.552
0.6 0.6 0.605
0.65 0.6 0.657
0.7 0.6 0.71
0.75 0.6 0.762
0.8 0.6 0.814
0.85 0.6 0.867
0.9 0.6 0.916
0.95 0.6 0.96
1.0 0.6 1.0
0.0 0.65 0.0
0.05 0.65 0.049
0.1 0.65 0.099
0.15 0.65 0.149
0.2 0.65 0.199
0.25 0.65 0.249
0.3 0.65 0.299
0.35 0.65 0.35
0.4 0.65 0.4
0.45 0.65 0.45
0.5 0.65 0.5
0.55 0.65 0.55
0.6 0.65 0.6
0.65 0.65 0.65
0.7 0.65 0.701
0.75 0.65 0.751
0.8 0.65 0.801
0.85 0.65 0.851
0.9 0.65 0.901
0.95 0.65 0.951
1.0 0.65 1.0
0.0 0.7 0.0
0.05 0.7 0.05
0.1 0.7 0.1
0.15 0.7 0.15
0.2 0.7 0.2
0.25 0.7 0.25
0.3 0.7 0.3
0.35 0.7 0.35
0.4 0.7 0.4
0.45 0.7 0.45
0.5 0.7 0.5
0.55 0.7 0.55
0.6 0.7 0.6
0.65 0.7 0.65
0.7 0.7 0.7
0.75 0.7 0.75
0.8 0.7 0.8
0.85 0.7 0.85
0.9 0.7 0.9
0.95 0.7 0.95
1.0 0.7 1.0
0.0 0.75 0.0
0.05 0.75 0.05
0.1 0.75 0.1
0.15 0.75 0.15
0.2 0.75 0.2
0.25 0.75 0.25
0.3 0.75 0.3
0.35 0.75 0.35
0.4 0.75 0.4
0.45 0.75 0.45
0.5 0.75 0.5
0.55 0.75 0.55
0.6 0.75 0.6
0.65 0.75 0.65
0.7 0.75 0.7
0.75 0.75 0.75
0.8 0.75 0.8
0.85 0.75 0.85
0.9 0.75 0.9
0.95 0.75 0.95
1.0 0.75 1.0
0.0 0.8 0.0
0.05 0.8 0.05
0.1 0.8 0.1
0.15 0.8 0.15
0.2 0.8 0.2
0.25 0.8 0.25
0.3 0.8 0.3
0.35 0.8 0.35
0.4 0.8 0.4
0.45 0.8 0.45
0.5 0.8 0.5
0.55 0.8 0.55
0.6 0.8 0.6
0.65 0.8 0.65
0.7 0.8 0.7
0.75 0.8 0.75
0.8 0.8 0.8
0.85 0.8 0.85
0.9 0.8 0.9
0.95 0.8 0.95
1.0 0.8 1.0
0.0 0.85 0.0
0.05 0.85 0.05
0.1 0.85 0.1
0.15 0.85 0.15
0.2 0.85 0.2
0.25 0.85 0.25
0.3 0.85 0.3
0.35 0.85 0.35
0.4 0.85 0.4
0.45 0.85 0.45
0.5 0.85 0.5
0.55 0.85 0.55
0.6 0.85 0.6
0.65 0.85 0.65
0.7 0.85 0.7
0.75 0.85 0.75
0.8 0.85 0.8
0.85 0.85 0.85
0.9 0.85 0.9
0.95 0.85 0.95
1.0 0.85 1.0
0.0 0.9 0.0
0.05 0.9 0.05
0.1 0.9 0.1
0.15 0.9 0.15
0.2 0.9 0.2
0.25 0.9 0.25
0.3 0.9 0.3
0.35 0.9 0.35
0.4 0.9 0.4
0.45 0.9 0.45
0.5 0.9 0.5
0.55 0.9 0.55
0.6 0.9 0.6
0.65 0.9 0.65
0.7 0.9 0.7
0.75 0.9 0.75
0.8 0.9 0.8
0.85 0.9 0.85
0.9 0.9 0.9
0.95 0.9 0.95
1.0 0.9 1.0
0.0 0.95 0.0
0.05 0.95 0.05
0.1 0.95 0.1
0.15 0.95 0.15
0.2 0.95 0.2
0.25 0.95 0.25
0.3 0.95 0.3
0.35 0.95 0.35
0.4 0.95 0.4
0.45 0.95 0.45
0.5 0.95 0.5
0.55 0.95 0.55
0.6 0.95 0.6
0.65 0.95 0.65
0.7 0.95 0.7
0.75 0.95 0.75
0.8 0.95 0.8
0.85 0.95 0.85
0.9 0.95 0.9
0.95 0.95 0.95
1.0 0.95 1.0
0.0 1.0 0.0
0.05 1.0 0.05
0.1 1.0 0.1
0.15 1.0 0.15
0.2 1.0 0.2
0.25 1.0 0.25
0.3 1.0 0.3
0.35 1.0 0.35
0.4 1.0 0.4
0.45 1.0 0.45
0.5 1.0 0.5
0.55 1.0 0.55
0.6 1.0 0.6
0.65 1.0 0.65
0.7 1.0 0.7
0.75 1.0 0.75
0.8 1.0 0.8
0.85 1.0 0.85
0.9 1.0 0.9
0.95 1.0 0.95
1.0 1.0 1.0
};

\end{axis}
\end{tikzpicture}
\label{fig:quota}
}
\subfloat[Fixing regret weight $\gamma=\frac{1}{2}$]{
\begin{tikzpicture}[scale=0.81]
\begin{axis}[
  view={0}{90},
  axis on top,
  axis equal image,
  enlargelimits=false,
  xlabel={$\rho$}, ylabel={$m$},
  xmin=0, xmax=1, ymin=0, ymax=1,
  xtick={0,0.2,...,1}, ytick={0,0.2,...,1}, yticklabels={0,0.4,0.8,1.2,1.6,2},
  scaled ticks=false,
  colorbar, colormap/viridis,
  point meta min=0, point meta max=1,
]

\addplot3[
  surf,
  shader=interp,
  draw=none,
  mesh/rows=21, mesh/cols=21,
  mesh/ordering=rowwise
]
table[
  header=false,
  row sep=newline,
  col sep=space,
  x index=0, y index=1, z index=2
]{%
0.0 0.0 0.0
0.0 0.05 0.0
0.0 0.1 0.0
0.0 0.15 0.0
0.0 0.2 0.0
0.0 0.25 0.0
0.0 0.3 0.0
0.0 0.35 0.0
0.0 0.4 0.0
0.0 0.45 0.0
0.0 0.5 0.0
0.0 0.55 0.0
0.0 0.6 0.0
0.0 0.65 0.0
0.0 0.7 0.0
0.0 0.75 0.0
0.0 0.8 0.0
0.0 0.85 0.0
0.0 0.9 0.0
0.0 0.95 0.0
0.0 1.0 0.0
0.05 0.0 0.0
0.05 0.05 0.004
0.05 0.1 0.008
0.05 0.15 0.012
0.05 0.2 0.015
0.05 0.25 0.017
0.05 0.3 0.02
0.05 0.35 0.022
0.05 0.4 0.023
0.05 0.45 0.025
0.05 0.5 0.026
0.05 0.55 0.028
0.05 0.6 0.029
0.05 0.65 0.031
0.05 0.7 0.032
0.05 0.75 0.033
0.05 0.8 0.035
0.05 0.85 0.036
0.05 0.9 0.038
0.05 0.95 0.039
0.05 1.0 0.041
0.1 0.0 0.0
0.1 0.05 0.01
0.1 0.1 0.018
0.1 0.15 0.025
0.1 0.2 0.032
0.1 0.25 0.037
0.1 0.3 0.042
0.1 0.35 0.046
0.1 0.4 0.049
0.1 0.45 0.053
0.1 0.5 0.056
0.1 0.55 0.059
0.1 0.6 0.062
0.1 0.65 0.065
0.1 0.7 0.069
0.1 0.75 0.072
0.1 0.8 0.075
0.1 0.85 0.079
0.1 0.9 0.082
0.1 0.95 0.085
0.1 1.0 0.089
0.15 0.0 0.0
0.15 0.05 0.016
0.15 0.1 0.029
0.15 0.15 0.04
0.15 0.2 0.05
0.15 0.25 0.059
0.15 0.3 0.066
0.15 0.35 0.073
0.15 0.4 0.078
0.15 0.45 0.084
0.15 0.5 0.089
0.15 0.55 0.094
0.15 0.6 0.1
0.15 0.65 0.105
0.15 0.7 0.111
0.15 0.75 0.117
0.15 0.8 0.123
0.15 0.85 0.129
0.15 0.9 0.135
0.15 0.95 0.142
0.15 1.0 0.148
0.2 0.0 0.0
0.2 0.05 0.022
0.2 0.1 0.041
0.2 0.15 0.057
0.2 0.2 0.071
0.2 0.25 0.083
0.2 0.3 0.094
0.2 0.35 0.103
0.2 0.4 0.111
0.2 0.45 0.119
0.2 0.5 0.127
0.2 0.55 0.135
0.2 0.6 0.144
0.2 0.65 0.153
0.2 0.7 0.162
0.2 0.75 0.172
0.2 0.8 0.182
0.2 0.85 0.192
0.2 0.9 0.202
0.2 0.95 0.213
0.2 1.0 0.223
0.25 0.0 0.0
0.25 0.05 0.03
0.25 0.1 0.055
0.25 0.15 0.077
0.25 0.2 0.095
0.25 0.25 0.111
0.25 0.3 0.125
0.25 0.35 0.137
0.25 0.4 0.148
0.25 0.45 0.16
0.25 0.5 0.172
0.25 0.55 0.184
0.25 0.6 0.197
0.25 0.65 0.211
0.25 0.7 0.225
0.25 0.75 0.24
0.25 0.8 0.253
0.25 0.85 0.266
0.25 0.9 0.278
0.25 0.95 0.29
0.25 1.0 0.301
0.3 0.0 0.0
0.3 0.05 0.039
0.3 0.1 0.071
0.3 0.15 0.099
0.3 0.2 0.122
0.3 0.25 0.143
0.3 0.3 0.161
0.3 0.35 0.177
0.3 0.4 0.192
0.3 0.45 0.208
0.3 0.5 0.225
0.3 0.55 0.244
0.3 0.6 0.263
0.3 0.65 0.28
0.3 0.7 0.297
0.3 0.75 0.312
0.3 0.8 0.327
0.3 0.85 0.341
0.3 0.9 0.354
0.3 0.95 0.366
0.3 1.0 0.378
0.35 0.0 0.0
0.35 0.05 0.049
0.35 0.1 0.089
0.35 0.15 0.124
0.35 0.2 0.154
0.35 0.25 0.179
0.35 0.3 0.202
0.35 0.35 0.222
0.35 0.4 0.244
0.35 0.45 0.267
0.35 0.5 0.29
0.35 0.55 0.312
0.35 0.6 0.333
0.35 0.65 0.351
0.35 0.7 0.369
0.35 0.75 0.385
0.35 0.8 0.401
0.35 0.85 0.415
0.35 0.9 0.429
0.35 0.95 0.442
0.35 1.0 0.454
0.4 0.0 0.0
0.4 0.05 0.06
0.4 0.1 0.111
0.4 0.15 0.154
0.4 0.2 0.19
0.4 0.25 0.222
0.4 0.3 0.25
0.4 0.35 0.278
0.4 0.4 0.307
0.4 0.45 0.334
0.4 0.5 0.359
0.4 0.55 0.382
0.4 0.6 0.404
0.4 0.65 0.423
0.4 0.7 0.441
0.4 0.75 0.458
0.4 0.8 0.474
0.4 0.85 0.488
0.4 0.9 0.502
0.4 0.95 0.515
0.4 1.0 0.527
0.45 0.0 0.0
0.45 0.05 0.074
0.45 0.1 0.136
0.45 0.15 0.189
0.45 0.2 0.234
0.45 0.25 0.273
0.45 0.3 0.308
0.45 0.35 0.343
0.45 0.4 0.375
0.45 0.45 0.403
0.45 0.5 0.429
0.45 0.55 0.453
0.45 0.6 0.475
0.45 0.65 0.495
0.45 0.7 0.513
0.45 0.75 0.53
0.45 0.8 0.545
0.45 0.85 0.56
0.45 0.9 0.574
0.45 0.95 0.586
0.45 1.0 0.598
0.5 0.0 0.0
0.5 0.05 0.091
0.5 0.1 0.166
0.5 0.15 0.231
0.5 0.2 0.286
0.5 0.25 0.333
0.5 0.3 0.375
0.5 0.35 0.412
0.5 0.4 0.444
0.5 0.45 0.474
0.5 0.5 0.5
0.5 0.55 0.524
0.5 0.6 0.545
0.5 0.65 0.565
0.5 0.7 0.583
0.5 0.75 0.6
0.5 0.8 0.615
0.5 0.85 0.629
0.5 0.9 0.643
0.5 0.95 0.655
0.5 1.0 0.666
0.55 0.0 0.0
0.55 0.05 0.111
0.55 0.1 0.204
0.55 0.15 0.282
0.55 0.2 0.349
0.55 0.25 0.402
0.55 0.3 0.445
0.55 0.35 0.482
0.55 0.4 0.515
0.55 0.45 0.544
0.55 0.5 0.571
0.55 0.55 0.594
0.55 0.6 0.615
0.55 0.65 0.634
0.55 0.7 0.652
0.55 0.75 0.668
0.55 0.8 0.683
0.55 0.85 0.696
0.55 0.9 0.708
0.55 0.95 0.718
0.55 1.0 0.727
0.6 0.0 0.0
0.6 0.05 0.136
0.6 0.1 0.25
0.6 0.15 0.346
0.6 0.2 0.422
0.6 0.25 0.473
0.6 0.3 0.516
0.6 0.35 0.554
0.6 0.4 0.587
0.6 0.45 0.615
0.6 0.5 0.641
0.6 0.55 0.663
0.6 0.6 0.684
0.6 0.65 0.702
0.6 0.7 0.718
0.6 0.75 0.732
0.6 0.8 0.743
0.6 0.85 0.753
0.6 0.9 0.762
0.6 0.95 0.77
0.6 1.0 0.778
0.65 0.0 0.0
0.65 0.05 0.169
0.65 0.1 0.31
0.65 0.15 0.426
0.65 0.2 0.495
0.65 0.25 0.546
0.65 0.3 0.589
0.65 0.35 0.626
0.65 0.4 0.658
0.65 0.45 0.685
0.65 0.5 0.71
0.65 0.55 0.731
0.65 0.6 0.749
0.65 0.65 0.763
0.65 0.7 0.775
0.65 0.75 0.784
0.65 0.8 0.793
0.65 0.85 0.8
0.65 0.9 0.808
0.65 0.95 0.814
0.65 1.0 0.82
0.7 0.0 0.0
0.7 0.05 0.212
0.7 0.1 0.389
0.7 0.15 0.508
0.7 0.2 0.571
0.7 0.25 0.622
0.7 0.3 0.664
0.7 0.35 0.699
0.7 0.4 0.729
0.7 0.45 0.754
0.7 0.5 0.775
0.7 0.55 0.79
0.7 0.6 0.803
0.7 0.65 0.813
0.7 0.7 0.821
0.7 0.75 0.828
0.7 0.8 0.835
0.7 0.85 0.841
0.7 0.9 0.847
0.7 0.95 0.852
0.7 1.0 0.857
0.75 0.0 0.0
0.75 0.05 0.273
0.75 0.1 0.49
0.75 0.15 0.59
0.75 0.2 0.651
0.75 0.25 0.699
0.75 0.3 0.738
0.75 0.35 0.771
0.75 0.4 0.796
0.75 0.45 0.815
0.75 0.5 0.828
0.75 0.55 0.839
0.75 0.6 0.848
0.75 0.65 0.855
0.75 0.7 0.861
0.75 0.75 0.867
0.75 0.8 0.872
0.75 0.85 0.877
0.75 0.9 0.881
0.75 0.95 0.885
0.75 1.0 0.889
0.8 0.0 0.0
0.8 0.05 0.364
0.8 0.1 0.592
0.8 0.15 0.676
0.8 0.2 0.733
0.8 0.25 0.777
0.8 0.3 0.812
0.8 0.35 0.835
0.8 0.4 0.852
0.8 0.45 0.864
0.8 0.5 0.873
0.8 0.55 0.88
0.8 0.6 0.886
0.8 0.65 0.891
0.8 0.7 0.896
0.8 0.75 0.9
0.8 0.8 0.904
0.8 0.85 0.907
0.8 0.9 0.911
0.8 0.95 0.914
0.8 1.0 0.917
0.85 0.0 0.0
0.85 0.05 0.509
0.85 0.1 0.694
0.85 0.15 0.765
0.85 0.2 0.816
0.85 0.25 0.852
0.85 0.3 0.873
0.85 0.35 0.887
0.85 0.4 0.897
0.85 0.45 0.905
0.85 0.5 0.911
0.85 0.55 0.916
0.85 0.6 0.92
0.85 0.65 0.923
0.85 0.7 0.926
0.85 0.75 0.929
0.85 0.8 0.932
0.85 0.85 0.935
0.85 0.9 0.937
0.85 0.95 0.939
0.85 1.0 0.941
0.9 0.0 0.0
0.9 0.05 0.673
0.9 0.1 0.799
0.9 0.15 0.859
0.9 0.2 0.893
0.9 0.25 0.911
0.9 0.3 0.923
0.9 0.35 0.931
0.9 0.4 0.936
0.9 0.45 0.941
0.9 0.5 0.944
0.9 0.55 0.947
0.9 0.6 0.949
0.9 0.65 0.952
0.9 0.7 0.954
0.9 0.75 0.956
0.9 0.8 0.957
0.9 0.85 0.959
0.9 0.9 0.96
0.9 0.95 0.962
0.9 1.0 0.963
0.95 0.0 0.0
0.95 0.05 0.836
0.95 0.1 0.911
0.95 0.15 0.939
0.95 0.2 0.952
0.95 0.25 0.959
0.95 0.3 0.964
0.95 0.35 0.968
0.95 0.4 0.97
0.95 0.45 0.972
0.95 0.5 0.974
0.95 0.55 0.975
0.95 0.6 0.976
0.95 0.65 0.977
0.95 0.7 0.978
0.95 0.75 0.979
0.95 0.8 0.98
0.95 0.85 0.981
0.95 0.9 0.981
0.95 0.95 0.982
0.95 1.0 0.983
1.0 0.0 0.0
1.0 0.05 1.0
1.0 0.1 1.0
1.0 0.15 1.0
1.0 0.2 1.0
1.0 0.25 1.0
1.0 0.3 1.0
1.0 0.35 1.0
1.0 0.4 1.0
1.0 0.45 1.0
1.0 0.5 1.0
1.0 0.55 1.0
1.0 0.6 1.0
1.0 0.65 1.0
1.0 0.7 1.0
1.0 0.75 1.0
1.0 0.8 1.0
1.0 0.85 1.0
1.0 0.9 1.0
1.0 0.95 1.0
1.0 1.0 1.0
};

\end{axis}
\end{tikzpicture}
\label{fig:quota_fixed_gamma}
}
\caption{\label{fig:quota_summary} 
The above figures illustrate the optimal quota $q$ for action $a=1$ in the binary-action and binary-state environment.
}
\label{fig:rev_example}
\end{figure}

In \Cref{fig:quota_fixed_gamma}, we fix the weight on regret to $\gamma=1/2$ and vary the prior $\rho$ as well as the payoff benefit $m$ from action $a=1$. The optimal quota $q$ for action $a=1$ is now increasing in both the prior $\rho$ and the benefit $m$ that accrues in state $\omega=1$.

\subsection{Optimal Binary Quota Rule}

We now extend the analysis to the environment with many states. Still, in a binary-action model, the receiver's preference for each state is summarized by the utility difference between action $0$ and action $1$. 
Therefore, it is without loss of generality to assume that 
$\states \subseteq [-1,1]$ and take: 
\begin{align*}
\util(\action,\state) = \begin{cases}
\state, & \action = 1;\\
0, & \action = 0.
\end{cases}
\end{align*}
Let $\states_0 \subseteq \states$ be the set of states such that $\state < 0$
and $\states_1 \subseteq \states$ be the set of states such that $\state > 0$. We assume that there exists $\state,\state'$ in the support of $\prior$ such that $\state\in\states_0$ and $\state'\in\states_1$ (otherwise the information is without value). We assume that the least informative information structure $\underline{\info}$ is given by zero information, thus the prior distribution.

\begin{definition}[Binary Monotone Partition]
\label{def:binary_partition}
\,\\
An information structure $\info$ is a \emph{binary monotone partition}
if the signal space is binary $\{0,1\}$, 
and letting $\hat{\states}_s$ be the set of states $\state$ for which the probability of sending the signal $s$ conditional on $\state$ is strictly positive,
we have $\state_0\leq\state_1$ for all $\state_0\in\hat{\states}_0$ and $\state_1\in\hat{\states}_1$. 
\end{definition}

Intuitively, a binary monotone partition information structure sends signal $0$ for all low states and signal~$1$ for all high states. 
This is illustrated in \Cref{fig:binary_partition} for the state space $\states = [-1,1]$. 
In the case where the state distribution is discrete, 
there may exist a middle state such that the information structure randomizes the signals for this state. 

\begin{lemma}[Worst-Case Information Structure]
\label{lem:worst_info_quota}
\,\\
For any $\gamma\in[0,1)$ and any quota $\quota$, 
there exists a binary monotone partition information structure that maximizes the receiver's regret. 
\end{lemma}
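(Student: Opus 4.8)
The plan is to show that given any quota $\quota$ and any signal structure $\info$, we can find a binary monotone partition $\info'$ with $\generalRegret(\quota,\info') \geq \generalRegret(\quota,\info)$. Since $\generalRegret(\quota,\info) = \gamma \cdot \opt(\info) - (1-\gamma)\cdot U(\quota,\info)$, it suffices to find $\info'$ that (i) weakly increases $\opt(\cdot)$ and (ii) weakly decreases $U(\quota,\cdot)$. For (i), recall that in the binary-action model $u(\mu, a^*(\mu)) = \max\{0, \mu\}$ where $\mu$ now denotes the posterior mean in $[-1,1]$; hence $\opt(\info) = \int \max\{0,\mu\}\dd\info(\mu)$, which is a convex function of the posterior mean. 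So $\opt(\cdot)$ depends only on the distribution of posterior means, and it is maximized (subject to Bayesian consistency, i.e. mean $\priorMean$) by spreading mass out — in particular the full-information structure, and more relevantly any binary monotone partition, weakly dominates coarser structures in the convex order on posterior means. The key reduction is therefore: replace $\info$ by the distribution of its posterior means (this does not change $\opt$ and, since $U(\quota,\cdot)$ only depends on the posterior-mean distribution too because $u$ is linear in the posterior, does not change $U$ either), so we may assume $\states = [-1,1]$ and $\info$ is a mean-$\priorMean$ distribution on $[-1,1]$.

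Next I would analyze $U(\quota,\info)$ via the optimal-transport formulation given in the text: $U(\quota,\info) = \max_F \int \mu \cdot \indicate{a=1}\dd F(\mu,a)$ subject to the action-marginal being $(\,1-\quota,\,\quota\,)$ and the posterior-marginal being $\info$. This is a one-dimensional transport problem: the receiver must select a set of posteriors (a fractional set of total $\info$-mass $\quota$) to assign to action $1$, and wants to pick the posteriors with the highest means. So $U(\quota,\info) = \int_{1-\quota}^{1} G_\info^{-1}(t)\dd t$ where $G_\info$ is the CDF of the posterior mean under $\info$ — i.e. the receiver assigns action $1$ to the top $\quota$-quantile of posteriors. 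Now the two objectives pull in tension: to raise $\opt(\info)$ we want to spread $\info$ out, but to lower $U(\quota,\info)$ we want the top $\quota$-quantile to have low average. The resolution is a "bang-bang" / mean-preserving argument: take any $\info$; I claim we can perform mean-preserving spreads that (weakly) increase $\opt$ while keeping $U(\quota,\cdot)$ controlled, pushing $\info$ toward a two-point-like structure, and then toward a monotone partition. Concretely, one shows the worst case can be taken to have the top $\quota$-quantile supported on a single point (or on states all above a cutoff) and the bottom $(1-\quota)$-quantile handled so that $\opt$ does not drop — this is exactly the picture in Figure~\ref{fig:bias}: the partition cutoff $\hat\state$ is chosen so $\prob{\state \geq \hat\state} = \quota$ up to the biased-error adjustments.

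The cleanest route is an exchange argument on $F$: start from an optimal transport plan $F$ for a worst-case $\info$, and argue that if $\info$ is not already a binary monotone partition, there exist two posteriors $\mu < \nu$ with $\mu$ assigned (partly) to action $1$ and $\nu$ assigned (partly) to action $0$ — a "non-monotone" pair — or else $\info$ has more than the minimal spread; then a local perturbation (either swapping the action assignments, or performing a mean-preserving spread that merges/splits posteriors across the cutoff) weakly increases $\gamma\opt - (1-\gamma)U$. Iterating, together with a compactness argument (the set of mean-$\priorMean$ structures on the compact space $\allPosterior$ is compact in the vague topology and, by \cref{lem:LipschitzContinuity}, $\generalRegret(\quota,\cdot)$ is continuous, so a maximizer exists), yields a maximizer that is a binary monotone partition. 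I expect the main obstacle to be handling the "middle type" delicately — i.e. showing that the optimal partition may need to randomize the signal for exactly one state on the boundary (the $\hat\state$ in Figure~\ref{fig:binary_partition}), and that one such randomizing type always suffices; this is where the argument must be careful rather than merely invoking convexity, since a pure threshold may not match the quota $\quota$ exactly and may not match the prior mean constraint simultaneously.
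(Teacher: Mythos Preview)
Your overall plan---reduce to posterior means, then perturb toward a two-point structure---is reasonable in spirit, but the perturbation step has a genuine gap. The ``exchange argument on $F$'' does not get off the ground: since $U(\quota,\info)$ is computed via the \emph{optimal} transport, and that transport assigns action~$1$ exactly to the top $\quota$-quantile of posteriors (as you yourself observe), the plan $F$ is already monotone---there is no pair $\mu<\nu$ with $\mu$ mapped to~$1$ and $\nu$ to~$0$. So the exchange you propose never applies. Your fallback, a generic mean-preserving spread of $\info$, runs into exactly the tension you identify but do not resolve: from the representation $U(\quota,\info)=\min_c\bigl[\quota c+\expect[\info]{(X-c)^+}\bigr]$ one sees that $U(\quota,\cdot)$ is weakly increasing in the convex order, so a spread moves $\opt$ and $U$ in the \emph{same} direction and regret need not rise. (A related slip: after passing to posterior means the feasibility constraint is not merely that $\info$ has mean $\priorMean$, but that $\info$ is a mean-preserving contraction of $\prior$; without this you cannot conclude the maximizer is a monotone partition of the actual state space.)

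The paper sidesteps this by working directly with the state--signal joint distribution $\chi(\theta,s)$. After pooling posteriors by optimal action to obtain a binary structure with $m_0<\indiff<m_1$, suppose it is not monotone, so $\bar\theta_0>\underline\theta_1$. In the right-biased case $\quota\geq p_1$ (the other is symmetric), the paper first relabels states between signals so that $\underline\theta_1\leq m_0$ while keeping $(m_0,m_1,p_0,p_1)$ fixed, and then shifts a small mass from signal~$1$ to signal~$0$ whose conditional mean is \emph{exactly} $m_0$. This leaves $m_0$ unchanged, hence $U(\quota,\cdot)=p_1m_1+(\quota-p_1)m_0$ is unchanged, while $\opt=p_1m_1$ strictly increases (mass with mean $m_0<0$ has been removed from the signal-$1$ bucket). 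So regret strictly rises for any $\gamma\in(0,1)$, contradicting optimality of a non-monotone structure; $\gamma=0$ is handled separately since then no information is the worst case. The idea missing from your proposal is precisely this tuned perturbation---one that holds $U$ fixed while strictly improving $\opt$---which a generic spread does not deliver.
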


In a binary action model, the receiver's regret in any binary monotone partition information structure can be classified into two categories, 
the left-biased error and the right-biased error. 
The classification depends on whether $p_1$, the probability of signal $s=1$, is greater than the quota $\quota$.
Specifically, for any quota rule $q=\Pr(a=1)$, the regret given by an information structure with signal probabilities $p_1$
is called a left-biased error if $\quota\leq p_1$
and a right-biased error if $\quota\geq p_1$. 
This is illustrated in \Cref{fig:bias}. 
\\
In the left-biased error structure, $\quota\leq p_1$
, and therefore the receiver has to take action $0$ with a higher probability
than in the second-best solution. Thus, he takes action $0$ for sure when observing the signal $0$, while mixing his action between $0$ and $1$ when observing the signal $1$.
\\
In the right-biased error structure, $\quota\geq p_1$, and therefore the receiver has to take action $1$ with a higher probability
than in the second-best solution. Thus, he takes action $1$ for sure when observing the signal $1$, while mixing his action between $0$ and $1$ when observing the signal $0$.

\begin{figure}[t]
\centering
\subfloat[binary monotone partition]{
\begin{tikzpicture}
\draw (0,0) -- (10,0);
\draw (0,0) -- (0,0.1);
\draw (10,0) -- (10,0.1);
\node at (0,-0.5) {$-1$};
\node at (10,-0.5) {$1$};

\node at (3,-0.2) {$\underbrace{\,\,\,\,\,\,\,\,\,\,\,\,\,\,\,\,\,\,\,\,\,\,\,\,\,\,\,\,\,\,\,\,\,\,\,\,\,\,\,\,\,\,\,\,\,\,\,\,\,\,\,\,\,\,\,\,\,\,\,\,\,\,\,\,\,\,\,\,\,\,\,\,\,\,\,\,\,\,\,\,\,\,\,\,\,}$};
\node at (3.07,-0.7) {$\hat{\states}_0$};

\node at (8,-0.2) {$\underbrace{\,\,\,\,\,\,\,\,\,\,\,\,\,\,\,\,\,\,\,\,\,\,\,\,\,\,\,\,\,\,\,\,\,\,\,\,\,\,\,\,\,\,\,\,\,\,\,\,\,\,\,\,\,\,\,\,\,}$};
\node at (8.07,-0.7) {$\hat{\states}_1$};

\end{tikzpicture}
\label{fig:binary_partition}
}\\
\subfloat[left-biased error]{
\begin{tikzpicture}
\draw (0,0) -- (6,0);
\draw (0,0) -- (0,0.1);
\draw (3,0) -- (3,0.1);
\draw (6,0) -- (6,0.1);

\node at (0,-0.5) {$-1$};
\node at (6,-0.5) {$1$};

\node at (3,-0.5) {$\hat{\state}$};

\node at (4,-0.2) {$\underbrace{\,\,\,\,\,\,\,\,\,\,\,\,\,\,\,\,\,\,\,\,\,\,\,\,\,\,\,\,\,\,\,\,\,\,\,\,\,\,\,\,\,\,\,\,\,\,\,\,\,\,\,\,\,\,\,\,\,}$};
\node at (4.07,-0.7) {$\hat{\states}_1$};

\node at (4.5,0.3) {$\overbrace{\,\,\,\,\,\,\,\,\,\,\,\,\,\,\,\,\,\,\,\,\,\,\,\,\,\,\,\,\,\,\,\,\,\,\,\,\,\,\,\,\,\,}$};
\node at (4.5,0.9) {$\quota = \prob{\state\geq \hat{\state}}$};

\node at (1,-0.2) {$\underbrace{\,\,\,\,\,\,\,\,\,\,\,\,\,\,\,\,\,\,\,\,\,\,\,\,\,\,\,\,}$};
\node at (1.07,-0.7) {$\hat{\states}_0$};

\end{tikzpicture}
\label{fig:left_bias}
}
\subfloat[right-biased error]{
\begin{tikzpicture}
\draw (0,0) -- (6,0);
\draw (0,0) -- (0,0.1);
\draw (3,0) -- (3,0.1);
\draw (6,0) -- (6,0.1);

\node at (0,-0.5) {$-1$};
\node at (6,-0.5) {$1$};

\node at (3,-0.5) {$\hat{\state}$};

\node at (2,-0.2) {$\underbrace{\,\,\,\,\,\,\,\,\,\,\,\,\,\,\,\,\,\,\,\,\,\,\,\,\,\,\,\,\,\,\,\,\,\,\,\,\,\,\,\,\,\,\,\,\,\,\,\,\,\,\,\,\,\,\,\,\,}$};
\node at (2.07,-0.7) {$\hat{\states}_0$};

\node at (4.5,0.3) {$\overbrace{\,\,\,\,\,\,\,\,\,\,\,\,\,\,\,\,\,\,\,\,\,\,\,\,\,\,\,\,\,\,\,\,\,\,\,\,\,\,\,\,\,\,}$};
\node at (4.5,0.9) {$\quota = \prob{\state\geq \hat{\state}}$};

\node at (5,-0.2) {$\underbrace{\,\,\,\,\,\,\,\,\,\,\,\,\,\,\,\,\,\,\,\,\,\,\,\,\,\,\,\,}$};
\node at (5.07,-0.7) {$\hat{\states}_1$};

\end{tikzpicture}
\label{fig:right_bias}
}
\caption{An illustration of binary monotone partition with two types of errors.
}
\label{fig:bias}
\end{figure}

For any quota rule $\quota$, let $\allInfo_l(\quota)$ be the set of binary monotone partition information structures with left-biased errors. 
The worst-case left-biased error is
\begin{align*}
R_\gamma(\quota,\pi_l) \triangleq \max_{\info\in\allInfo_l(\quota)} \generalRegret(\quota,\info).
\end{align*}
Similarly, $\allInfo_r(\quota)$ is the set of binary monotone partition information structures with right-biased errors, 
and the worst-case right-biased error is
\begin{align*}
R_\gamma(q,\pi_r) \triangleq \max_{\info\in\allInfo_r(\quota)} \generalRegret(\quota,\info).
\end{align*}

Let $\priorMean$ be the prior mean and $\indiff$ be the posterior mean such that the receiver is indifferent between two actions. We have the following result.
\begin{proposition}[Optimal Quota Rule]
\label{prop:optimal_quota_binary}
\,\\
For any $\gamma\in(0,1)$ and any prior $\prior$, the left-biased error is weakly decreasing in quota $\quota$ and the right-biased error is weakly increasing in quota $\quota$, 
and the inequalities are strict if $\priorMean\neq\indiff$.
The optimal quota rule $\quota$ satisfies 
$R_\gamma(\quota,\pi_l) = R_\gamma(\quota,\pi_r)$.
\end{proposition}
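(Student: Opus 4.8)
The plan is to work directly with the two closed-form expressions for $\Lerror(\quota)$ and $\Rerror(\quota)$ derived just before the proposition, treating each as a maximum over the parameter $p_0$ in the interval $[z_0,z_1]$ subject to an additional one-sided constraint involving $\quota$. First I would handle the monotonicity of $\Lerror(\quota)$. Fix $p_0$ and examine the integrand
\begin{align*}
\rbr{ \gamma - (1-\gamma)\cdot\frac{1-p_0-q }{1-p_0}} \cdot \int_{p_0}^1 F^{-1}(p) \dd p.
\end{align*}
For $p_0\in[z_0,z_1]$ the tail integral $\int_{p_0}^1 F^{-1}(p)\dd p$ is nonnegative (by definition of $z_0$), and the bracketed coefficient is increasing in $\quota$; hence for each fixed $p_0$ the objective is weakly increasing in $\quota$ — wait, this is the wrong direction, so I need to be careful about signs. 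Re-examining: as $\quota$ increases, $1-p_0-\quota$ decreases, so $-(1-\gamma)\frac{1-p_0-\quota}{1-p_0}$ increases, so the coefficient increases. That would make $\Lerror$ increasing, contradicting the claim. The resolution is that the feasible set $\{p_0 : p_0\le 1-\quota\}$ shrinks as $\quota$ grows, and one must check that the pointwise effect is dominated — or, more likely, I have the left/right labels reversed relative to the sign of the coefficient. The correct approach is therefore: (i) verify on which region the coefficient is actually negative (it is, precisely when $\quota$ is large enough that the quota forces suboptimal action $0$, which is the defining feature of a left-biased error), and (ii) show that on that region increasing $\quota$ makes the (negative) coefficient more negative in absolute value only where the tail integral vanishes, while shrinking the feasible set removes exactly the maximizing $p_0$'s. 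The cleanest version is an envelope argument: let $p_0^*(\quota)$ attain $\Lerror(\quota)$; for $\quota'>\quota$, either $p_0^*(\quota)$ is still feasible, in which case I compare objective values at a common $p_0$, or it is not, in which case the constraint $p_0\le 1-\quota'$ binds and I argue monotonicity along the boundary. I expect the honest statement to be that along the relevant branch the coefficient, evaluated where the constraint is active, is decreasing in $\quota$, giving the claim; I would chase the algebra to pin down which branch that is.

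For $\Rerror(\quota)$ the structure is parallel but the feasible set $\{p_0 : p_0\ge 1-\quota\}$ now expands as $\quota$ increases, and the integrand
\begin{align*}
(2\gamma-1) \cdot \int_{p_0}^1 F^{-1}(p) \dd p - (1-\gamma) \cdot \frac{p_0-1+q}{p_0} \cdot \int_0^{p_0} F^{-1}(p) \dd p
\end{align*}
has its second term, which carries a factor $\int_0^{p_0}F^{-1}(p)\dd p\le 0$ for $p_0\in[z_0,z_1]$ (by definition of $z_1$), multiplied by $-(1-\gamma)\frac{p_0-1+\quota}{p_0}$. As $\quota$ grows, $p_0-1+\quota$ grows, so that factor becomes more negative, and since it multiplies a nonpositive quantity the whole second term becomes more positive; combined with the enlarging feasible set, both effects push $\Rerror$ up, so $\Rerror$ is weakly increasing in $\quota$. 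The strictness under $\priorMean\neq\indiff$ comes from noting that $\priorMean=\indiff$ is exactly the degenerate case where the relevant tail/head integrals vanish at the maximizer (so the coefficient multiplies zero and the derivative in $\quota$ is zero); when $\priorMean\neq\indiff$ the maximizing $p_0$ sits strictly inside the region where $\int_{p_0}^1 F^{-1}\dd p>0$ or $\int_0^{p_0}F^{-1}\dd p<0$ strictly, making the monotonicity strict. I would make this precise by locating the unconstrained maximizers of the two integrands and checking they are interior.

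Finally, the characterization $\Lerror(\quota)=\Rerror(\quota)$ at the optimum. By \cref{lem:worst_info_quota} the worst-case signal structure is a binary monotone partition, so $\generalRegret(\quota)=\max\{\Lerror(\quota),\Rerror(\quota)\}$. At $\quota=0$ (action $1$ never taken) there is no room for a left-biased error, so $\Lerror(0)$ is at its minimum and in fact $\generalRegret(0)=\Rerror(0)$, which I expect to be large; symmetrically at $\quota=1$, $\generalRegret(1)=\Lerror(1)$ is large. Since $\Lerror$ is weakly decreasing, continuous (from \cref{lem:LipschitzContinuity}), and $\Rerror$ is weakly increasing and continuous, the two curves cross, and at any crossing point $\max\{\Lerror,\Rerror\}$ is minimized: to the left of it $\Lerror>\Rerror$ and $\Lerror$ is still falling, to the right $\Rerror>\Lerror$ and $\Rerror$ is rising, so moving toward the crossing strictly helps whenever $\priorMean\neq\indiff$ (strict monotonicity), and any optimum must therefore satisfy $\Lerror(\quota)=\Rerror(\quota)$. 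In the knife-edge case $\priorMean=\indiff$ the curves are flat and the equality still holds trivially at the common value, or one checks directly that the unique quota is determined by the prior mean. The main obstacle I anticipate is entirely in the first paragraph: correctly signing the derivative of the left-biased-error integrand in $\quota$ on the active region and reconciling the pointwise effect with the moving constraint set, since a naive reading gives the wrong sign; once the envelope/boundary bookkeeping is done correctly the rest is a routine continuity-and-intermediate-value argument.
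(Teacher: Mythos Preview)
Your trouble with the left-biased sign is not a conceptual gap but the result of what appears to be a typo in the paper's displayed formula. Recompute from the definition: in a left-biased partition ($q\le 1-p_0$) the quota rule puts all of its action-$1$ mass on signal $1$, so $U(q,\pi)=\tfrac{q}{1-p_0}\int_{p_0}^{1}F^{-1}(p)\dd p$ and hence
\[
\generalRegret(q,\pi)=\Bigl(\gamma-(1-\gamma)\,\tfrac{q}{1-p_0}\Bigr)\int_{p_0}^{1}F^{-1}(p)\dd p,
\]
which is pointwise \emph{decreasing} in $q$ on $[z_0,z_1]$. (Sanity check: the paper's printed coefficient would make the left-biased regret vanish at $q=0$, which is plainly wrong since at $q=0$ the receiver forgoes the entire value $\int_{p_0}^1 F^{-1}$.) With the corrected formula, the pointwise effect and the shrinking feasible set $\{p_0\le 1-q\}$ both push $\Lerror$ down as $q$ grows, exactly paralleling your (correct) analysis of $\Rerror$; no envelope or boundary gymnastics are needed.

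The paper's own proof sidesteps the formulas entirely. It fixes the maximizing partition $\hat\pi$ for $\Lerror(q)$, observes that $\hat\pi$ remains left-biased for any $q'<q$ (feasible-set inclusion), and argues directly from the posterior means that lowering the action-$1$ quota lowers $U(\cdot,\hat\pi)$ whenever the high-signal mean satisfies $m_1>\indiff$; the strictness clause is handled by isolating the degenerate case $m_0=m_1=\indiff$, which forces $\priorMean=\indiff$. This is precisely the pointwise-monotonicity-plus-feasible-set-inclusion idea you intended, expressed in primitives rather than through the integral formula, so once the formula is fixed your route and the paper's coincide.

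One labeling slip in your final paragraph: at $q=0$ every binary monotone partition is left-biased (since $0\le p_1$ trivially), so $\Lerror(0)$ is at its \emph{maximum}, not its minimum, while only the degenerate $p_1=0$ partition is right-biased, making $\Rerror(0)$ the small one; the picture at $q=1$ is the mirror image. With that swap your intermediate-value argument for $\Lerror(q^*)=\Rerror(q^*)$ is correct and matches the paper's (terser) concluding sentence.
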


\begin{corollary}[Interior Optimal Quota Rule]
\label{cor:unique_interior}
\,\\
 The optimal quota is in the interior for $\gamma\in(0,1)$. The optimal quota is unique if $\priorMean\neq\indiff$. 
\end{corollary}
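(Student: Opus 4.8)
The strategy is to read off both claims from Proposition \ref{prop:optimal_quota_binary} together with the explicit formulas for $\Lerror(\quota)$ and $\Rerror(\quota)$ displayed just above it. For the interiority claim, I would evaluate the two error functions at the boundary quotas $\quota = 0$ and $\quota = 1$. At $\quota = 0$ the right-biased feasibility constraint $p_0 \geq 1-q = 1$ forces $p_0 = 1$ (the no-information structure), so $\Rerror(0) = (2\gamma-1)\cdot 0 - (1-\gamma)\cdot 0\cdot\int_0^1 F^{-1} = 0$ (using $\int_1^1 F^{-1}\dd p = 0$), whereas the left-biased constraint $p_0 \le 1$ is slack and, since by assumption the prior puts mass on both $\states_0$ and $\states_1$, one can choose an interior $p_0\in[z_0,z_1]$ with $\int_{p_0}^1 F^{-1}(p)\dd p > 0$ and coefficient $\gamma - (1-\gamma)\frac{1-p_0}{1-p_0} = \gamma - (1-\gamma) = 2\gamma-1$; hmm, that is negative for small $\gamma$, so more carefully: at $q=0$ the coefficient is $\gamma - (1-\gamma)\frac{1-p_0-0}{1-p_0}=2\gamma-1<0$, giving $\Lerror(0)$ possibly $0$ as well, so I instead compare the \emph{slopes}. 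The cleaner route: differentiate. By Proposition \ref{prop:optimal_quota_binary}, $\Lerror$ is (strictly, when $\priorMean\neq\indiff$) decreasing and $\Rerror$ (strictly) increasing in $\quota$; check that $\Lerror(0) > \Rerror(0)$ and $\Lerror(1) < \Rerror(1)$. Indeed $\Rerror(0)=0$ as computed, while $\Lerror(0)>0$ because for $\gamma\in(0,1)$ one can take $p_0 = z_0$, where $\int_{z_0}^1 F^{-1}=0$ is avoided by taking $p_0$ slightly below... — the sign bookkeeping here is the one fiddly point, but the upshot is that at $\quota=0$ the left error strictly dominates and at $\quota = 1$ the right error strictly dominates, so by continuity (Lemma \ref{lem:LipschitzContinuity}) and monotonicity the crossing point $\Lerror(\quota)=\Rerror(\quota)$ guaranteed by Proposition \ref{prop:optimal_quota_binary} lies strictly inside $(0,1)$.

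For uniqueness under $\priorMean \neq \indiff$: Proposition \ref{prop:optimal_quota_binary} states the optimal quota is characterized by $\Lerror(\quota) = \Rerror(\quota)$, and under $\priorMean\neq\indiff$ it also asserts $\Lerror$ is \emph{strictly} decreasing and $\Rerror$ \emph{strictly} increasing. Hence $\quota\mapsto \Lerror(\quota) - \Rerror(\quota)$ is strictly decreasing, so it has at most one zero; combined with the existence of a crossing in $(0,1)$ from the previous paragraph, the optimal quota is unique. I would also note that $\generalRegret(\quota) = \max\{\Lerror(\quota),\Rerror(\quota)\}$ (since by Lemma \ref{lem:worst_info_quota} the worst case is a binary monotone partition, which is either left- or right-biased), so the quota minimizing $\generalRegret$ is exactly the one equalizing the two strictly monotone branches — the standard ``$\max$ of an increasing and a decreasing function is minimized where they cross'' argument — which re-derives both existence and uniqueness of the minimizer cleanly.

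The main obstacle I anticipate is the boundary sign analysis in the first paragraph: verifying $\Lerror(0) > 0 = \Rerror(0)$ and $\Rerror(1) > \Lerror(1)$ requires care with the definitions of $z_0, z_1$ and the fact that the bracketed coefficients $\gamma - (1-\gamma)\frac{1-p_0-q}{1-p_0}$ and $(2\gamma-1)$ can be negative when $\gamma < \tfrac12$. The resolution is that the relevant maxima are taken over $p_0 \in [z_0, z_1]$ where $\int_{p_0}^1 F^{-1}(p)\dd p$ and $\int_0^{p_0} F^{-1}(p)\dd p$ have controlled signs (nonnegative and nonpositive respectively, by definition of $z_0,z_1$), so each displayed objective is a product of terms whose sign I can pin down; once that is done the inequalities at the two endpoints follow, and everything else is immediate from Proposition \ref{prop:optimal_quota_binary}. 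If the endpoint comparison turns out to be delicate for very small $\gamma$, an alternative is to argue directly that $\quota = 0$ and $\quota = 1$ are never optimal because a marginal interior perturbation strictly lowers $\max\{\Lerror,\Rerror\}$, using strict monotonicity of the dominant branch.
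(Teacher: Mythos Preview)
Your overall strategy --- establish $\Lerror(0)>\Rerror(0)$ and $\Lerror(1)<\Rerror(1)$, then invoke the monotonicity in Proposition~\ref{prop:optimal_quota_binary} for both interiority and uniqueness --- is exactly the paper's route, and your uniqueness argument is correct and identical to the paper's.

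The gap is in the endpoint comparison. First, your computation $\Rerror(0)=0$ is not correct in general: when $\priorMean>0$ the no-information structure gives $\opt(\noInfo)=\priorMean$, so $\Rerror(0)=\gamma\,\priorMean>0$. (In that case $p_0=1$ lies outside $[z_0,z_1]$, so the displayed formula simply does not apply; you are implicitly relying on it where it is not valid.) Second, plugging $\quota=0$ into the left-biased formula yields coefficient $2\gamma-1$, which is nonpositive for $\gamma\le\tfrac12$, and on the range $p_0\in[z_0,z_1]$ the integral $\int_{p_0}^1 F^{-1}$ is nonnegative --- so the formula route gives you $\Lerror(0)\le 0$ there, which is clearly not what you want. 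Your fallback (``marginal interior perturbation strictly lowers $\max\{\Lerror,\Rerror\}$'') needs \emph{strict} monotonicity of the dominant branch, but Proposition~\ref{prop:optimal_quota_binary} only guarantees this when $\priorMean\neq\indiff$; the corollary asserts interiority for all $\gamma\in(0,1)$ without that hypothesis.

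The clean observation you are missing, and which the paper uses, is that at $\quota=0$ the quota rule takes action $0$ regardless of the signal, so $U(0,\info)$ is \emph{the same} for every $\info$. Hence comparing $\Lerror(0)$ and $\Rerror(0)$ reduces to comparing $\opt(\hat\info)$ against $\opt(\noInfo)$, where $\hat\info$ is the (left-biased) structure that reveals whether $\state>0$ and $\noInfo$ is the unique right-biased structure at $\quota=0$. Since the prior puts mass on both $\states_0$ and $\states_1$, $\opt(\hat\info)>\opt(\noInfo)$ strictly, giving $\Lerror(0)>\Rerror(0)$ for every $\gamma>0$ in one line, with no sign bookkeeping. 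The case $\quota=1$ is symmetric.
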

The optimal quota is in the interior except when $\gamma = 0$. In this case, our model is equivalent to the max-min framework where the receiver evaluates the decision rule according to his equilibrium payoff in the worst-case. Thus, the optimal quota is simply choosing the optimal action according to his prior $\rho$.\footnote{This observation extends to more general settings with non-binary action spaces and an informative information structure $\underline{\info}$. In such cases, when $\gamma = 0$, the optimal quota rule specifies a quota that maximizes the receiver's expected payoff given the information structure $\underline{\info}$. If $\underline{\info}$ is the zero information structure, then the optimal quota rule reduces to choosing the action that is optimal with respect to the prior.} When $\priorMean = \indiff$, any action is optimal according to the prior.

\subsection{General Comparative Statics}
In general, it is hard to have a closed-form solution for the optimal quota rule even with a complete characterization of the worst-case information structure given above. However, we can offer information on what determines the optimal quota rule by deriving two comparative statics results with respect to the prior $\rho$. Note that $\rho$ is a joint representation of the receiver's utility function and the common prior about the state, so different interpretations can be given.

Our first result shows that the optimal quota for one action is increasing if the random reward of that action increases in the first-order stochastic order. This is intuitive since the receiver will always become more favorable to the action with higher rewards, even with uncertainty over the primitives. 
\begin{proposition}[First-Order Stochastic Dominance]
\label{prop:fosd}
\,\\
When $\gamma=1/2$, the optimal quota under $\prior$ is weakly lower than the optimal quota under~$\hat{\prior}$ if $\hat{\prior}$ first-order stochastically dominates $\prior$.
\end{proposition}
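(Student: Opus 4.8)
The plan is to reduce the claim to the one‑dimensional comparative statics established in \cref{prop:optimal_quota_binary}. Throughout fix $\gamma=\tfrac12$ and write $\Lerror^{\prior},\Rerror^{\prior}$ for the worst‑case left‑ and right‑biased errors (as functions of the quota $\quota$) computed under a prior $\prior$. By \cref{prop:optimal_quota_binary}, $\Lerror^{\prior}$ is weakly decreasing, $\Rerror^{\prior}$ is weakly increasing, and (by \cref{cor:unique_interior}) the optimal quota $\quota^*(\prior)$ is the generically unique point at which $\Lerror^{\prior}(\quota)=\Rerror^{\prior}(\quota)$. Hence it suffices to prove the two monotone ``shift'' inequalities
\begin{align*}
\Lerror^{\hat\prior}(\quota)\ \geq\ \Lerror^{\prior}(\quota)
\qquad\text{and}\qquad
\Rerror^{\hat\prior}(\quota)\ \leq\ \Rerror^{\prior}(\quota),
\qquad\text{for every }\quota.
\end{align*}
Indeed, at $\quota_0\triangleq\quota^*(\prior)$ these give $\Lerror^{\hat\prior}(\quota_0)\geq\Lerror^{\prior}(\quota_0)=\Rerror^{\prior}(\quota_0)\geq\Rerror^{\hat\prior}(\quota_0)$, so the weakly decreasing map $\quota\mapsto\Lerror^{\hat\prior}(\quota)-\Rerror^{\hat\prior}(\quota)$ is nonnegative at $\quota_0$ and therefore, being zero at the optimal quota of $\hat\prior$, vanishes only at quotas weakly above $\quota_0$, i.e.\ $\quota^*(\hat\prior)\geq\quota^*(\prior)$.

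Both shift inequalities are driven by the elementary fact that $\hat\prior$ first‑order stochastically dominates $\prior$ exactly when the quantile functions satisfy $\hat F^{-1}(p)\geq F^{-1}(p)$ for all $p\in(0,1)$; consequently every partial integral $\int_{p_0}^1 F^{-1}$ and $\int_0^{p_0}F^{-1}$ weakly increases under $\hat\prior$, and in particular wherever $\int_0^{p_0}F^{-1}\le 0$ this integral moves weakly toward zero. Specializing to $\gamma=\tfrac12$ the two closed‑form expressions for $\generalRegret(\quota,\info)$ displayed just before \cref{prop:optimal_quota_binary}: along a binary monotone partition with low‑signal probability $p_0$, the left‑biased regret equals a strictly positive coefficient (depending only on $(p_0,\quota)$, not on $\prior$) times $\int_{p_0}^1 F^{-1}$, while the right‑biased regret equals a nonpositive $\prior$‑free coefficient times $\int_0^{p_0}F^{-1}$ (the $(2\gamma-1)$‑term drops out). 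Thus, for a \emph{fixed} partition, the left‑biased regret weakly increases and the right‑biased regret weakly decreases on passing from $\prior$ to $\hat\prior$; intuitively, being forced by the quota to underuse action $1$ is more costly when the favorable states are larger, while being forced to overuse action $1$ is less costly when the unfavorable states are milder.

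The step I expect to require the most care is that the window $[z_0,z_1]$ over which the worst‑case $p_0$ is maximized itself depends on the prior: unwinding the definitions with $\hat F^{-1}\ge F^{-1}$ gives $z_0(\hat\prior)\le z_0(\prior)$ and $z_1(\hat\prior)\le z_1(\prior)$, so the $\prior$‑maximizer of $\Lerror^{\prior}$ need not lie in $\hat\prior$'s window, and conversely. I would bypass this by comparing individual partitions rather than the closed forms, using that a binary monotone partition is left‑biased (resp.\ right‑biased) for quota $\quota$ precisely when $p_0\le 1-\quota$ (resp.\ $p_0\ge 1-\quota$), a $\prior$‑free condition. For $\Lerror^{\hat\prior}(\quota)\geq\Lerror^{\prior}(\quota)$, fix a $\prior$‑maximizing left‑biased partition $p_0^{L}\in[z_0(\prior),z_1(\prior)]$; since $z_0(\hat\prior)\le z_0(\prior)$ we still have $\int_{p_0^L}^1\hat F^{-1}\ge 0$, and either $\int_0^{p_0^L}\hat F^{-1}\le 0$ (the clean formula applies and positivity of the coefficient with $\int_{p_0^L}^1\hat F^{-1}\ge\int_{p_0^L}^1 F^{-1}$ finishes it), or $\int_0^{p_0^L}\hat F^{-1}>0$ (then the partition's first‑best value under $\hat\prior$ is the mean of $\hat\prior$, which exceeds $\int_{p_0^L}^1\hat F^{-1}$, whereas the receiver's quota‑constrained payoff keeps its form because the quota binds only on the high signal when $\quota\le 1-p_0^L$, so the regret still dominates the positive‑coefficient expression). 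Symmetrically, for $\Rerror^{\hat\prior}(\quota)\le\Rerror^{\prior}(\quota)$ fix an $\hat\prior$‑maximizing right‑biased partition $p_0^{R}\in[z_0(\hat\prior),z_1(\hat\prior)]$; since $z_1(\hat\prior)\le z_1(\prior)$ we have $\int_0^{p_0^R}F^{-1}\le 0$, and comparing that partition's actual $\prior$‑regret with the closed form via $\int_0^{p_0^R}\hat F^{-1}\ge\int_0^{p_0^R}F^{-1}$ and the nonpositive coefficient gives the inequality, the only extra sub‑case being $p_0^R<z_0(\prior)$, where the partition's first‑best value under $\prior$ is $0$ but the receiver's forced loss on the now‑negative posteriors is, in absolute value, at least as large under $\prior$ as under $\hat\prior$, so the $\prior$‑regret again dominates.

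Feeding the two shift inequalities into the reduction of the first paragraph yields $\quota^*(\hat\prior)\ge\quota^*(\prior)$. The non‑generic configurations need only cosmetic changes: when $\priorMean=\indiff$ for either prior the optimum may be an interval, and one reads ``weakly lower'' as applying to the corresponding endpoints (the argument above in fact shows the zero set of $\Lerror^{\hat\prior}-\Rerror^{\hat\prior}$ lies weakly to the right of that of $\Lerror^{\prior}-\Rerror^{\prior}$); and when $\hat\prior$ assigns no probability to $\states_0$ the claim is trivial since then $\quota^*(\hat\prior)=1$.
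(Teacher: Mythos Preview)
Your proposal is correct and follows the same core route as the paper: translate FOSD into $\hat F^{-1}\ge F^{-1}$, derive the two shift inequalities $\Lerror^{\hat\prior}(\quota)\ge\Lerror^{\prior}(\quota)$ and $\Rerror^{\hat\prior}(\quota)\le\Rerror^{\prior}(\quota)$ by comparing the closed‑form integrands pointwise, and then invoke \cref{prop:optimal_quota_binary}. The only substantive difference is in how the window shift $[z_0,z_1]\to[\hat z_0,\hat z_1]$ is handled: the paper isolates the configuration $\hat z_1<1-\quota$ and dispatches it by the one‑line feasibility argument that the optimal $\hat q$ must satisfy $1-\hat q\le\hat z_1<1-\quota$, whereas you instead extend the pointwise comparison to partitions whose $p_0$ falls outside the new window by recomputing the exact regret there. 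Your extra sub‑cases do go through (for the left‑biased side the excess over the ``positive‑coefficient expression'' is precisely $\tfrac12\int_0^{p_0^L}\hat F^{-1}>0$; for the right‑biased side the $\rho$‑regret picks up the additional nonnegative term $-\tfrac12\int_{p_0^R}^1 F^{-1}$), so the argument is sound, just longer than the paper's shortcut.
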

In addition, we show that the optimal quota also exhibits responsiveness to changes in second-order stochastic dominance. To present the result, let us first introduce the following notion.
\begin{definition}[Mean-Preserving Spread]
\label{def:mean}
\,\\
A distribution $\prior$ is a mean-preserving spread of $\prior'$ in $\states_i$ for $i\in\{0,1\}$
if $\prior(z)=\prior'(z)$ for any $z\subseteq\states_{1-i}$
and $\prior$ is a mean-preserving spread of $\prior'$.
\end{definition}

\begin{proposition}[Second-Order Stochastic Dominance]
\label{prop:monotone_second_order}
\,\\
The optimal quota under $\rho$ is weakly lower (higher) than the optimal quota under $\rho'$ if $\prior$ is a mean-preserving spread of $\prior'$ in $\states_1$ ($\states_0$).
\end{proposition}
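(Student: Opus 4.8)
\textbf{Proof proposal for Proposition \ref{prop:monotone_second_order}.}

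The plan is to work directly with the closed-form expressions for $\Lerror(\quota)$ and $\Rerror(\quota)$ derived just above \Cref{prop:optimal_quota_binary}, together with the fixed-point characterization $\Lerror(\quota^*)=\Rerror(\quota^*)$ from that proposition. Recall $F$ is the prior CDF; all the quota-indexed objects depend on $\prior$ only through the quantile function $F^{-1}$ and the cumulative-quantile function $H(z)\triangleq\int_0^z F^{-1}(p)\dd p$. So the first step is to understand how a mean-preserving spread of $\prior$ in $\states_1$ transforms $H$. Since $\state\mapsto \state$ is the identity, a mean-preserving spread in the region $\states_1=\{\state>0\}$ fixes the distribution on $\states_0$, hence fixes $F^{-1}$ on $[0,z_1]$ and fixes $H$ on $[0,z_1]$ (in particular $z_0,z_1$ are unchanged); on $[z_1,1]$ it replaces $H$ by a function $\tilde H$ with $\tilde H(z_1)=H(z_1)$, $\tilde H(1)=H(1)$ (mean preserved), and $\tilde H \le H$ pointwise on $[z_1,1]$ (this is the standard ``integral of the quantile lies below'' characterization of second-order stochastic dominance / mean-preserving spread, applied on the conditional distribution). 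The key monotonicity input is therefore: $H$ decreases pointwise on the ``upper'' region while its endpoint values are pinned.

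The second step is to read off how $\Lerror$ and $\Rerror$ move. Fix a quota $\quota$. In the left-biased formula, $\generalRegret(\quota,\info)=\bigl(\gamma-(1-\gamma)\tfrac{1-p_0-\quota}{1-p_0}\bigr)\cdot\int_{p_0}^1 F^{-1}(p)\dd p = c(p_0,\quota)\cdot\bigl(H(1)-H(p_0)\bigr)$, and the maximizing $p_0$ ranges over $[z_0,z_1]\cap[0,1-\quota]$, i.e. over a region where $H$ is \emph{unchanged}; moreover $H(1)$ is unchanged. Hence $\Lerror(\quota)$ is literally unchanged by the spread in $\states_1$. In the right-biased formula, $\generalRegret(\quota,\info)=(2\gamma-1)\bigl(H(1)-H(p_0)\bigr)-(1-\gamma)\tfrac{p_0-1+\quota}{p_0}H(p_0)$ with $p_0\in[z_0,z_1]\cap[1-\quota,1]$; again $p_0$ lives in the region where $H$ is unchanged and $H(1)$ is unchanged, so $\Rerror(\quota)$ is also unchanged. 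Wait — that would give no movement at all, so the genuine point must be that the worst-case $p_0$ for the \emph{right-biased} case can also lie in the region $[z_1,1]$; I need to recheck the constraint $p_0\in[z_0,z_1]$. Re-examining: the claim ``$\info$ is left/right-biased iff $p_0\in[z_0,z_1]$'' pins the worst case into $[z_0,z_1]$ only after one argues suboptimality of $p_0\notin[z_0,z_1]$; but the \emph{value} $\generalRegret(\quota,\info)$ at a right-biased $\info$ with $p_0>z_1$ still involves $H(p_0)$ and $H(1)$ with $p_0$ in the altered region. The correct reading is that for the right-biased error the binding constraint is $p_0\le z_1$ precisely because pushing $p_0$ up past $z_1$ only hurts — but under the spread, $H(p_0)$ for $p_0\in(z_1,1)$ \emph{decreases}, which could make larger $p_0$ attractive, so the argument must be redone. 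Concretely: on $[z_1,1]$, $H(p)\le 0$... no, $H(z_1)$ need not be $0$. Let me instead argue via the envelope/monotone-comparative-statics route in Step 3.

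The third step — and the main obstacle — is the monotone comparative statics argument showing the optimal quota weakly decreases. I would package it as follows. Define $g_\prior(\quota)\triangleq \Lerror_\prior(\quota)-\Rerror_\prior(\quota)$; by \Cref{prop:optimal_quota_binary} this is weakly decreasing in $\quota$ (strictly if $\priorMean\neq\indiff$), with the optimal $\quota^*$ characterized by $g_\prior(\quota^*)=0$ (or the appropriate boundary/sign condition when $\priorMean=\indiff$). Hence it suffices to show $g_{\prior}(\quota)\le g_{\prior'}(\quota)$ for all $\quota$, i.e. that the spread in $\states_1$ (weakly) lowers $\Lerror$ relative to $\Rerror$ — equivalently, raises $\Rerror-\Lerror$. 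Since the spread leaves $\Lerror$ unchanged (Step 2, left-biased region is in the unaltered part of $H$, rigorously: for left-biased $\info$ the relevant $p_0\le 1-\quota$ and also the maximizer satisfies $p_0\le z_1$ where $H$ is frozen, and $\int_{p_0}^1 F^{-1} = H(1)-H(p_0)$ with $H(1)$ frozen too, because a mean-preserving spread of the conditional on $\states_1$ leaves the unconditional mean, hence $H(1)$, fixed), the whole burden is to show $\Rerror_\prior(\quota)\ge \Rerror_{\prior'}(\quota)$. Here the right-biased value is $(2\gamma-1)(H(1)-H(p_0)) - (1-\gamma)\tfrac{p_0-1+\quota}{p_0}H(p_0)$; for $\gamma<1/2$ the coefficient $(2\gamma-1)$ is negative and $\tfrac{p_0-1+\quota}{p_0}\ge 0$ on the right-biased range, so the value is a decreasing-in-$H(p_0)$ expression — and since the spread only \emph{lowers} $H(p_0)$ (weakly, pointwise) while the maximizing $p_0$ for $\prior'$ is still feasible for $\prior$, we get $\Rerror_\prior(\quota)\ge \Rerror_{\prior'}(\quota)$ by plugging the $\prior'$-maximizer into the $\prior$-objective. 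For $\gamma\ge 1/2$ one must be more careful since $(2\gamma-1)\ge0$ fights the other term; I expect one shows the right-biased optimum never puts $p_0$ strictly above $z_1$ (as argued for \Cref{lem:worst_info_quota}), on which region $H$ is unchanged, so actually $\Rerror$ is unchanged and $g$ moves the right way trivially — but verifying that the altered tail of $H$ cannot create a new interior maximizer of the right-biased objective on $(z_1,1)$ is the delicate point and is where I would spend the real effort, using $H(1)$ fixed, $H$ convex-decreasing perturbation, and the sign of $\partial_{p_0}\generalRegret(\quota,\info)$. The $\states_0$ case is symmetric: a spread in $\states_0$ freezes $F^{-1}$ and $H$ on $[z_0,1]$, leaves $\Rerror$ unchanged, lowers $\Lerror$ (by the analogous sign computation, now the left-biased value $c(p_0,\quota)(H(1)-H(p_0))$ with $c\ge 0$ and $H(1)$ fixed but $H(p_0)$... here $p_0$ is in the \emph{altered} region $[z_0,z_1]$ and $H(p_0)$ drops, raising $H(1)-H(p_0)$, hence raising $\Lerror$) — so in fact the $\states_0$ spread raises $\Lerror$ relative to $\Rerror$, pushing $\quota^*$ up, consistent with the ``(higher)'' in the statement. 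I would close by invoking $g_\prior$ weakly decreasing (crossing zero) to convert the inequality $g_\prior\le g_{\prior'}$ (resp. $\ge$) into $\quota^*_\prior\le\quota^*_{\prior'}$ (resp. $\ge$).
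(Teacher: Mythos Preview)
Your overall strategy—reducing to the fixed point $\Lerror(\quota^*)=\Rerror(\quota^*)$ from \Cref{prop:optimal_quota_binary} and then showing that a one-sided mean-preserving spread freezes one error while (weakly) raising the other—is exactly the paper's. The gap is a concrete geometric error about where $H$ is frozen. You assert that a spread in $\states_1$ ``fixes $F^{-1}$ on $[0,z_1]$ and fixes $H$ on $[0,z_1]$ (in particular $z_0,z_1$ are unchanged).'' In fact the spread fixes $F^{-1}$ only on $[0,F(0)]$, where $F(0)=\prior(\states_0)$; since $H(F(0))=\int_{\states_0}\state\,\dd\prior<0$ while $H(z_1)=0$, one has $z_1>F(0)$ strictly. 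On $(F(0),z_1]$ the spread lowers $H$ pointwise, and $z_1$ itself can shift upward. Consequently your Step-2 conclusion that $\Lerror$ is ``literally unchanged'' is unsupported: the left-biased maximizing $p_0$ could a priori sit in $(F(0),z_1]$, where $H(1)-H(p_0)$ moves. Your later attempt to locate the difficulty in the region $p_0>z_1$ misdiagnoses it; the trouble is interior to $[z_0,z_1]$, on the segment $(F(0),z_1]$. The same conflation of $F(0)$ with $z_0$ (resp.\ $z_1$) recurs in your $\states_0$ paragraph.

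What is missing is precisely the auxiliary refinement the paper proves en route (\cref{lem:refined_worst_info}): for every $\quota$, there is a left-biased regret-maximizing binary monotone partition whose cutoff state lies in $\states_0$ (equivalently $p_0\le F(0)$), and a right-biased one whose cutoff lies in $\states_1$ (equivalently $p_0\ge F(0)$). With this in hand, for an MPS in $\states_1$ the left-biased maximizer satisfies $p_0\le F(0)$, so both $H(p_0)$ and $H(1)$ are frozen and $\Lerror$ is genuinely unchanged; on the right-biased side the spread (weakly) enlarges the feasible set of partitions with cutoff in $\states_1$, so $\Rerror$ weakly increases. Your crossing argument then yields $\quota^*_\prior\le \quota^*_{\prior'}$, and the $\states_0$ case is symmetric. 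Without that refinement lemma, neither the ``$\Lerror$ unchanged'' claim nor your case split for $\Rerror$ by $\gamma\lessgtr\tfrac{1}{2}$ (which you correctly flag as incomplete for $\gamma\ge\tfrac{1}{2}$) can be closed.
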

Intuitively, \cref{prop:monotone_second_order} implies that the receiver should try to avoid an action when its relative advantage over other actions becomes more obscure. Note that this does not mean that the receiver should avoid an action if it is riskier. To see this, note that $\state$ is the payoff difference between the two actions. Now, suppose that we fix the common prior and the payoff of action 1 while changing the payoff of action 0 to make it riskier in $\states_1$ (in states where action 1 is better). This would lead to a mean-preserving spread of  the induced distribution in  $\states_1$. Thus, \cref{prop:monotone_second_order} predicts that the optimal quota for action 0 increases, although the reward for action 0 is riskier in the normal sense.

\section{General Mechanisms}
\label{sec:extension}
In this section, we extend the optimality of quota rules to a more general environment by allowing general mechanisms for eliciting the sender's private information. After the sender reports her preference $v$ and the set of available information structures $\availableInfo$, the decision rule selects an information structure and takes an action. One might wonder whether the communication on $(v,\availableInfo)$ can enable more flexible decision rules, leading to a Pareto improvement, which is the case in \citet{guo2022regret}.
We now show that this restriction is without loss of generality in our setting. 
Even when the decision rule could depend on the sender's report about her primitives $(v,\availableInfo)$, the optimal mechanism remains the same. The optimal decision rule assigns a fixed quota for any reported $(v,\availableInfo)$.  

\paragraph{General Mechanisms}
In the main specification, we consider decision rules 
\begin{gather*}
   \alpha(\info,\mu): \Sigma \times \allPosterior \to \Delta A.
\end{gather*}
That is, the action taken by the receiver depends only on the information structure chosen by the sender and the realization of the signal. 
In the general mechanism, the sender first reports the set of available information structures $\availableInfo\in \allInfo $ and her utility~$v\in V$, 
and based on the report, the mechanism chooses one information structure $\info \in \availableInfo$ and the corresponding decision rule~$\alpha$: 
\begin{align*}
\info (\availableInfo,v)&: \allInfo \times V  \to \Sigma, \\
\alpha(\availableInfo,v,\mu)&: \allInfo\times V \times \allPosterior  \to \Delta A.    
\end{align*}
We assume that the sender's report on the set of available information structures $\hat{\availableInfo}$ must be authentic in the sense that $\hat{\availableInfo} \subseteq \availableInfo$.
Invoking the revelation principle, the design problem of the receiver is
to minimize regret subject to the incentive constraints: 
\begin{align*}
\inf_{\substack{\info(\cdot,\cdot):\,\allInfo\times V\to\Sigma\\ \alpha(\cdot,\cdot,\cdot):\,\allInfo\times V\times \allPosterior\to\Delta A}}
& \sup_{v\in V,\availableInfo \in \allInfo}  \generalRegret(\info,\alloc, v, \availableInfo) = \gamma\cdot\optU(v,\availableInfo) - (1-\gamma)\cdot U(\alloc(\availableInfo,v),\info(\availableInfo,v)),\\
\text{s.t.} \qquad\quad & \info(\availableInfo,v)\in\availableInfo,\quad
    (\availableInfo,v) \in \argmax_{\hat{v}\in V,\hat{\availableInfo}\subseteq \availableInfo} \int_{\Delta \states}  \sum_{a} v(a,\posterior)\alloc(a|\hat{\availableInfo},\hat{v},\mu) \dd \info(\hat{\availableInfo},\hat{v}) (\posterior).
\end{align*}

To introduce a natural definition of the quota rule in the general environment, recall that in the main specification $\alloc^q(\info,\mu)$ it is called a quota rule with quota $\quota$ if
\begin{align*}
\alloc^q(\info,\cdot) \in \argmax_{\alloc'\in A_q}
&\int_{\allPosterior}
{\util(\alloc'(\pi,\posterior),\posterior)} \dd \info(\posterior), \quad \forall \info.
\end{align*}
\begin{definition}[Quota Rule for General Mechanisms]
\label{def:general}
\,\\
The general mechanism $(\info,\alloc)$ is a \emph{quota rule} with $\quota \in \Delta \actions$, if for any $(\availableInfo,v)$, 
\begin{align*}
\info(\availableInfo,v) &\in \argmax_{\info \in \availableInfo}  \int_{\allPosterior}
{\util(\alloc^q(\info, \posterior),\posterior)} \dd \info(\posterior),\\
\alloc(\availableInfo,v,\posterior) &=  \alloc^{\quota}(\info(\availableInfo,v),\mu).
\end{align*}
\end{definition}
That is, for any information structure $\info$ that is eventually chosen, the receiver agrees to the same quota rule $\alpha^q(\info,\mu)$ as described in the main specification. 
For any reported $\availableInfo$, the mechanism chooses an information structure $\info$ that maximizes the receiver's payoff, where $\info$ is evaluated by its performance in the quota rule $\alpha^q(\info,\mu)$. 
This general mechanism can be implemented by the quota rule in the main specification because the sender can choose $\info(\availableInfo,v)$ directly on behalf of the receiver without reporting $(\availableInfo,v)$.

\begin{theorem}[General Optimality of Quota Rules]
\label{thm:extension_quota_optimal}
\,\\
The quota rule $q^*_{\gamma}$ is the optimal general mechanism, where
\begin{align*}
q^*_{\gamma}  &= \argmin_{q \in \Delta \actions} \max_{\info \in \Sigma_{\underline{\info}}} \generalRegret(q,\info).
\end{align*}
\end{theorem}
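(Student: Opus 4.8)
The plan is to replay the proof of \cref{thm:quota_optimal} almost verbatim, with two substitutions: the zero-information structure $\noInfo$ is replaced throughout by the minimal element $\pi_0$ of $\allInfo_0$, and ``generalized quota rules'' are replaced by ``generalized quota mechanisms'' that may condition on the sender's report. First I would record the facts that make $\pi_0$ a drop-in replacement for $\noInfo$: since $\pi_0$ is Blackwell-dominated by every $\info\in\allInfo_0$, both $U(\quota,\cdot)$ (noted monotone in the Blackwell order in the proof of \cref{thm:quota_optimal}) and $\opt(\cdot)$ are weakly larger at any such $\info$ than at $\pi_0$, and $\firstBest(\Pi)=\opt(\bar\info)$ for any $\bar\info$ that Blackwell-dominates every element of $\Pi$, in particular whenever $\Pi=\{\bar\info,\pi_0\}$. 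Next I would re-establish the auxiliary results for the restricted problem: \cref{lem:LipschitzContinuity} is unchanged, and since $\allInfo_0$ is closed, hence compact in the vague topology, the objects $\generalRegret(\quota)=\max_{\info\in\allInfo_0}\generalRegret(\quota,\info)$, $\allInfo_{\quota}=\argmax_{\info\in\allInfo_0}\generalRegret(\quota,\info)$ and $\quota^*_{\gamma}=\argmin_{\quota}\generalRegret(\quota)$ are well defined, and the proof of \cref{lem:quota_local_improve} goes through word for word: no quota can uniformly and strictly improve $\quota^*_{\gamma}$ on $\allInfo_{\quota^*_{\gamma}}$.

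The genuinely new ingredient is the analogue of \cref{lem:generalize_quota_optimal} for eliciting mechanisms. Given any incentive-compatible mechanism $(\info,\alloc)$, replace, for each report $(\Pi,v)$, the decision rule $\alloc(\Pi,v,\cdot)$ by the payoff-maximizing decision rule with the same induced marginal over actions. Because the sender has state-independent utility, her payoff from any report depends only on the induced action marginal, so incentive compatibility is preserved; and the receiver's payoff on the realized signal structure weakly increases, so $\generalRegret$ weakly decreases. Hence it is without loss to assume the candidate improving mechanism is a \emph{generalized quota mechanism}: for every report $(\Pi,v)$ it outputs a signal structure $\info^M(\Pi,v)\in\Pi$ and a quota $\quota^M(\Pi,v)$, and applies the decision rule $\alloc^{\quota^M(\Pi,v)}(\info^M(\Pi,v),\cdot)$.

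Now I would argue by contradiction: suppose a generalized quota mechanism $M$ has worst-case $\gamma$-regret strictly below $\generalRegret(\quota^*_{\gamma})$. Fix $\bar\info\in\allInfo_{\quota^*_{\gamma}}$ and feed $M$ the primitives $\Pi=\{\bar\info,\pi_0\}$ with an arbitrary utility $v$; by the revelation principle the sender reports truthfully and induces an outcome $(\info^o,\quota^o)$ with $\info^o\in\{\bar\info,\pi_0\}$. Since $\firstBest(\{\bar\info,\pi_0\})=\opt(\bar\info)$ and $U(\quota^o,\info^o)\le U(\quota^o,\bar\info)$ by Blackwell monotonicity, the realized regret is at least $\generalRegret(\quota^o,\bar\info)$. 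If $\allInfo_{\quota^*_{\gamma}}=\{\bar\info\}$ is a singleton, \cref{lem:quota_local_improve} yields $\generalRegret(\quota^o,\bar\info)\ge\generalRegret(\quota^*_{\gamma})$ for the quota $\quota^o$, contradicting that $M$ strictly improves. In the non-singleton case I would instead isolate a quota $\bar\quota_0=\quota^M(\{\pi_0\},v_0)$ that $M$ attaches to a report whose feasible set forces the structure $\pi_0$, use local optimality to pick $\pi'\in\allInfo_{\quota^*_{\gamma}}$ with $\generalRegret(\bar\quota_0,\pi')\ge\generalRegret(\quota^*_{\gamma})$, and then, taking $\Pi=\{\pi',\pi_0\}$, choose a sender utility $v$ that drives the sender onto an action marginal at least as unfavorable to the receiver at $\pi'$ as $\bar\quota_0$ is; the inequality $U(\cdot,\pi_0)\le U(\cdot,\pi')$ then again forces realized regret at least $\generalRegret(\quota^*_{\gamma})$.

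The main obstacle is exactly this last step. Under an eliciting mechanism the sender may mimic any type reporting any authentic subset of $\Pi$, so at $\Pi=\{\pi',\pi_0\}$ her deviation set is a whole menu of action marginals, rather than the two options available to a sender facing a non-eliciting generalized quota rule, and it is not obvious that some $v$ pushes her onto a receiver-bad marginal. I expect to close this either by first showing it is without loss for the mechanism to depend only on the reported feasible set (collapsing each menu to one outcome per report), or---more robustly---by a convexity argument: $\generalRegret(\cdot,\pi')$ is convex in the quota because $U(\cdot,\pi')$ is the value of the optimal-transport problem defining it and hence concave in its action-marginal constraint, while the set of action marginals the sender can be driven to, as $v$ ranges over all state-independent utilities, contains every exposed point of the closed convex hull of the menu; so the worst quota over that hull is attained at such an exposed point, which forces the menu element $\bar\quota_0$ to witness regret at least $\generalRegret(\quota^*_{\gamma})$. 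Everything else is a routine transcription of the proof of \cref{thm:quota_optimal}.
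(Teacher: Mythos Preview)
Your plan is essentially the paper's proof: replace $\noInfo$ by the minimal element $\pi_0$, reduce to mechanisms that output a quota per report (the paper's \cref{lem:extend_generalize_quota_optimal}), invoke local optimality for the singleton case, and for the non-singleton case pick $\bar\quota_0\in Q(\{\pi_0\})$, find $\pi'\in\allInfo_{\quota^*_\gamma}$ with $\generalRegret(\bar\quota_0,\pi')\ge\generalRegret(\quota^*_\gamma)$, and then exploit convexity of $\generalRegret(\cdot,\pi')$ together with incentive compatibility at $\Pi=\{\pi',\pi_0\}$. That last convexity observation is exactly the paper's \cref{lem:convexU}.

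Where your sketch and the paper part ways is the execution of the final step. Your exposed-point route needs the maximum of the convex function $\generalRegret(\cdot,\pi')$ over the closed convex hull of the menu to be attained at an \emph{exposed} point; in general the maximum of a convex function over a compact convex set is attained at an extreme point that need not be exposed, so you cannot conclude that some $v$ drives the sender exactly onto a regret-maximizing quota. Approximation via Straszewicz only gives a supremum, which does not contradict a strict-improvement hypothesis stated pointwise. The paper avoids this by a direct two-hyperplane argument: first, convexity of $\generalRegret(\cdot,\pi')$ yields a supporting hyperplane $(v',c)$ at $\bar\quota_0$ with $\generalRegret(\quota,\pi')\ge\generalRegret(\bar\quota_0,\pi')$ whenever $\quota\cdot v'\ge c$; second, incentive compatibility forces $\bar\quota_0\in\overline{\mathrm{Conv}}\,Q(\{\pi',\pi_0\})$, since otherwise a separating $v''$ would make under-reporting $\{\pi_0\}$ strictly profitable. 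Combining, the truthful quota $\bar\quota(\{\pi',\pi_0\},v')$, which by IC maximizes $v'\cdot\quota$ over $Q(\{\pi',\pi_0\})$, satisfies $\bar\quota\cdot v'\ge\bar\quota_0\cdot v'=c$ and hence $\generalRegret(\bar\quota,\pi')\ge\generalRegret(\bar\quota_0,\pi')\ge\generalRegret(\quota^*_\gamma)$; Blackwell monotonicity then yields the contradiction regardless of which signal structure the mechanism selects. Your first alternative (reducing to mechanisms that ignore $v$) is not used by the paper and is not obviously without loss; I would drop it and follow the supporting-hyperplane route instead.
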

The proof of \cref{thm:extension_quota_optimal} is more involved than that of \cref{thm:quota_optimal} but the basic intuition is the same, so we relegate it to the appendix.

\section{Conclusion}
\label{sec:conclusion}
We considered strategic communication between a sender and a receiver when both sides had the ability to commit to a policy, an information rule, and a decision rule, respectively. We were interested in finding the optimal decision rule for the receiver when he faced uncertainty about the objective and the instruments of the sender. 
The literature on strategic communication has typically considered forms of communication in which at least one of the players had no ability to commit. In the current paper, we offer both the sender and the receiver some level of commitment power.   

Our analysis identifies quota rules as robustly optimal decision rules in this dual-commitment environment. A quota rule fixes the marginal distribution over the receiver's actions and, for each information structure chosen by the sender, assigns actions to signals to maximize the receiver's payoff subject to this marginal constraint. The key mechanism is that the quota makes the sender's state-independent bias irrelevant for her comparison across feasible information structures. As a result, the sender's choice of information structure is driven by the receiver-aligned component of her payoff. This mechanism is not specific to the institutional examples discussed above. It applies more generally whenever a decision maker relies on an informed but imperfectly aligned evaluator whose objective and informational capabilities are only partially understood.

The biases documented in this literature are not all state-independent in the sense of our model, and our framework is best understood as capturing a particular subset of them: those that operate through the frequency of the agent's outputs and recommendations alone. An agent whose post-training rewards agreeable, confident, or approving responses is biased toward certain actions regardless of the underlying state; this is the reduced form $v_B(a)$. By contrast, a bias whose intensity varies with the state --- say, an incentive to misreport that is stronger in some states than in others --- falls outside the baseline specification. Appendix C shows that the optimality of quota rules extends to arbitrary state-dependent preferences (Theorem 3), although the optimal quota then becomes more conservative; the baseline partial-alignment model is thus a natural benchmark rather than a knife-edge assumption. Second, the user is uncertain about the agent's informational capabilities: which documents the agent has retrieved, which tools it has invoked, and which inference chains it has actually executed are largely opaque to the user (Russell, 2019; Acemoglu, 2024). The pair $(v,\Pi)$ in our model is exactly this composite uncertainty about the agent's bias and capability set.

Another leading example is the alignment problem between a generative AI agent and the human user who relies on it. The user (receiver) seeks a decision; the AI agent (sender) supplies an evaluation, a recommendation, or a ranking on which that decision will rest. Two features of contemporary large language models make our model especially apt. First, the user is uncertain about the agent's true objective. Reinforcement-learning-from-human-feedback (RLHF) and related post-training procedures are known to induce systematic, state-independent biases---most prominently sycophancy and reward-hacking---whose precise magnitude and direction the user does not know \citep[see, e.g.,][]{christiano2017deep,ouyang2022training,bai2022constitutional}. The biases documented in this literature are not all state-independent in the sense of our model, and our framework is best understood as capturing a particular subset of them: those that operate through the frequency of the agent's outputs and recommendations alone. An agent whose post-training rewards agreeable, confident, or approving responses is biased toward certain actions regardless of the underlying state; this is the reduced form $v_B(a)$. By contrast, a bias whose intensity varies with the state --- say, an incentive to misreport that is stronger in some states than in others --- falls outside the baseline specification. Appendix C shows that the optimality of quota rules extends to arbitrary state-dependent preferences (Theorem 3), although the optimal quota then becomes more conservative; the baseline partial-alignment model is thus a natural benchmark rather than a knife-edge assumption. Second, the user is uncertain about the agent's informational capabilities: which documents the agent has retrieved, which tools it has invoked, and which inference chains it has actually executed are largely opaque to the user \citep{russell2019humancompatible,acemoglu2024harms}. The pair $(v,\Pi)$ in our model is exactly this composite uncertainty about the agent's bias and capability set.

Two features of the model deserve comments in this application. The first is the assumption that the sender's choice of experiment is verifiable, so that the receiver's decision rule may condition on it. Here the experiment is the agent's evaluation pipeline: the system prompt, the retrieval sources, the tools it may call, and the rubric by which it scores alternatives. In deployed systems, these are versioned software artifacts that can be disclosed, logged, and audited, so a policy that conditions on the disclosed pipeline is implementable. What the user typically \emph{cannot} verify is how informative a given pipeline is about the state --- which documents were decisive, and which inference chains were actually run. This is exactly the distinction the model draws: the decision rule conditions on the chosen information structure $\pi$, while robustness is required with respect to the unknown feasible set $\Pi$. The quota rule is, moreover, the least demanding rule in this respect: it uses the disclosed pipeline only to assign actions to signals subject to the quota, and --- unlike ``shoot-the-agent'' protocols --- it never needs to detect an off-path deviation in order to trigger a punishment.

The second feature is that the sender cares directly about the actions the receiver takes, rather than about a report the receiver sends back to the AI as feedback. This is a reduced form of how post-training operates in practice. Reward models are trained on human feedback --- acceptance or rejection of a recommendation, ratings, engagement --- that is itself a proxy for the action the user ultimately takes, so an agent optimized against such feedback behaves as if it valued the distribution of user actions directly. In increasingly agentic deployments, the connection is even tighter: the user's action --- merging the suggested code, executing the recommended trade, approving the flagged transaction --- is precisely the outcome that the agent's training objective and its continued deployment reward. The feedback report is then best viewed as a noisy measurement of the action, not as an independent object of the agent's preferences.

In this interpretation, the receiver's decision rule is the user's \emph{governance policy}---the rule that translates the AI agent's outputs into a final action. A myopic optimal rule corresponds to ``deploy whatever the agent recommends,'' which our analysis shows is maximally vulnerable to the agent's bias: it incurs a worst-case regret equal to the full informational value of the most informative report the agent could ever have produced. The ``shoot-the-agent'' rule corresponds to elaborate verification protocols that punish any output deviating from a prescribed gold standard. As in our model, such protocols are brittle: they require the user to know which capabilities the agent actually has, and which precise verification deviations should trigger which penalties---knowledge the user typically lacks for a frontier model \citep{hadfieldmenell2016cooperative,ngo2022alignment}.

The quota rule, by contrast, is a governance policy that fixes ex ante the marginal distribution of the user's actions (e.g., ``target an approval rate of 5\% of flagged transactions,'' ``recommend surgery in 30\% of borderline cases,'' ``green-light one of every three candidate ad creatives'') and lets the agent design its evaluation any way it wishes subject to that distribution. Because the agent's state-independent bias affects only how often each action is taken---not which action follows which signal---a quota rule renders the agent indifferent over biases of arbitrary magnitude, and induces it to use whatever capabilities it does have to serve the user's underlying objective. This is precisely the form that emerging proposals for AI governance and ``constitutional'' constraints take \citep{bai2022constitutional,immorlica2024algorithmic}: rather than auditing the agent's internal preferences, the human commits to action-distribution constraints that are robust to whatever those preferences turn out to be. Theorem~\ref{thm:quota_optimal} provides a formal foundation for this practice: when the user's uncertainty about a generative AI agent is captured by the joint uncertainty over the agent's bias and capability set, a quota rule is the optimal alignment policy among all decision rules that depend on the agent's reported information policy and signal.

\newpage

\bibliographystyle{apalike}
\bibliography{ref}

@article{FRANKEL2016396,
title = {Discounted quotas},
journal = {Journal of Economic Theory},
volume = {166},
pages = {396-444},
year = {2016},
issn = {0022-0531},
doi = {https://doi.org/10.1016/j.jet.2016.08.001},
url = {https://www.sciencedirect.com/science/article/pii/S0022053116300527},
author = {Alexander Frankel},
keywords = {Dynamic contracts, Credence goods, Delegation},
abstract = {This paper extends the concept of a quota contract to account for discounting and for the possibility of infinitely many periods: a discounted quota fixes the number of expected discounted plays on each action. I first present a repeated principal-agent contracting environment in which menus of discounted quota contracts are optimal. I then recursively characterize the dynamics of discounted quotas for an infinitely repeated iid problem. Dynamics are described more explicitly for the limit as interactions become frequent, and for the case where only two actions are available.}
}

@article{Alexander14,
Author = {Frankel, Alexander},
Title = {Aligned Delegation},
Journal = {American Economic Review},
Volume = {104},
Year = {2014},
Month = {January},
Pages = {66–83},
DOI = {10.1257/aer.104.1.66},
URL = {https://www.aeaweb.org/articles?id=10.1257/aer.104.1.66}}

@book{villani2003topics,
  title={Topics in Optimal Transportation},
  author={Villani, C.},
  isbn={9780821833124},
  lccn={2003040350},
  url={https://books.google.com/books?id=R\_nWqjq89oEC},
  year={2003},
  publisher={American Mathematical Society}
}

@book{villani2016optimal,
  title={Optimal Transport: Old and New},
  author={Villani, C.},
  isbn={9783662501801},
  url={https://books.google.com/books?id=5p8SDAEACAAJ},
  year={2016},
  publisher={Springer Berlin Heidelberg}
}

@article{JacksonSonnenschein07,
 ISSN = {00129682, 14680262},
 URL = {http://www.jstor.org/stable/4123113},
 abstract = {Consider a Bayesian collective decision problem in which the preferences of agents are private information. We provide a general demonstration that the utility costs associated with incentive constraints become negligible when the decision problem is linked with a large number of independent copies of itself. This is established by defining a mechanism in which agents must budget their representations of preferences so that the frequency of preferences across problems mirrors the underlying distribution of preferences, and then arguing that agents' incentives are to satisfy their budget by being as truthful as possible. We also show that all equilibria of the linking mechanisms converge to the target utility levels. The mechanisms do not require transferable utility or interpersonal comparisons of utility, and are immune to manipulations by coalitions.},
 author = {Matthew Jackson and Hugo Sonnenschein},
 journal = {Econometrica},
 pages = {241--257},
 publisher = {[Wiley, Econometric Society]},
 title = {Overcoming Incentive Constraints by Linking Decisions},
 urldate = {2023-10-04},
 volume = {75},
 year = {2007}
}

@article{LinLiu22,
  title={Credible persuasion},
  author={Lin, Xiao and Liu, Ce},
  journal={Journal of Political Economy},
  volume={132},
  number={7},
  pages={2228--2273},
  year={2024}
}

@article{CHAKRABORTY200770,
title = {Comparative cheap talk},
journal = {Journal of Economic Theory},
volume = {132},
pages = {70-94},
year = {2007},
issn = {0022-0531},
doi = {https://doi.org/10.1016/j.jet.2005.07.003},
url = {https://www.sciencedirect.com/science/article/pii/S0022053105001729},
author = {Archishman Chakraborty and Rick Harbaugh},
keywords = {Multidimensional cheap talk, Complementarities},
abstract = {When are comparative statements credible? We show that simple complementarity conditions ensure that an expert with private information about multiple issues can credibly rank the issues for a decision maker. By restricting the expert's ability to exaggerate, multidimensional cheap talk of this form permits communication when it would not be credible in a single dimension. The communication gains can be substantial with even a couple of dimensions, and the complete ranking is asymptotically equivalent to full revelation as the number of issues becomes large. Nevertheless, partial rankings are sometimes more credible and/or more profitable for the expert than the complete ranking. Comparative cheap talk is robust to asymmetries that are not too large. Consequently, for sufficiently many independent issues, there are always some issues sufficiently symmetric to permit comparative cheap talk.}
}

@book{ambrosio2021lectures,
  title={Lectures on optimal transport},
  author={Ambrosio, Luigi and Bru{\'e}, Elia and Semola, Daniele and others},
  volume={130},
  year={2021},
  publisher={Springer}
}

@article{LuoGao24,
  title={Ex-Ante Design of Persuasion Games},
  author={Daniel Luo and Eric Gao},
  journal={working paper},
  year={2024}
}

@article{guo2022regret,
  title={Regret-minimizing project choice},
  author={Guo, Yingni and Shmaya, Eran},
  journal={Econometrica},
  volume={91},
  number={5},
  pages={1567--1593},
  year={2023},
}

@article{guo2019robust,
  title={Robust monopoly regulation},
  author={Guo, Yingni and Shmaya, Eran},
  journal={American Economic Review},
  volume={115},
  pages={599--634},
  year={2025}
}

@article{bergemann2011robust,
  title={Robust monopoly pricing},
  author={Bergemann, Dirk and Schlag, Karl},
  journal={Journal of Economic Theory},
  volume={146},
  pages={2527--2543},
  year={2011},
  publisher={Elsevier}
}

@article{stoye2011axioms,
  title={Axioms for minimax regret choice correspondences},
  author={Stoye, J{\"o}rg},
  journal={Journal of Economic Theory},
  volume={146},
  pages={2226--2251},
  year={2011},
  publisher={Elsevier}
}

@article{slivkins2019introduction,
  title={Introduction to multi-armed bandits},
  author={Slivkins, Aleksandrs},
  journal={Foundations and Trends{\textregistered} in Machine Learning},
  volume={12},
  number={1-2},
  pages={1--286},
  year={2019}
}

@article{bevia2022contests,
  title={Contests with dominant strategies},
  author={Bevi{\'a}, Carmen and Corch{\'o}n, Luis},
  journal={Economic Theory},
  volume={74},
  pages={1--19},
  year={2022},
  publisher={Springer}
}

@article{kamenica2011bayesian,
  title={Bayesian persuasion},
  author={Kamenica, Emir and Gentzkow, Matthew},
  journal={American Economic Review},
  volume={101},
  pages={2590--2615},
  year={2011},
  publisher={American Economic Association}
}

@article{dworczak2022preparing,
  title={Preparing for the worst but hoping for the best: Robust (Bayesian) persuasion},
  author={Dworczak, Piotr and Pavan, Alessandro},
  journal={Econometrica},
  volume={90},
  pages={2017--2051},
  year={2022},
  publisher={Wiley Online Library}
}

@article{laux2001limited,
  title={Limited-liability and incentive contracting with multiple projects},
  author={Laux, Christian},
  journal={RAND Journal of Economics},
  pages={514--526},
  year={2001},
  publisher={JSTOR}
}

@article{zhao2008all,
  title={All-or-nothing monitoring},
  author={Zhao, Rui},
  journal={American Economic Review},
  volume={98},
  pages={1619--1628},
  year={2008},
  publisher={American Economic Association}
}

@article{chen2012all,
  title={All-or-nothing payments},
  author={Chen, Bo},
  journal={Journal of Mathematical Economics},
  volume={48},
  pages={133--142},
  year={2012},
}

@article{anunrojwong2023robust,
  title={Robust Auction Design with Support Information},
  author={Anunrojwong, Jerry and Balseiro, Santiago R and Besbes, Omar},
  journal={Management Science},
  volume={forthcoming},
  year={2026},
 
}

@article{bm16,
  title={Bayes correlated equilibrium and the comparison of information structures in games},
  author={Bergemann, Dirk and Morris, Stephen},
  journal={Theoretical Economics},
  volume={11},
  pages={487--522},
  year={2016},
  publisher={Wiley Online Library}
}

@article{carroll2017robustness,
  title={Robustness and separation in multidimensional screening},
  author={Carroll, Gabriel},
  journal={Econometrica},
  volume={85},
  pages={453--488},
  year={2017},
  publisher={Wiley Online Library}
}

@article{brooks2021optimal,
  title={Optimal auction design with common values: An informationally robust approach},
  author={Brooks, Benjamin and Du, Songzi},
  journal={Econometrica},
  volume={89},
  pages={1313--1360},
  year={2021},
  publisher={Wiley Online Library}
}

@article{allouah2020prior,
  title={Prior-independent optimal auctions},
  author={Allouah, Amine and Besbes, Omar},
  journal={Management Science},
  volume={66},
  pages={4417--4432},
  year={2020},
  publisher={INFORMS}
}

@article{hartline2015non,
  title={Non-optimal mechanism design},
  author={Hartline, Jason and Lucier, Brendan},
  journal={American Economic Review},
  volume={105},
  pages={3102--3124},
  year={2015},
  publisher={American Economic Association 2014 Broadway, Suite 305, Nashville, TN 37203}
}

@article{hartline2023scale,
  title={Scale-robust Auctions},
  author={Hartline, Jason and Johnsen, Aleck and Li, Yingkai},
  journal={working paper},
  year={2025}
}

@article{yoder2022designing,
  title={Designing incentives for heterogeneous researchers},
  author={Yoder, Nathan},
  journal={Journal of Political Economy},
  volume={130},
  pages={2018--2054},
  year={2022},
  publisher={The University of Chicago Press Chicago, IL}
}

@article{hu2021robust,
  title={Robust persuasion of a privately informed receiver},
  author={Hu, Ju and Weng, Xi},
  journal={Economic Theory},
  volume={72},
  pages={909--953},
  year={2021},
  publisher={Springer}
}

@article{kosterina2022persuasion,
  title={Persuasion with unknown beliefs},
  author={Kosterina, Svetlana},
  journal={Theoretical Economics},
  volume={17},
  pages={1075--1107},
  year={2022},
  publisher={Wiley Online Library}
}

@article{lipnowski2022persuasion,
  title={Persuasion via weak institutions},
  author={Lipnowski, Elliot and Ravid, Doron and Shishkin, Denis},
  journal={Journal of Political Economy},
  volume={130},
  pages={2705--2730},
  year={2022},
  publisher={The University of Chicago Press Chicago, IL}
}

@article{beauchene2019ambiguous,
  title={Ambiguous persuasion},
  author={Beauch{\^e}ne, Dorian and Li, Jian and Li, Ming},
  journal={Journal of Economic Theory},
  volume={179},
  pages={312--365},
  year={2019},
  publisher={Elsevier}
}

@inproceedings{cheng2024persuasion,
  title={Persuasion with ambiguous communication},
  author={Cheng, Xiaoyu and Klibanoff, Peter and Mukerji, Sujoy and Renou, Ludovic},
  booktitle={Proceedings of the 25th ACM Conference on Economics and Computation},
  pages={1201--1202},
  year={2024}
}

@article{christiano2017deep,
  title={Deep reinforcement learning from human preferences},
  author={Christiano, Paul F and Leike, Jan and Brown, Tom and Martic, Miljan and Legg, Shane and Amodei, Dario},
  journal={Advances in Neural Information Processing Systems (NeurIPS)},
  volume={30},
  year={2017}
}

@article{ouyang2022training,
  title={Training language models to follow instructions with human feedback},
  author={Ouyang, Long and Wu, Jeffrey and Jiang, Xu and Almeida, Diogo and Wainwright, Carroll and Mishkin, Pamela and Zhang, Chong and Agarwal, Sandhini and Slama, Katarina and Ray, Alex and others},
  journal={Advances in Neural Information Processing Systems (NeurIPS)},
  volume={35},
  pages={27730--27744},
  year={2022}
}

@article{bai2022constitutional,
  title={Constitutional {AI}: Harmlessness from {AI} feedback},
  author={Bai, Yuntao and Kadavath, Saurav and Kundu, Sandipan and Askell, Amanda and Kernion, Jackson and Jones, Andy and Chen, Anna and Goldie, Anna and Mirhoseini, Azalia and McKinnon, Cameron and others},
  journal={arXiv preprint arXiv:2212.08073},
  year={2022}
}

@book{russell2019humancompatible,
  title={Human Compatible: Artificial Intelligence and the Problem of Control},
  author={Russell, Stuart},
  publisher={Viking},
  year={2019}
}

@article{hadfieldmenell2016cooperative,
  title={Cooperative inverse reinforcement learning},
  author={Hadfield-Menell, Dylan and Russell, Stuart J and Abbeel, Pieter and Dragan, Anca},
  journal={Advances in Neural Information Processing Systems (NeurIPS)},
  volume={29},
  year={2016}
}

@inproceedings{ngo2022alignment,
  title={The alignment problem from a deep learning perspective},
  author={Ngo, Richard and Chan, Lawrence and Mindermann, S{\"o}ren},
  booktitle={International Conference on Learning Representations},
  volume={2024},
  pages={7474--7501},
  year={2024}
}

@article{acemoglu2024harms,
  title={Harms of {AI}},
  author={Acemoglu, Daron},
  journal={Oxford Handbook of {AI} Governance},
  year={2024}
}

@article{immorlica2024algorithmic,
  title={Algorithmic persuasion through simulation},
  author={Harris, Keegan and Immorlica, Nicole and Lucier, Brendan and Slivkins, Aleksandrs},
  journal={arXiv preprint arXiv:2311.18138},
  year={2023}
}

\newpage
\appendix

\section{Proofs of Auxiliary Results}
\label{apx:proofs}

\begin{proof}[Proof of Lemma \ref{lem:LipschitzContinuity}]
The classic reference for the existence of an optimal transport plan is \cite{villani2003topics} (p. 32) or \cite{villani2016optimal} (Theorem 4.11).

To prove that $\generalRegret(q, \info)$ is Lipschitz continuous in $(q, \info)$, it suffices to separately establish the Lipschitz continuity of $\opt(\info)$ and $U(\quota, \info)$ in $(q,\info)$. Lipschitz continuity of $\opt(\info)$ in $\info$ is standard; Lipschitz continuity of $U(q,\info)$ \emph{jointly in $(q,\info)$} is the content of Corollary 3.19 of \cite{ambrosio2021lectures}. We invoke that joint statement when we need control in $q$ as well (e.g.\ in the proofs of Lemma~\ref{lem:quota_local_improve} and Theorem~\ref{thm:quota_optimal}).
The proof of Lipschitz continuity in $\info$ in the reference is a bit heavy, so to be more self-contained, we include below a direct proof that $\generalRegret(q,\info)$ is Lipschitz continuous in $\info$ at fixed $q$; this is the part used most often in our arguments. Continuity in $q$ follows by the same optimal-transport construction with the roles of marginals exchanged.
We first prove that
\begin{align*}
    U(q,\info) = \max_{F} 
\,\,\,&\int_{\allPosterior\times \actions }
{\util(\action,\posterior)} \dd F(\posterior,\action),\\
\text{s.t. } \,\, &F(\allPosterior \times \{a \}) = q(a), \quad \forall a\in \actions, \\
&F(N \times A) = \info(N), \quad \forall N \in \mathcal{B}(\allPosterior),
\end{align*}
is Lipschitz continuous in $
\info$. For any $\info \neq \info'$, denote $G$ as the optimal solution of $d(\info,\info')$ (the existence of $G$ follows from the existence result). 
    $G$, as a joint distribution on $\allPosterior \times \allPosterior$, induces the conditional probability 
    \begin{gather*}
        P_{\info'}^{\info}(N|\mu) : \mathcal{B}(\allPosterior) \times \allPosterior \to [0,1],  \\
        \text{where } \int_{\allPosterior} P_{\info'}^{\info}(N|\posterior)  \dd \info'(\posterior) = \info(N), \quad \forall N \in \mathcal{B}(\allPosterior),\\
         P_{\info'}^{\info}(\cdot |\mu) \text{ is a probability measure on } \allPosterior, \quad \forall \mu.
    \end{gather*}
Now denote $F_{\info}^*$ and $F_{\info'}^*$ as the optimal joint distributions $F$ in the optimal-transport program defining $U(q,\info)$ and $U(q,\info')$ respectively ($\generalRegret(q,\cdot)$ is then constructed from $u^*(\cdot)$ and $U(q,\cdot)$). Because $F_{\info'}^*$ is a joint probability measure on $\allPosterior \times \actions$ with marginal distribution $\info'$, we can combine $P_{\info'}^{\info}(\cdot|\mu)$ and $F_{\info'}^*$ to induce a probability measure on $\allPosterior \times \allPosterior \times \actions$, and denote its marginal distribution on $\allPosterior \times \actions$ as $F_{\info}$, whose marginal distribution on $\allPosterior$ is $\info$. By construction
\begin{align*}
    F_{\info} (N,\action') =    \int_{\allPosterior \times \actions}     P_{\info'}^{\info}(N|\mu) 1_{\action=\action'} \dd F_{\info'}^*(\posterior,\action), \quad \forall \action'\in \actions, N \in \mathcal{B}(\allPosterior).
\end{align*}
Thus, letting $\bar{u} = \max_{a,\state} \abs{u(a,\state)} + \min_{a,\state} \abs{u(a,\state)}$, we have
\begin{align*}
   & \left| \int_{\allPosterior\times \actions }
{\util(\action,\posterior)} \dd F_{\info'}^*(\posterior,\action) - \int_{\allPosterior\times \actions }
{\util(\action,\posterior)} \dd F_{\info}(\posterior,\action) \right|\\
&=\left| \int_{\allPosterior\times \actions } \left[ 
\util(\action,\posterior) - \int_{\allPosterior} \util(a,\posterior') \dd P_{\info'}^{\info} \left(\mu' | \mu \right) \right] \dd F_{\info'}^*(\posterior,\action) \right| \\
& \leq   \int_{\allPosterior\times \actions } 
  \int_{\allPosterior} \left| \util(\action,\posterior) -\util(a,\posterior') \right|\dd P_{\info'}^{\info} \left(\mu' | \mu \right)  \dd F_{\info'}^*(\posterior,\action) \\
  & \leq  \bar{u}\cdot \int_{\allPosterior\times \actions } 
  \int_{\allPosterior}|\posterior-\posterior'|_1 \dd P_{\info'}^{\info} \left(\mu' | \mu \right)  \dd F_{\info'}^*(\posterior,\action),\\
  & = \bar{u}\cdot \int_{\allPosterior} 
  \int_{\allPosterior}|\posterior-\posterior'|_1 \dd P_{\info'}^{\info} \left(\mu' | \mu \right)  \dd \info'(\posterior) = \bar{u}\cdot d(\info,\info').
\end{align*}
Because $F_{\info}$ has marginal distribution $\info$ on $\allPosterior$, it is a feasible solution for the optimization problem defining $U(\quota,\info)$. This means
\begin{align*}
    U(\quota,\info) \geq U(\quota,\info') - \bar{u}\cdot d(\info,\info').
\end{align*}
We can prove the other direction using a symmetric argument and in conclusion
\begin{gather*}
    |U(\quota,\info') - U(\quota,\info) | \leq \bar{u}\cdot d(\info,\info').
\end{gather*}
Using a similar argument one can show the second-best payoff
\begin{align*}
    \opt(\info)=&\int_{\allPosterior} \max_a \,\util(\action,\posterior) \dd \info(\posterior) \\
    =\max_{F} 
\,\,\,&\int_{\allPosterior\times \actions }
{\util(\action,\posterior)} \dd F(\posterior,\action),\\
\text{s.t. } \,\, 
&F(N \times A) = \info(N), \quad \forall N \in \mathcal{B}(\allPosterior),
\end{align*}
is also Lipschitz continuous in $\info$, and so is $\generalRegret(\quota,\info)$.
\end{proof}

\begin{proof}[Proof of Lemma \ref{lem:quota_local_improve}]
The if direction is trivial. 
We now prove the only if direction. 

Suppose by contradiction there exists a quota rule $\quota'$ with strictly lower worst-case regret on set $\allInfo^{\quota}$.
Let 
\begin{align*}
\delta\triangleq \generalRegret(\quota) - \max_{\info\in \allInfo^{\quota}}\generalRegret(\quota',\info) > 0.
\end{align*}
For any information structure $\info \in \Sigma_{\underline{\info}}$ and any $\epsilon > 0$, 
let $B_{\info,\epsilon}$ be an open ball around information structure $\info$ with radius $\epsilon$. 
Recall $\generalRegret(\quota,\info)$ is Lipschitz continuous according to Lemma \ref{lem:LipschitzContinuity}. Suppose the Lipschitz constant is $L>0$. Taking $\epsilon =\delta/2L$, we know for any $\info\in \allInfo^{\quota}$
and any $\info'\in B_{\info,\epsilon}$, 
\begin{align*}
\generalRegret(\quota',\info')\leq \generalRegret(\quota',\info)+\frac{\delta}{2}.
\end{align*}
Let $B_{\epsilon} \triangleq \cup_{\info\in\allInfo^{\quota}} B_{\info,\epsilon}$ and let 
\begin{align*}
\hat{\delta}\triangleq \generalRegret(\quota) 
- \max_{\info \in \Sigma_{\underline{\info}} \backslash B_{\epsilon}} \generalRegret(\quota,\info).
\end{align*}
Since $B_{\epsilon}$ is an open set, $\Sigma_{\underline{\info}} \backslash B_{\epsilon}$ is closed. A closed subset of a compact set is compact so supremum is attained with a maximizer $\info^*\in \Sigma_{\underline{\info}} \backslash B_{\epsilon}$. 
Therefore, 
\begin{align*}
    \hat{\delta} =\generalRegret(\quota) 
- \generalRegret(\quota,\info^*)> 0 .
\end{align*}
Let $\bar{u}_{\gamma} = \gamma\cdot\max_{a,\state} u(a,\state) - (1-\gamma)\cdot\min_{a,\state} u(a,\state)$.
Note that $\bar{u}_{\gamma}$ is an upper bound on the $\gamma$-generalized regret.
Consider the quota rule 
\begin{align*}
\quota'' = \frac{\hat{\delta}}{2(\bar{u}_{\gamma}-\generalRegret(\quota)+\hat{\delta})}\cdot \quota' + \rbr{1-\frac{\hat{\delta}}{2(\bar{u}_{\gamma}-\generalRegret(\quota)+\hat{\delta})}}\cdot \quota.
\end{align*}
For any $\info\in B_{\epsilon}$, we have 
\begin{align*}
\generalRegret(\quota'',\info) &\leq \frac{\hat{\delta}}{2(\bar{u}_{\gamma}-\generalRegret(\quota)+\hat{\delta})}\cdot \rbr{\generalRegret(\quota)-\frac{\delta}{2}} 
+ \rbr{1-\frac{\hat{\delta}}{2(\bar{u}_{\gamma}-\generalRegret(\quota)+\hat{\delta})}}\cdot \generalRegret(\quota)\\
& = \generalRegret(\quota) - \frac{\hat{\delta}\cdot\delta}{4(\bar{u}_{\gamma}-\generalRegret(\quota)+\hat{\delta})}.
\end{align*}
The first inequality holds since the receiver can optimize the payoff given the quota rule constraints for $\quota''$, 
which is weakly better than optimizing them separately given constraints for $\quota$ and $\quota'$.
Similarly, for any $\info \in \Sigma_{\underline{\info}} \backslash B_{\epsilon}$, we have
\begin{align*}
\generalRegret(\quota'',\info) &\leq \frac{\hat{\delta}}{2(\bar{u}_{\gamma}-\generalRegret(\quota)+\hat{\delta})}\cdot \bar{u}_{\gamma}
+ \rbr{1-\frac{\hat{\delta}}{2(\bar{u}_{\gamma}-\generalRegret(\quota)+\hat{\delta})}}\cdot \rbr{\generalRegret(\quota)-\hat{\delta}}\\
& = \generalRegret(\quota) - \frac{\hat{\delta}}{2}.
\end{align*}
The first inequality holds since the regret given quota $\quota'$ is bounded above by the bound $\bar{u}_{\gamma}$, while the regret given quota $\quota$ is at most $\generalRegret(\quota)-\hat{\delta}$ for any $\info\in\Sigma_{\underline{\info}}\backslash B_{\epsilon}$ by definition of~$\hat\delta$. 
The equality holds by simply reorganizing the terms.  
Therefore, the worst-case regret for $\quota''$ is strictly lower than $\quota$ for all information structures, a contradiction.
\end{proof}

\begin{proof}[Proof of \cref{lem:worst_info_quota}]
Note that by pooling the signals such that the optimal actions given the posterior beliefs are the same, the optimal payoff remains unchanged while the expected payoff of any quota rule weakly decreases. 
Therefore, 
there exists an information structure with binary signals that maximizes the receiver's (generalized) regret.

Next we show that there exists a binary partition information structure that maximizes the receiver's regret. 
Let $\chi(\state,s)$ be the joint distribution over  $\state\in\states$ and $s\in\{0,1\}$.
Let $\hat{\states}_s = \{\state\in\states: \chi(\state,s) > 0\}$ for any $s\in\{0,1\}$. 
Denote the posterior mean given signal $s$ by $m_s$. 
Note that it is without loss to assume that $m_0\leq m_1$. 
Let $\indiff$ be the posterior mean such that the receiver is indifferent between action 0 and~1. 
If $m_0\geq \indiff$ or $m_1\leq \indiff$,
by pooling two signals into one,
the optimal payoff remains unchanged while the expected payoff given the quota rule weakly decreases. 
Thus the regret weakly increases.\footnote{This corresponds to the degenerate case of binary partition information structure where one of the signals occurs with probability $0$.} 

Now we focus on the case where $m_0<\indiff<m_1$.
Let $p_s = \sum_{\state\in\states} \chi(\state,s)$
be the probability signal $s$ is sent. 
We first consider the case where the quota $\quota \geq p_1$. 
The case where $\quota \leq p_1$ can be proved analogously.

Let $\underline{\state}_s$($\bar{\state}_s$) be the smallest(largest) state in~$\hat{\states}_s$.
Suppose by contradiction that $\bar{\state}_0 > \underline{\state}_1$. 
If $\underline{\state}_1 > m_0$, we have $\underline{\state}_0\leq m_0 < \underline{\state}_1$
and there exists $\epsilon > 0$ such that: 
\begin{enumerate}[(i)]
\item $\epsilon \leq \chi(\underline{\state}_1,1)$;
\item $\epsilon\cdot\frac{\underline{\state}_1 - \underline{\state}_0}{\bar{\state}_0-\underline{\state}_0}
\leq \chi(\bar{\state}_0,0)$; and 
\item $\epsilon\cdot\frac{\bar{\state}_0 - \underline{\state}_1}{\bar{\state}_0-\underline{\state}_0}
\leq \chi(\underline{\state}_0,0)$.
\end{enumerate}
Let $\hat{f}(\bar{\state}_0) = \epsilon\cdot\frac{\underline{\state}_1 - \underline{\state}_0}{\bar{\state}_0-\underline{\state}_0}$,
$\hat{f}(\underline{\state}_0) = \epsilon\cdot\frac{\bar{\state}_0 - \underline{\state}_1}{\bar{\state}_0-\underline{\state}_0}$,
$\hat{f}(\underline{\state}_1) = -\epsilon$,
and $\hat{f}(\state)=0$ for any $\state\not\in\{\underline{\state}_0,\bar{\state}_0,\underline{\state}_1\}$. 
Consider another information structure $\hat{\chi}$ such that 
\begin{align*}
\hat{\chi}(\state,s) 
= \begin{cases}
\chi(\state,s) - \hat{f}(\state) & s=0\\
\chi(\state,s) + \hat{f}(\state) & s=1.
\end{cases}
\end{align*}
Given information structure $\hat{\chi}$, the probability of each signal and their posterior means remain unchanged, 
and hence the regret remains the same. 
Moreover, the lowest state that sends signal $1$ is below $m_0$ in $\hat{\chi}$.
Therefore, we can focus on the case where $\underline{\state}_1 \leq m_0$ given $\chi$.

Since $\bar{\state}_1\geq m_1>m_0$, 
there exists $\bar{\epsilon} > 0$
such that 
\begin{enumerate}[(i)]
\item $\bar{\epsilon}\cdot\frac{m_0 - \underline{\state}_1}{\bar{\state}_1-\underline{\state}_1}
\leq \chi(\bar{\state}_1,1)$; and 
\item $\bar{\epsilon}\cdot\frac{\bar{\state}_1 - m_0}{\bar{\state}_1-\underline{\state}_1}
\leq \chi(\underline{\state}_1,1)$.
\end{enumerate}
Let $\bar{f}(\bar{\state}_1) = \bar{\epsilon}\cdot\frac{m_0 - \underline{\state}_1}{\bar{\state}_1-\underline{\state}_1}$,
and 
$\bar{f}(\underline{\state}_1) = \bar{\epsilon}\cdot\frac{\bar{\state}_1 - m_0}{\bar{\state}_1-\underline{\state}_1}$. 
Consider another information structure $\bar{\chi}$ such that 
\begin{align*}
\bar{\chi}(\state,s) 
= \begin{cases}
\chi(\state,s) + \bar{f}(\state) & s=0\\
\chi(\state,s) - \bar{f}(\state) & s=1.
\end{cases}
\end{align*}
That is, $\bar{\chi}$ shifts probability mass $\bar{f}$ from signal $1$ to signal $0$. 
We show that:
\begin{enumerate}
   
\item The optimal payoff strictly increases given $\bar{\chi}$.
This is because the unique optimal action for probability mass $\bar{f}$ is $0$ since its conditional expectation is $m_0 < \indiff$. 
However, the action chosen for $\bar{f}$ is $1$
given $\chi$,
leading to a strict payoff loss; 

\item The expected payoff given the quota rule remains unchanged. 
This is because when $\quota \geq p_1$, 
action $1$ is chosen for signal 1 in both cases.
By moving probability mass $\bar{f}$ from signal 1 to $0$, 
since the posterior mean of $\bar{f}$ is $m_0$, 
it is without loss to assign action 1 for $\bar{f}$ given $\bar{\chi}$ and quota $\quota$, 
leading to the same distribution over outcomes, 
and hence the same expected payoff.
\end{enumerate}
Therefore, when $\gamma\in(0,1)$, $\bar{\chi}$ leads to strictly higher regret, a contradiction. 
If $\gamma = 0$, no information, a special case of binary partition information structures, maximizes regret.
\end{proof}

\begin{proof}[Proof of \cref{prop:optimal_quota_binary}]
We first derive the expression of $R_\gamma(\quota,\pi_l)$ and $R_\gamma(\quota,\pi_r)$.

Let $F$ be the cumulative distribution function of the prior $\prior$
and we define $F^{-1}(p) \triangleq \inf \lbr{\state\given F(\state)\geq p}$. 
Let~$p_0$ be the probability of the low signal in a binary monotone partition information structure $\info$.
Since the utility for choosing action $0$ is $0$, if information structure $\info$ is left-biased for quota $\quota$, 
we have 
\begin{align*}
\generalRegret(\quota,\info) = 
\rbr{ \gamma - (1-\gamma)\cdot\frac{q}{1-p_0}} \cdot \int_{p_0}^1 F^{-1}(p) \dd p, 
\end{align*}
where $\int_{p_0}^1 F^{-1}(p)\,\dd p=(1-p_0)\,m_1$ is the signal-probability-weighted contribution to ex ante utility from the signal-$1$ branch (so that the conditional expected utility of action $1$ after signal $1$ is $m_1=\tfrac{1}{1-p_0}\int_{p_0}^1 F^{-1}(p)\,\dd p$), and $\frac{q}{1-p_0}$ is the probability of choosing action $1$ conditional on receiving signal $1$, under the quota rule $q$ and given left-biased information structure $\info$.
Similarly, if information structure $\info$ is right-biased for quota $\quota$, we have
\begin{align*}
\generalRegret(\quota,\info) = 
- (1-\gamma) \cdot \frac{p_0-1+q}{p_0} \cdot \int_0^{p_0} F^{-1}(p) \dd p
+ (2\gamma-1) \cdot \int_{p_0}^1 F^{-1}(p) \dd p 
\end{align*}
where $\int_0^{p_0} F^{-1}(p)\,\dd p=p_0\, m_0$ is the signal-probability-weighted contribution to ex ante utility from the signal-$0$ branch (so that the conditional expected utility of action $1$ after signal $0$ is $m_0=\tfrac{1}{p_0}\int_0^{p_0} F^{-1}(p)\,\dd p$), and $\frac{p_0-1+q}{p_0}$ is the probability of choosing action $1$ conditional on receiving signal~$0$, under the quota rule $q$ and given right-biased information structure $\info$.

Moreover, let 
\begin{align*}
    z_1 &= \sup \lbr{ z\given  \int_0^{z} F^{-1}(p) \dd p \leq 0 },\\
    z_0 &= \inf \lbr{ z\given  \int_{z}^1 F^{-1}(p) \dd p \geq 0 }. 
\end{align*}
Intuitively, $z_1$ is the cutoff such that, for any binary monotone partition information structure $\info$ with $p_0 > z_1$, the receiver's second-best action is always $1$. Similarly, $z_0$ is the cutoff such that if $p_0 < z_0$, the receiver's second-best action is always $0$.
Therefore, to maximize the $\gamma$-generalized regret, it is sufficient to consider information structures with $p_0\in [z_0,z_1]$. 
Since information structure $\info$ is left-biased for quota rule $\quota$ if $p_0 \leq 1-q$,
and is right-biased for quota rule $\quota$ if 
$p_0 \geq 1-q$,
the left-biased error and the right-biased error for quota rule $\quota$ are 
\begin{align*}
R_\gamma(\quota,\pi_l) &= \max_{p_0 \in[z_0,z_1],\,p_0 \leq 1-q} 
\,\,\rbr{ \gamma - (1-\gamma)\cdot\frac{q }{1-p_0}} \cdot \int_{p_0}^1 F^{-1}(p) \dd p, \\
R_\gamma(q,\pi_r) &= \max_{p_0 \in[z_0,z_1],\,p_0 \geq 1-q} 
\,\,\,- (1-\gamma) \cdot \frac{p_0-1+q}{p_0} \cdot \int_0^{p_0} F^{-1}(p) \dd p 
+ (2\gamma-1) \cdot \int_{p_0}^1 F^{-1}(p) \dd p. 
\end{align*}

Now, to prove \cref{prop:optimal_quota_binary}, let $\hat{\info}\in\allInfo_l(\quota)$ be the regret maximizing left-biased information structure for quota rule $\quota$. 
For any $\quota' < \quota$, we have $\allInfo_l(\quota) \subseteq \allInfo_l(\quota')$
since any information structure that is left-biased for $\quota$ is also left-biased for $\quota'$. 
Therefore $\hat{\info} \in \allInfo_l(\quota')$. 

Let $m_0\leq \indiff\leq m_1$ be the posterior means of signals $0$ and $1$ in information structure $\hat{\info}$. 
First note that it cannot be the case that $m_0 < \indiff = m_1$. 
This is because by pooling two signals, the optimal payoff remains unchanged and the expected payoff from the quota rule strictly decreases, 
contradicting the assumption that $\hat{\info}$ maximizes the regret. 

Second, when $m_0 = \indiff = m_1$, 
the Bayesian consistency constraint implies that $\priorMean=\indiff$. 
In this case, decreasing the quota for action 1 from $\quota$ to $\quota'$ 
does not affect the expected payoff of the receiver, and hence  
\begin{align*}
\generalRegret(\quota,\hat{\info}) = \generalRegret(\quota',\hat{\info}),
\end{align*}

Third, when $\indiff < m_1$, by decreasing the quota for action~1 from~$\quota$ to $\quota'$, 
the expected payoff of the receiver strictly decreases. 
Again we have $\generalRegret(\quota,\hat{\info}) < \generalRegret(\quota',\hat{\info})$.
Therefore, the left-biased error is weakly decreasing in $\quota$ and strictly decreasing if $\priorMean\neq\indiff$. 

Similarly, the right-biased error is weakly increasing in $\quota$ and strictly increasing if $\priorMean\neq\indiff$. To minimize the maximum regret, the optimal quota is obtained when two errors are equal. 
\end{proof}

\begin{proof}[Proof of \cref{cor:unique_interior}]

We first show that the optimal quota is in the interior. 
Note that with the help of \cref{prop:optimal_quota_binary}, it is sufficient to show that 
$R_\gamma(0,\pi_l) > R_\gamma(0,\pi_r)$ and $R_\gamma(1,\pi_l) < R_\gamma(1,\pi_r)$.
Indeed, when quota $\quota = 0$, the only feasible information structure for right-biased error is no information, and the worst-case regret in this case is 
\begin{align*}
R_\gamma(0,\pi_r)=\gamma\cdot\opt(\noInfo)-(1-\gamma)\cdot \util(0, \noInfo).
\end{align*}
However, consider the information structure $\hat{\info}$ that reveals whether $\state > 0$. 
This information structure is left-biased, and hence the left-biased error is 
\begin{align*}
R_\gamma(0,\pi_l) \geq \gamma\cdot\opt(\hat{\info})-(1-\gamma)\cdot \util(0, \hat{\info})
> \gamma\cdot\opt(\noInfo)-(1-\gamma)\cdot \util(0, \noInfo)
= R_\gamma(0,\pi_r)
\end{align*}
where the last inequality holds since $\opt(\hat{\info}) > \opt(\noInfo)$ under the assumption that there exists a state $\state$ in the support of the prior such that $\state\in\states_1$, 
and $\util(0, \hat{\info})= \util(0, \noInfo)$ by Bayesian consistency. 
Similarly, we can show that $R_\gamma(1,\pi_l) < R_\gamma(1,\pi_r)$ and \cref{cor:unique_interior} holds.

Finally, the uniqueness comes from the fact that both the left-biased error and the right-biased error are strictly monotone in quota when $\priorMean\neq\indiff$. 
\end{proof}

\begin{proof}[Proof of \cref{prop:fosd}]

We  omit the subscript of $\gamma$ in notations as $\gamma = \frac{1}{2}$.
In this case, the left-biased error and the right-biased error are simplified to 
\begin{align*}
R_\gamma(\quota,\pi_l) &= \max_{p_0 \in[z_0,z_1],\,p_0 \leq 1-q} 
\,\,\rbr{\frac{1}{2}-\frac{q}{2(1-p_0)}} \cdot \int_{p_0}^1 F^{-1}(p) \dd p, \\
R_\gamma(\quota,\pi_r) &= \max_{p_0 \in[z_0,z_1],\,p_0 \geq 1-q} 
\,\,- \frac{p_0-1+q}{2p_0} \cdot \int_0^{p_0} F^{-1}(p) \dd p. 
\end{align*}

Note that prior $\hat{\prior}$ first order stochastically dominates $\prior$ if and only if 
$\hat{F}^{-1}(p) \geq F^{-1}(p)$ for any $p\in[0,1]$. 
Therefore, the thresholds 
$\hat{z}_0 \leq z_0$ and $\hat{z}_1 \leq z_1$. 
Let $q$ be the optimal quota rule for prior $\prior$
and $\hat{q}$ be the optimal quota rule for prior $\hat{\prior}$. 
If $\hat{z}_1 < 1-q$, in order to equalize the left-biased error and right-biased error given prior $\hat{\prior}$, 
we must have $1-\hat{q} \leq \hat{z}_1 < 1-q$, 
and hence $\hat{q} \geq q$. 

Thus it is sufficient to focus on the case when $\hat{z}_1 \geq 1-q$. 
Let $p^L_0$ be the probability that maximizes the left-biased error given prior $\prior$. 
Since $p^L_0 \leq 1-q\leq \hat{z}_1$, $p^L_0$ is also a feasible choice for left-biased error given prior $\hat{\prior}$. 
Since $\hat{F}^{-1}(p) \geq F^{-1}(p)$ for any $p\in[0,1]$,  
the left-biased error given the choice of $p^L_0$ is larger given prior $\hat{\prior}$ compared to given prior $\prior$. 
Therefore, the left-biased error is larger in $\hat{\prior}$. 

Let $y = \sup\lbr{p \given F^{-1}(p) \leq 0}$. 
It is easy to verify that $z_0\leq y\leq z_1$. 
For any $p_0 \in [\hat{z}_0, \hat{z}_1]$, since $p^L_0 \leq 1-q\leq \hat{z}_1$, 
the right-biased error given the choice of $p_0$ is smaller given prior~$\hat{\prior}$ compared to given prior $\prior$. 
Moreover, for any $p_0 \in [\hat{z}_0, y]$, the right-biased error of $p_0$ is smaller than the right-biased error of $y$ given prior $\prior$. 
Therefore, the right-biased error is larger in $\prior$.
By \cref{prop:optimal_quota_binary}, the optimal quota rule must equalize the left-biased error and right-biased error, 
and hence $\hat{q} \geq q$. 
\end{proof}

\begin{proof}[Proof of \cref{prop:monotone_second_order}]

The proof of \cref{prop:monotone_second_order} relies on a refined characterization of the worst-case binary partition information structure for any given quota rule.
\begin{lemma}\label{lem:refined_worst_info}
For any $\gamma\in[0,1)$ and any quota $\quota$, 
there exists a binary partition information structure with $\min \hat{\states}_1 > 0$ that maximizes the right-biased regret,
and there exists a binary partition information structure with $\max \hat{\states}_0 < 0$ that maximizes the left-biased regret.
\end{lemma}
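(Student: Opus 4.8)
The plan is to take the worst-case binary monotone partition supplied by \cref{lem:worst_info_quota} and then iteratively ``prune'' the wrong-signed states out of the signal on which the receiver randomizes, checking that each pruning step weakly raises the $\gamma$-generalized regret. I would only prove the statement about the right-biased regret: the left-biased statement is its mirror image, obtained by relabeling action $0\leftrightarrow$ action $1$ and replacing $\state$ by $-\state$; this maps $\states\subseteq[-1,1]$ to itself, sends the quota $\quota$ to $1-\quota$, and exchanges left-biased errors with right-biased errors and the condition $\max\hat{\states}_0<0$ with $\min\hat{\states}_1>0$, so the two halves are equivalent.

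Concretely, I would fix a right-biased binary monotone partition $\info^*$ attaining $\Rerror(\quota)$ (one exists since $\Rerror(\quota)$ is by definition a maximum over such partitions; cf.\ \cref{lem:worst_info_quota} and \cref{lem:LipschitzContinuity}). If the high signal of $\info^*$ has probability zero the claim is vacuous, so assume $\info^*$ is genuine, and by the pooling step in the proof of \cref{lem:worst_info_quota} assume its posterior means obey $m_0\le\indiff=0\le m_1$, so that $\opt(\info^*)=p_1 m_1$ and, because right-biasedness means $\quota\ge p_1$, $U(\quota,\info^*)=p_1 m_1+(\quota-p_1)m_0$ (the quota rule plays action $1$ on the whole high signal and spreads the residual mass $\quota-p_1$ uniformly over the low signal). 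Now suppose $\min\hat{\states}_1\le 0$; let $\state_i\le 0$ be the smallest state in $\hat{\states}_1$, let $\epsilon>0$ be the mass it sends to the high signal, and reroute that mass to the low signal, obtaining $\info'$. Since $\info^*$ is monotone, $\info'$ is again a binary monotone partition; since $p_1$ only decreases, $\info'$ is again right-biased, still genuine (the high signal keeps positive probability because $\supp\prior$ meets $\states_1$), and its posterior means still straddle $\indiff=0$. The rerouted mass was receiving action $1$ in the first best of $\info^*$ while contributing the nonpositive amount $\epsilon\state_i$, so $\opt(\info')=\opt(\info^*)-\epsilon\state_i\ge\opt(\info^*)$; and a short expansion of the two quota payoffs collapses to
\[
U(\quota,\info')-U(\quota,\info^*)=\frac{\epsilon\,(1-\quota)\,(m_0-\state_i)}{p_0+\epsilon}\le 0,
\]
because every low-signal state lies weakly below $\state_i$ (so $m_0\le\state_i$) and $\quota\le 1$. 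Hence $\generalRegret(\quota,\info')=\gamma\,\opt(\info')-(1-\gamma)\,U(\quota,\info')\ge\generalRegret(\quota,\info^*)=\Rerror(\quota)$, so $\info'$ is again a maximizer. Iterating finitely many times (finite $\states$) removes every nonpositive state from the high signal, and the resulting maximizer has high-signal support $\{\state\in\supp\prior:\state>0\}$, a nonempty finite set of positive reals, whence $\min\hat{\states}_1>0$.

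The step I expect to be the main obstacle is the displayed identity for $U(\quota,\info')-U(\quota,\info^*)$: it is a brief but somewhat delicate simplification, and it is exactly its sign that makes pruning weakly beneficial --- rerouting a nonpositive state into the low signal does raise $m_0$, but it simultaneously enlarges the residual quota that is spent uniformly at that still-nonpositive mean, and the latter effect dominates. The rest is routine bookkeeping: confirming that ``binary monotone partition'', ``right-biased'', and $m_0\le 0\le m_1$ are preserved at every step (which is why states are peeled off starting from the threshold), and disposing of the degenerate endpoints ($\quota\in\{0,1\}$ and empty signals) by direct inspection.
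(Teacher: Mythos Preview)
The proposal is correct and follows essentially the same approach as the paper: both arguments raise the cutoff of the monotone partition so that no nonpositive state remains in $\hat{\states}_1$, checking that $\opt$ weakly rises (removing states where forcing action $1$ hurts) and $U(\quota,\cdot)$ weakly falls. The only difference is presentational---the paper makes one jump directly to the partition ``signal $1$ iff $\state>0$'' and invokes a one-line assortativity claim for $U$, whereas you peel off one nonpositive state at a time and supply the explicit identity $U(\quota,\info')-U(\quota,\info^*)=\epsilon(1-\quota)(m_0-\state_i)/(p_0+\epsilon)$, which is a bit longer but more transparent.
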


\begin{proof}[Proof of \cref{lem:refined_worst_info}]
We prove the characterization for right-biased regret, and the other case holds analogously. 
To maximize the right-biased error, the lowest state in $\hat{\states}_1$ is positive. 
For any prior $\prior$, let $\hat{\state}_{\prior}$ be the minimum state that is strictly positive in the support of $\prior$. 
For any quota rule~$\quota$ and any binary partition information structure $\info$ with cutoff state $\hat{\state}\leq 0$ (lowest state in $\hat{\states}_1$)
such that $\info$ is right-biased for quota rule $\quota$, 
let $\info'$ be the binary partition information structure with cutoff state $\hat{\state}_{\prior}$, that is, the sender sends signal $1$ if and only if the state is strictly positive. 
Note that $\info'$ is also right-biased since the probability of choosing action 1 in the optimal strategy is smaller. 
Moreover, 
\begin{align*}
\generalRegret(\quota,\info) = 
\gamma\cdot\opt(\info)-(1-\gamma)\cdot U(\quota, \info)
\leq \gamma\cdot\opt(\info')-(1-\gamma)\cdot U(\quota, \info')
= \generalRegret(\quota,\info')
\end{align*}
where the inequality holds because $\opt(\info) \leq \opt(\info')$
since $\opt(\info')$ achieves the second-best payoff for the receiver, 
and $U(\quota, \info) \geq U(\quota, \info')$
since by increasing the cutoff, the allocation is less assortative for non-positive states, leading to lower expected payoff. 
\end{proof}

Now we prove \cref{prop:monotone_second_order}. We will show that the optimal quota is weakly higher under $\prior$ if $\prior$ is a mean-preserving spread of $\prior'$ in $\states_0$. 
The other direction holds analogously. 

For any state $\hat{\state}>0$,
let $\info$ be the binary partition information structure with cutoff state $\hat{\state}$ given prior $\prior$,
and let $\info'$ be the binary partition information structure with cutoff state $\hat{\state}$ given prior $\prior'$, 
where $\prior$ is a mean-preserving spread of $\prior'$ in $\states_0$.
Note that the information structure $\info$ is right-biased for $\quota$
if and only if $\info'$ is right-biased for $\quota$
since both $\prior$ and $\prior'$ coincide for the states in $\states_1$.
Therefore, the receiver's regret is 
\begin{align*}
\generalRegret(\quota,\info) = 
\gamma\cdot\opt(\info)-(1-\gamma)\cdot U(\quota, \info)
= \gamma\cdot\opt(\info')-(1-\gamma)\cdot U(\quota, \info')
= \generalRegret(\quota,\info')
\end{align*}
where the second equality holds, since having a mean preserving spread for states in $\states_0$ does not affect either the optimal payoff or the expected payoff under a fixed quota. 
By \cref{lem:refined_worst_info}, it is sufficient to consider information structures with strictly positive cutoffs to maximize the right-biased regret. 
Thus, for any quota $\quota$, 
the right-biased error $R_\gamma(\quota,\pi_r)$ remains unchanged in $\prior$ in second order stochastic dominance in $\states_0$. 

Finally, having a mean preserving spread in $\states_0$ weakly enriches the set of possible information structures that is left-biased for any quota $\quota$. 
Therefore, for any quota $\quota$, 
the left-biased error $R_\gamma(\quota,\pi_l)$ is weakly larger under a mean-preserving spread in $\prior$; equivalently, $R_\gamma(\quota,\pi_l)$ is weakly decreasing in the second-order stochastic dominance order on $\rho|_{\Omega_0}$ (the distribution restricted to $\Omega_0$). 
By \cref{prop:optimal_quota_binary}, 
the optimal quota equalizes two errors. Since the right-biased error is unchanged under spreads in $\Omega_0$, an increase in the left-biased error forces the receiver to set a higher quota for action $1$ in order to re-equalize the two sides. Hence, the optimal quota for action $1$ is weakly higher when $\prior$ is a mean-preserving spread of $\prior'$ in $\Omega_0$.
\end{proof}

We use the following lemma in the Proof of \cref{thm:extension_quota_optimal}:
\begin{lemma}
\label{lem:convexU}
    $U(\quota,\info)$ is concave in $\quota$, so $\generalRegret(\quota,\info)$ is convex in $\quota$.
\end{lemma}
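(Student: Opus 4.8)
The plan is to exploit the optimal-transport (linear-programming) representation of $U(\quota,\info)$ displayed just before \cref{lem:LipschitzContinuity}: the value $U(\quota,\info)$ equals the maximum of the \emph{linear} functional $F \mapsto \int_{\allPosterior \times \actions} \util(\posterior,\action)\dd F(\posterior,\action)$ over the convex set of joint measures $F$ on $\allPosterior \times \actions$ whose $\actions$-marginal equals $\quota$ and whose $\allPosterior$-marginal equals $\info$. The key observation is that, with $\info$ held fixed, varying $\quota$ changes only one of the two marginal constraints, and that constraint depends affinely on $\quota$.

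Concretely, fix $\info$, fix $\quota_1,\quota_2 \in \Delta\actions$ and $\lambda\in[0,1]$, and set $\quota = \lambda\quota_1 + (1-\lambda)\quota_2 \in \Delta\actions$. By \cref{lem:LipschitzContinuity} the maxima defining $U(\quota_1,\info)$ and $U(\quota_2,\info)$ are attained, say at feasible plans $F_1$ and $F_2$. Consider $F \triangleq \lambda F_1 + (1-\lambda)F_2$. Its $\allPosterior$-marginal is $\lambda\info + (1-\lambda)\info = \info$ and its $\actions$-marginal is $\lambda\quota_1 + (1-\lambda)\quota_2 = \quota$, so $F$ is feasible for the program defining $U(\quota,\info)$. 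Since the objective is linear in $F$,
\[
U(\quota,\info) \;\geq\; \int_{\allPosterior\times\actions}\util \dd F \;=\; \lambda\int_{\allPosterior\times\actions}\util\dd F_1 + (1-\lambda)\int_{\allPosterior\times\actions}\util\dd F_2 \;=\; \lambda\, U(\quota_1,\info) + (1-\lambda)\, U(\quota_2,\info),
\]
which is exactly concavity of $\quota\mapsto U(\quota,\info)$ on $\Delta\actions$.

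Finally, $\generalRegret(\quota,\info) = \gamma\cdot\opt(\info) - (1-\gamma)\cdot U(\quota,\info)$; since $\gamma\in[0,1)$ we have $1-\gamma>0$ and $\opt(\info)$ does not depend on $\quota$, so $\generalRegret(\cdot,\info)$ is a constant minus a positive multiple of a concave function, hence convex in $\quota$. There is no substantive obstacle; the only point requiring a little care is the measure-theoretic bookkeeping — that the two marginal constraints really are preserved under convex combinations (immediate) and that optimal transport plans exist so the chain of equalities is literal rather than up to $\epsilon$ — both of which are supplied by \cref{lem:LipschitzContinuity}. (If one wished to avoid invoking attainment, the identical argument with $\epsilon$-optimal $F_1,F_2$ yields $U(\quota,\info) \geq \lambda U(\quota_1,\info) + (1-\lambda) U(\quota_2,\info) - \epsilon$ for every $\epsilon>0$, which suffices.)
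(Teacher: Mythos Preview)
Your proof is correct and follows essentially the same approach as the paper: take optimal transport plans $F_1,F_2$ for $U(\quota_1,\info)$ and $U(\quota_2,\info)$, observe that their convex combination is feasible for $U(\lambda\quota_1+(1-\lambda)\quota_2,\info)$, and use linearity of the objective. Your version is slightly more detailed---you explicitly verify the marginals, cite \cref{lem:LipschitzContinuity} for existence of optimizers, and spell out the passage from concavity of $U$ to convexity of $\generalRegret$---but the argument is the same.
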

\begin{proof}
Recall that
\begin{align*}
U(\quota,\info) = \max_{F}
&\int_{\allPosterior\times \actions }
{\util(\action,\posterior)} \dd F(\posterior,\action) ,\\
\text{s.t.} \, &\,F(\allPosterior \times \{a \}) = q(a), \quad \forall a\in \actions, \\
&\, F(N \times A) = \info(N), \quad \forall N \in \mathcal{B}(\allPosterior).
\end{align*}
Fixing $\info$, denote $F^*_{q}$ as the optimal solution of $U(\quota,\info)$. For any $\quota_1$, $\quota_2$, $\lambda \in (0,1)$ and $\quota_\lambda = \lambda \quota_1 + (1-\lambda) \quota_2$, we know 
$$F_\lambda = \lambda F^*_{\quota_1} + (1-\lambda) F^*_{\quota_2}$$
as a joint distribution over $\Delta \states \times \actions$ is a feasible solution of $U(\quota_\lambda,\info)$. Thus, we know
\begin{equation*}
    U(\quota_\lambda,\info) \geq \lambda U(\quota_1,\info) + (1-\lambda) U(\quota_2,\info).\qedhere
\end{equation*}
\end{proof}

\begin{proof}[Proof of \cref{thm:extension_quota_optimal}]
The proof is a modified version of \cref{thm:quota_optimal}. Recall $\underline{\info}$ is the least informative information structure.
Recall that by definition, $U(\quota,\info)$ is monotone increasing in the Blackwell order of $\info$ and
\begin{align*}
    \generalRegret(\quota,\info) = \gamma \cdot \opt(\info) - (1-\gamma) \cdot U(\quota,\info). 
\end{align*}

We prove the theorem by contradiction. Suppose there is a general mechanism $(\info,\alpha)$ that induces strictly less regret than the quota rule $\quota_\gamma^*$. 
Denote $q(\availableInfo,v)$ as the marginal distribution of actions induced by this general mechanism. Denote
\begin{align*}
    Q(\availableInfo) = \{ q | ~ \exists v ~ \text{s.t. } \quota= \quota(\availableInfo,v) \}
\end{align*}

If $\allInfo^{\quota_\gamma^*}$ is a singleton, denote it as $\allInfo^{\quota_\gamma^*}= \{\bar{\info} \}$, then by \cref{lem:quota_local_improve}, under $\bar{\info}$, quota rule $\quota_\gamma^*$ maximizes the receiver's payoff among all decision rules. 
Moreover, since $\bar{\info}$ Blackwell dominates $\underline{\info}$, 
for any action rule $\beta$ that maps  $\allPosterior  \to \Delta A$ and the quota $q_\beta$ it induces,
\begin{align*}
U(\beta,\underline{\info}) \leq U(\quota_{\beta},\underline{\info})
\leq U(\quota_{\beta},\bar{\info})
\leq U(\quota_\gamma^*,\bar{\info}).
\end{align*}
Therefore, in this case, if the set of available information structures is $\availableInfo=\{\bar{\info},\underline{\info}\}$, regardless of which information is chosen by the mechanism, the utility of the receiver is always weakly lower than $U(\quota_\gamma^*,\bar{\info})$. This implies that the generalized regret of the receiver is weakly higher than $\generalRegret(\quota_\gamma^*)$, a contradiction.

If $\allInfo^{\quota_\gamma^*}$ is not a singleton, denote $\quota_S \in Q(\{\underline{\info}\})$. According to \cref{lem:quota_local_improve}, there exists  $\info'\in \allInfo^{\quota_\gamma^*}$  such that
\begin{gather}
\label{eq_extend_Compare}
    \generalRegret(\quota_\gamma^*) \leq  \generalRegret(\quota_S,\info').
\end{gather}
Since $\generalRegret(\quota,\info)$ is convex in $\quota$, the lower-contour set is convex. There exists a separating plane  characterized by $v'$ and $c$ such that $q_S \cdot v' = c$ and 
\begin{gather}
    \generalRegret(\quota,\info') \geq  \generalRegret(\quota_S,\info') ,\quad \forall \quota ~\text{s.t.} ~ q\cdot v' \geq c. \label{eq_separatingPlane}
\end{gather}
Next, we claim that $\quota_S \in \overline{\text{Conv}} ( Q(\{ \info',\underline{\info}  \}))$, i.e., the closure of the convex hull of $ Q(\{ \info',\underline{\info}  \})$. Suppose not, according to the Separating Hyperplane Theorem and the compactness of the space of marginal action distribution, there exists a vector $v_B''$ and a gap $\Delta>0$ such that
\begin{gather*}
    v_B'' \cdot \quota_S \geq  v_B'' \cdot \quota + \Delta ,\quad \forall \quota \in Q(\{ \info',\underline{\info}  \}).
\end{gather*}
This violates the incentive compatibility of the general mechanism. Because when  $ \availableInfo = \{ \info',\underline{\info}  \}$,  the sender whose bias is $ \lambda v''_B$ with a sufficiently large $\lambda >2\max_{\state,\action} |u(\action,\state)|/\Delta$, finds it profitable to deviate to reporting  $\hat{\availableInfo} = \{ \underline{\info} \}$ jointly with an appropriate report on $v$ to induce $\quota_S$.

Now given that $\quota_S \in \overline{\text{Conv}} ( Q(\{ \info',\underline{\info}  \}))$, and  $U(q,\info)$ is Lipschitz continuous in $q$ according to \cref{lem:LipschitzContinuity}, we know for sufficiently large $\lambda$, the sender with bias $\lambda v'$ would choose a decision that induces a marginal distribution $q$ maximizing $q\cdot v'$. Thus,  $\quota(\{ \info',\underline{\info}  \}, \lambda v')$ must satisfy:
\begin{gather*}
    \quota(\{ \info',\underline{\info}  \},\lambda v') \cdot v' = \max_{\quota \in Q(\{ \info',\underline{\info}  \})} \quota \cdot v'   =   \max_{\quota \in  \overline{\text{Conv}} ( Q(\{ \info',\underline{\info}  \}))  } \quota \cdot v'       \geq \quota_S \cdot v' = c.
\end{gather*}
According to the separating hyperplane characterized in \cref{eq_separatingPlane}, 
\begin{align}
\label{eq_extend_Compare2}
    \generalRegret(  \quota(\{ \info',\underline{\info}  \},\lambda v') , \info') \geq \generalRegret(\quota_S,\info').
\end{align}
We already know $U(\quota,\info)$ is increasing in the Blackwell order, so
\begin{gather}
\label{eq_extend_Blackwell}
     U(  \quota(\{ \info',\underline{\info}  \},\lambda v') , \underline{\info} )  \leq U(  \quota(\{ \info',\underline{\info}  \},\lambda v') , \info'). 
\end{gather}
Thus, regardless of which information structure the mechanism chooses $\info(\{ \info',\underline{\info}  \})$,
\begin{align*}
    &\generalRegret( \{ \info',\underline{\info}  \},\lambda v') \\
    =&  \gamma \cdot \opt(\info') - (1-\gamma) \cdot  U(  \alpha (\{ \info',\underline{\info}  \},\lambda v') ,\info (\{ \info',\underline{\info}  \},\lambda v'))\\
    \geq &  \gamma \cdot \opt(\info') - (1-\gamma) \cdot  U(  \quota (\{ \info',\underline{\info}  \},\lambda v') ,\info (\{ \info',\underline{\info}  \},\lambda v')) \\
    \geq &    \gamma \cdot \opt(\info') - (1-\gamma) \cdot U(  \quota(\{ \info',\underline{\info}  \},\lambda v') , \info')   \\
    = &  \generalRegret(  \quota(\{ \info',\underline{\info}  \},\lambda v') , \info') \\
    \geq&  \generalRegret(\quota_S,\info')  \geq  \generalRegret(\quota_\gamma^*).
\end{align*}
The first inequality comes from the fact that the quota rule (in the narrow sense) maximizes receiver's expected utility among action rules with the same marginal distribution over actions. The second inequality comes from \cref{eq_extend_Blackwell}. The third inequality comes from \cref{eq_extend_Compare2}. The final inequality comes from \cref{eq_extend_Compare}.
This contradicts the hypothesis that the general mechanism $(\info,\alpha)$ is a strict improvement.

\end{proof}

\section{State-independent Utilities}
\label{apx:state independent}
In this section, we consider a special case in which the sender's utility is independent of the state. Formally, the set of possible utilities is given by
\begin{gather*}
    V = \{ v | v : A\to \mathbb{R} \}. 
\end{gather*}
In this case, under any decision rule committed to by the receiver, if the sender breaks ties in favor of the receiver, then her behavior under any quota rule is consistent with that of a sender whose utility is the sum of the receiver's utility and a state-independent bias. Applying the same reasoning as in \cref{thm:quota_optimal}, we conclude that the quota rules remain robustly optimal under state-independent sender utilities. Moreover, the optimal quota remains the same. 

The choice of tie-breaking rule plays a crucial role in determining the receiver's robustly optimal decision rule when restricting attention to state-independent utilities for the sender. Specifically, if the sender adopts a tie-breaking rule that minimizes the receiver's utility, it can still be shown that a quota rule remains optimal. However, the optimal quota under this adversarial tie-breaking rule differs from that derived under a receiver-favorable tie-breaking rule. In fact, when the tie-breaking rule minimizes the receiver's utility, there exists an adversarial instance in which the sender is indifferent among all actions and always selects the information structure that minimizes the receiver's utility. In response, the robustly optimal decision rule is to commit to the quota rule that is optimal for the least informative information structure $\underline{\info} \in \availableInfo$, that is, the quota rule $\quota^*_{\underline{\info}}$, where for any action $a$,
\begin{gather*}
\quota^*_{\underline{\info}} (a) =  \int_{\allPosterior}
{ {\bf 1}_{a=a^*(\posterior)} } \dd \underline{\info}(\posterior).
\end{gather*}
To show that this quota rule is robustly optimal for the receiver, first note that given this quota rule, the utility of the receiver under any information structure $\info\in \availableInfo$ satisfies 
\begin{align*}
U(\quota^*_{\underline{\info}},\info)\geq U(\quota^*_{\underline{\info}},\underline{\info}),
\end{align*}
since any information structure $\info\in \availableInfo$ is Blackwell more informative than $\underline{\info}$. 
Therefore, the $\gamma$-generalized regret of quota rule $\quota^*_{\underline{\info}}$ is at most 
\begin{align*}
\gamma \firstBest(\availableInfo) - (1-\gamma)U(\quota^*_{\underline{\info}},\underline{\info}),
\end{align*}
where we recall that $\firstBest(\availableInfo)$ is the second-best payoff for the receiver. Here we can use the second-best payoff as the benchmark because, if the receiver knows $\Pi$ and the sender's state-independent utility $v$,  the receiver can always induce the second-best outcome by using the ``shoot-the-agent'' mechanism, despite the sender adopting an adversarial tie-breaking rule.

Moreover, for any other decision rule $\alpha$, and any set of information structures $\availableInfo$,
let $\hat{\info}$ be the information structure that attains the second-best payoff for the receiver. Consider the $\gamma$-generalized regret of the decision rule $\alpha$ given the set of available information structures $\{\hat{\info},\underline{\info}\}$. If $\alpha$ assigns the same quota to $\hat{\info}$ and $\underline{\info}$, then the least favorable tie-breaking rule implies that the sender selects $\underline{\info}$ since she is indifferent. Otherwise, there exists a state-independent utility of the sender such that she chooses $\underline{\info}$. Therefore,
the highest $\gamma$-generalized regret of the decision rule $\alpha$ given the set of available information structures $\{\hat{\info},\underline{\info}\}$ is at least 
\begin{align*}
\gamma \firstBest(\availableInfo) - (1-\gamma)U(\alpha,\underline{\info})
\geq \gamma \firstBest(\availableInfo) - (1-\gamma)U(\quota^*_{\underline{\info}},\underline{\info}).
\end{align*}
The inequality holds since $\quota^*_{\underline{\info}}$ is the optimal decision rule for $\underline{\info}$. 
Therefore, quota rules remain robustly optimal for the least favorable tie-breaking rule.

\section{Arbitrary State-Dependent Utilities}
\label{apx:state dependent}
In this section, we consider the case where the sender's utility can be any arbitrary state-dependent utility function. Formally, the set of possible utilities is given by
\begin{align*}
    V = \{  v|  v: A\times \states \to \mathbb{R}   \}.  
\end{align*}
We can still show that a quota rule remains optimal. However, the optimal quota is to adhere to the optimal quota rule for the least informative information structure $\underline{\info} \in \availableInfo$, that is, the quota rule $\quota^*_{\underline{\info}}$.

\begin{theorem}[Optimality of Quota Rules with State-Dependent Utilities]
\label{thm:state dependent}
\,\\
For any $\gamma\in[0,1)$, when the sender can have arbitrary state-dependent preferences, the quota rule $\quota^*_{\underline{\info}}$ is optimal for the receiver. 
\end{theorem}
\begin{proof}
Let $\bar{\info}$ be the fully revealing information structure and let
\begin{align*}
\bar{R}_{\gamma}\triangleq \gamma\cdot \opt(\bar{\info}) - (1-\gamma)\cdot \opt(\underline{\info}).
\end{align*}
First, we argue that by using the quota rule $\quota^*_{\underline{\info}}$, the receiver can guarantee the regret of $\bar{R}_{\gamma}$. This is because under this quota rule, the utility of the receiver under any information structure $\info\in \availableInfo$ satisfies 
\begin{align*}
U(\quota^*_{\underline{\info}},\info)\geq U(\quota^*_{\underline{\info}},\underline{\info}),
\end{align*}
since any information structure $\info\in \availableInfo$ is Blackwell more informative than $\underline{\info}$. 
Therefore, the $\gamma$-generalized regret of quota rule $\quota^*_{\underline{\info}}$ is at most 
\begin{align*}
\gamma\cdot \opt(\bar{\info}) - (1-\gamma)\cdot \opt(\underline{\info}).
\end{align*}

Next, we argue that the worst regret under any decision rule  $\alpha(\pi,\mu)$ must be weakly higher than $\bar{R}_{\gamma}$. Denote $\bar{\beta}^*$ as the joint distribution over the state and the action, which is induced by the ``second-best'' action under $\bar{\pi}$ (which is the first-best action as $\bar{\pi}$ is fully revealing). Denote $\beta_S$ (or $\bar{\beta}$) as the joint distribution over the state and the action, which is induced by the decision rule $\alpha$ under $\underline{\info}$ (or $\bar{\pi}$). 
If $\bar\beta=\beta_S$, the decision rule $\alpha$ induces the same joint distribution under $\pi$ and $\bar\pi$. Hence, when $\Pi= \{ \underline{\info}, \bar{\pi}\}$, regardless of the sender's choice between $\pi$ and $\bar\pi$, the receiver obtains $\mathbb E_{\beta_S}[u(a,\omega)]$. 
Since $\beta_S$ is implementable under the least informative structure $\pi$, we have
$\mathbb E_{\beta_S}[u(a,\omega)]\le u^*(\pi)$.
Therefore, the generalized regret is at least
$\gamma u^*(\bar\pi)-(1-\gamma)u^*(\pi)=\bar R_\gamma$,
as desired. Thus, in the remainder of the proof, we may assume $\bar\beta\ne\beta_S$.

Next, we discuss two separate cases.

\paragraph{Case 1.}
If $\bar{\beta}^*-\beta_S \neq  \eta (\bar{\beta} -\beta_S)$ for any $\eta \in \mathbb{R}$, there exists a utility function $v(a,\state)$ such that
\begin{align*}
    \expect[\beta_S]{v(a, \state )} >   \expect[\bar{\beta}]{v(a, \state )}, \qquad   \expect[\beta_S]{v(a, \state )} <    \expect[\bar{\beta}^*]{v(a, \state )}.
\end{align*}
Consider the scenario where $\Pi= \{ \underline{\info}, \bar{\pi}  \}$ and the sender's utility $v(\action,\state)$ is just as described. In the complete information benchmark, by committing to the joint distribution (over action and state) of $\bar{\beta}^*$ under $\bar{\pi}$ and the joint distribution of $\beta_S$ under $\underline{\info}$, the receiver can induce the sender to choose $\bar{\pi}$ and implement the second-best outcome $\bar{\beta}^*$. However, according to the decision rule $\alpha$, the sender prefers to choose $\underline{\info}$.
From the definition of $\quota^*_{\underline{\info}}$, we know that the receiver's expected payoff if the sender chooses $\underline{\info}$ would be
\begin{align*}
  \expect[\beta_S]{u(a, \state )}  \leq   U(\quota^*_{\underline{\info}},\underline{\info}).
\end{align*}
Thus, the regret in this particular scenario must be weakly higher than 
\begin{align*}
\gamma\cdot \opt(\bar{\info}) - (1-\gamma)\cdot \opt(\underline{\info}).
\end{align*}

\paragraph{Case 2.} If $\bar{\beta}^*-\beta_S = \eta ( \bar{\beta} -\beta_S) $ for some $\eta$, consider a constant decision rule where the receiver commits to take one action regardless of the state and denote the joint distribution as $\beta_c$. Because there are at least two actions and $\bar{\beta}^*$ is the best action, one can always find $\beta_c$ such that  $\beta_c$ is not in the linear span of $\{ \bar{\beta}^*, \beta_S \}$. Therefore, 
$\bar{\beta}^*-\beta_c \neq  \eta' (\bar{\beta} -\beta_S)$ for any $\eta' \in \mathbb{R}$. Therefore, one can find a utility function $v$ such that
\begin{align*}
    \expect[\beta_S]{v(a, \state )} >   \expect[\bar{\beta}]{v(a, \state )}, \qquad   \expect[\beta_c]{v(a, \state )} <    \expect[\bar{\beta}^*]{v(a, \state )}.
\end{align*}
Consider the scenario where $\Pi= \{ \underline{\info}, \bar{\pi}  \}$ and the sender's utility $v(\action,\state)$ is just as described. In the complete information benchmark, by committing to the joint distribution (over action and state) of $\bar{\beta}^*$ under $\bar{\pi}$ and the joint distribution of $\beta_c$ under $\underline{\info}$, the receiver can induce the sender to choose $\bar{\pi}$ and implement the second-best outcome $\bar{\beta}^*$. However, according to the decision rule $\alpha$, the sender prefers to choose $\underline{\info}$. Following the analysis in Case 1, we know that the regret must be higher than 
\begin{align*}
\gamma\cdot \opt(\bar{\info}) - (1-\gamma)\cdot \opt(\underline{\info}).
\end{align*}
Combining the arguments of both cases completes the proof.
\end{proof}

\end{document}